%
%
%
\documentclass[12pt,reqno]{amsart}
\usepackage{graphicx, amsmath, amssymb}
\numberwithin{equation}{section}

\newtheorem{Thm}{Theorem}[section]

\newtheorem{Prop}[Thm]{Proposition}
\newtheorem{Cor}[Thm]{Corollary}
\newtheorem{Lem}[Thm]{Lemma}

\def\bB {\mathbf{B}}
\def\bbb{\mathbf{b}}

\def\bE {\mathbf{E}}
\def\bK {\mathbf{K}}
\def\bN {\mathbf{N}}
\def\bQ {\mathbf{Q}}
\def\bR {\mathbf{R}}
\def\bS {\mathbf{S}}
\def\bT {\mathbf{T}}
\def\bZ {\mathbf{Z}}

\def\cA {\mathcal{A}}
\def\cB {\mathcal{B}}
\def\cC {\mathcal{C}}

\def\cF {\mathcal{F}}

\def\cK {\mathcal{K}}
\def\cL {\mathcal{L}}

\def\cQ {\mathcal{Q}}
\def\cR {\mathcal{R}}

\def\cZ {\mathcal{Z}}

\def\a {{\alpha}}
\def\b {{\beta}}
\def\g {{\gamma}}
\def\de {{\delta}}
\def\eps {{\varepsilon}}
\def\th {{\theta}}

\def\l {{\lambda}}
\def\si {{\sigma}}
\def\Si {{\Sigma}}
\def\om {{\omega}}
\def\Om {{\Omega}}

\def\rstr {{\big |}}
\def\indc {{\bf 1}}

\def\la {\langle}
\def\ra {\rangle}

\def\d {{\partial}}
\def\grad {{\nabla}}
\def\Dlt {{\Delta}}

\newcommand{\be}{\begin{equation}}
\newcommand{\ee}{\end{equation}}
\newcommand{\bmat}{\begin{matrix}}
\newcommand{\emat}{\end{matrix}}
\newcommand{\ba}{\begin{aligned}}
\newcommand{\ea}{\end{aligned}}


\begin{document}

\title[Periodic Lorentz Gas]
{Recent Results\\ on the Periodic Lorentz Gas}

\author[F. Golse]{Fran\c cois Golse}
\address[F. G.]
{Ecole polytechnique\\
Centre de Math\'ematiques L. Schwartz\\
F91128 Palaiseau Cedex}
\email{golse@math.polytechnique.fr}

\keywords{Periodic Lorentz gas, Boltzmann-Grad limit, Linear Boltzmann equation, Mean free path, 
Distribution of free path lengths, Continued fractions, Farey fractions}

\subjclass[2000]{82C70, 35B27 (82C40, 11A55, 11B57, 11K50)}

\begin{abstract}
The Drude-Lorentz model for the motion of electrons in a solid is a classical model in statistical 
mechanics, where electrons are represented as point particles bouncing on a fixed system of
obstacles (the atoms in the solid). Under some appropriate scaling assumption --- known as the
Boltzmann-Grad scaling by analogy with the kinetic theory of rarefied gases --- this system can
be described in some limit by a linear Boltzmann equation, assuming that the configuration of 
obstacles is random [G. Gallavotti, [Phys. Rev. (2) {\bf 185} (1969), 308]). The case of a periodic
configuration of obstacles (like atoms in a crystal) leads to a completely different limiting dynamics.
These lecture notes review several results on this problem obtained in the past decade as joint
work with J. Bourgain, E. Caglioti and B. Wennberg.
\end{abstract}

\maketitle

\tableofcontents


\section*{Introduction: from particle dynamics to kinetic models}


The kinetic theory of gases was proposed by J. Clerk Maxwell \cite{Maxwell1852, Maxwell1866}
and L. Boltzmann \cite{Boltzmann1872} in the second half of the XIXth century. Because the 
existence of atoms, on which kinetic theory rested, remained controversial for some time, it was
not until many years later, in the XXth century, that the tools of kinetic theory became of common
use in various branches of physics such as neutron transport, radiative transfer, plasma and 
semiconductor physics... 

Besides, the arguments which Maxwell and Boltzmann used in writing what is now known as the
``Boltzmann collision integral" were far from rigorous --- at least from the mathematical viewpoint. 
As a matter of fact, the Boltzmann equation itself was studied by some of the most distinguished 
mathematicians of the XXth century --- such as Hilbert and Carleman --- before there were any 
serious attempt at deriving this equation from first principles (i.e. molecular dynamics.) Whether 
the Boltzmann equation itself was viewed as a fundamental equation of gas dynamics, or as 
some approximate equation valid in some well identified limit is not very clear in the first works 
on the subject --- including Maxwell's and Boltzmann's.

It seems that the first systematic discussion of the validity of the Boltzmann equation viewed as 
some limit of molecular dynamics --- i.e. the free motion of a large number of small balls subject to 
binary, short range interaction, for instance elastic collisions --- goes back to the work of H. Grad
\cite{Grad1958}.
In 1975, O.E. Lanford gave the first rigorous derivation \cite{Lanford1975} of the Boltzmann equation 
from molecular dynamics --- his result proved the validity of the Boltzmann equation for a very short 
time of the order of a fraction of the reciprocal collision frequency. (One should also mention an 
earlier, ``formal derivation" by C. Cercignani \cite{Cercignani1972} of the Boltzmann equation for 
a hard sphere gas, which considerably clarified the mathematical formulation of the problem.)
Shortly  after Lanford's derivation of the Boltzmann equation, R. Illner and M. Pulvirenti managed
to extend the validity of his result for all positive times, for initial data corresponding with a very 
rarefied cloud of gas molecules \cite{IllnerPulvirenti1986}.

An important assumption made in Boltzmann's attempt at justifying the equation bearing his name 
is the ``Stosszahlansatz", to the effect that particle pairs just about to collide are uncorrelated. 
Lanford's argument indirectly established the validity of Boltzmann's assumption, at least on very 
short time intervals.

\smallskip
In applications of kinetic theory other than rarefied gas dynamics, one may face the situation where 
the analogue of the Boltzmann equation for monatomic gases is linear, instead of quadratic. The 
linear Boltzmann equation is encountered for instance in neutron transport, or in some models in 
radiative transfer. It usually describes a situation where particles interact with some background 
medium --- such as neutrons with the atoms of some fissile material, or photons subject to scattering 
processes (Rayleigh or Thomson scattering) in a gas or a plasma.

In some situations leading to a linear Boltzmann equation, one has to think of two families of particles: 
the moving particles whose phase space density satisfies the linear Boltzmann equation, and the 
background medium that can be viewed as a family of fixed particles of a different type. For instance, 
one can think of the moving particles as being light particles, whereas the fixed particles can be 
viewed as infinitely heavier, and therefore unaffected by elastic collisions with the light particles. 
Before Lanford's fundamental paper, an important --- unfortunately unpublished --- preprint by 
G. Gallavotti \cite{Gallavotti1972} provided  a rigorous derivation of the linear Boltzmann equation 
assuming that the background medium consists of fixed, independent like hard spheres whose 
centers are distributed in the Euclidian space under Poisson's law. Gallavotti's argument already 
possessed some of the most remarkable features in Lanford's proof, and therefore must be regarded 
as an essential step in the understanding of kinetic theory.

However, Boltzmann's Stosszahlansatz becomes questionable in this kind of situation involving light 
and heavy particles, as potential correlations among heavy particles may influence the light particle 
dynamics. Gallavotti's assumption of a background medium consisting of independent hard spheres 
excluded this this possibility. Yet, strongly correlated background media are equally natural,
and should also be considered.

The periodic Lorentz gas discussed in these notes is one example of this type of situation. Assuming 
that heavy particles are located at the vertices of some lattice in the Euclidian space clearly introduces 
about the maximum amount of correlation between these heavy particles. This periodicity assumption 
entails a dramatic change in the structure of the equation that one obtains under the same scaling limit 
that would otherwise lead to a linear Boltzmann equation.

Therefore, studying the periodic Lorentz gas can be viewed as one way of testing the limits of the classical 
concepts of the kinetic theory of gases.

\bigskip
\noindent
\textbf{Acknowledgements.} 

Most of the material presented in these lectures is the result of collaboration with several authors: 
J. Bourgain, E. Caglioti, H.S. Dumas, L. Dumas and B. Wennberg, whom I wish to thank for sharing
my interest for this problem. I am also grateful to C. Boldighrini and G. Gallavotti for illuminating
discussions on this subject.


\section{The Lorentz kinetic theory for electrons}


In the early 1900's, P. Drude \cite{Drude1900} and H. Lorentz \cite{Lorentz1905} independently 
proposed to describe the motion of electrons in metals by the methods of kinetic theory. One 
should keep in mind that the kinetic theory of gases was by then a relatively new subject: the 
Boltzmann equation for monatomic gases  appeared for the first time in the papers of J. Clerk 
Maxwell \cite{Maxwell1866} and L. Boltzmann \cite{Boltzmann1872}. Likewise, the existence of 
electrons had been established shortly before, in 1897 by J.J. Thomson.

\begin{figure}

\includegraphics[width=6.0cm]{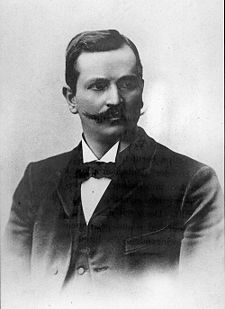}\includegraphics[width=5.8cm]{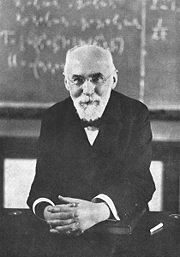}

\caption{Left: Paul Drude (1863-1906); right: Hendrik Antoon Lorentz (1853-1928)}

\end{figure}

The basic assumptions made by H. Lorentz in his paper \cite{Lorentz1905} can be summarized as 
follows.

First, the population of electrons is thought of as a gas of point particles described by its phase-space 
density $f\equiv f(t,x,v)$, that is the density of electrons at the position $x$ with velocity $v$ at time $t$.

Electron-electron collisions are neglected in the physical regime considered in the Lorentz kinetic 
model --- on the contrary, in the classical kinetic theory of gases, collisions between molecules are 
important as they account for momentum and heat transfer.

However, the Lorentz kinetic theory takes into account collisions between electrons and the 
surrounding metallic atoms. These collisions are viewed as simple, elastic hard sphere collisions.

Since electron-electron collisions are neglected in the Lorentz model, the equation governing the 
electron phase-space density $f$ is linear. This is at variance with the classical Boltzmann equation, 
which is quadratic because only binary collisions involving pairs of molecules are considered in the 
kinetic theory of gases.

With the simple assumptions above, H.  Lorentz arrived at the following equation for the phase-space 
density of electrons $f\equiv f(t,x,v)$:
$$
(\d_t+v\cdot\grad_x+\tfrac1m F(t,x)\cdot\grad_v)f(t,x,v)
	=N_{at}r_{at}^2|v|\cC(f)(t,x,v)\,.
$$

In this equation, $\cC$ is the Lorentz collision integral, which acts on the only variable $v$ in the 
phase-space density $f$. In other words, for each continuous function $\phi\equiv \phi(v)$, one 
has
$$
\cC(\phi)(v)=\int_{|\om|=1\atop\om\cdot v>0}
	\bigl(\phi(v-2(v\cdot\om)\om)-\phi(v)\bigr)\cos(v,\om)d\om\,,
$$
and the notation
$$
\cC(f)(t,x,v)\hbox{ designates }\cC(f(t,x,\cdot))(v)\,.
$$
The other parameters involved in the Lorentz equation are the mass $m$ of the electron, and $N_{at}$, 
$r_{at}$ respectively the density and radius of metallic atoms. The vector field $F\equiv F(t,x)$ is the 
electric force. In the Lorentz model, the self-consistent electric force --- i.e. the electric force created by 
the electrons themselves --- is neglected, so that $F$ take into account the only effect of an applied 
electric field (if any). Roughly speaking, the self consistent electric field is linear  in $f$, so that its 
contribution to the term $F\cdot\grad_vf$ would be quadratic in $f$, as would be any collision integral 
accounting for electron-electron collisions. Therefore, neglecting electron-electron collisions and the
self-consistent electric field are both in accordance with assuming that $f\ll 1$.

\smallskip
The line of reasoning used by H. Lorentz to arrive at the kinetic equations above is based on the 
postulate that the motion of electrons in a metal can be adequately represented by a simple mechanical 
model --- a collisionless gas of point particles bouncing on a system of fixed, large spherical obstacles 
that represent the metallic atoms. Even with the considerable simplification in this model, the argument 
sketched in the article \cite{Lorentz1905} is little more than a formal analogy with Boltzmann's derivation 
of the equation now bearing his name. 

This suggests the mathematical problem, of deriving the Lorentz kinetic equation from a microscopic, 
purely mechanical particle model. Thus, we consider a gas of point particles (the electrons) moving 
in a system of fixed spherical obstacles (the metallic atoms). We assume that collisions between the 
electrons and the metallic atoms are perfectly elastic, so that, upon colliding with an obstacle, each 
point particle is specularly reflected on the surface of that obstacle.

Undoubtedly, the most interesting part of the Lorentz kinetic equation is the collision integral which
does not seem to involve $F$. Therefore we henceforth assume for the sake of simplicity that there is 
no applied electric field, so that 
$$
F(t,x)\equiv 0\,.
$$

In that case, electrons are not accelerated between successive collisions with the metallic atoms, so 
that the microscopic model to be considered is a simple, dispersing billiard system --- also called a 
Sinai billiard. In that model, electrons are point particles moving at a constant speed along rectilinear 
trajectories in a system of fixed spherical obstacles, and specularly reflected at the surface of the 
obstacles.

\begin{figure}

\begin{center}
\includegraphics[width=6.0cm]{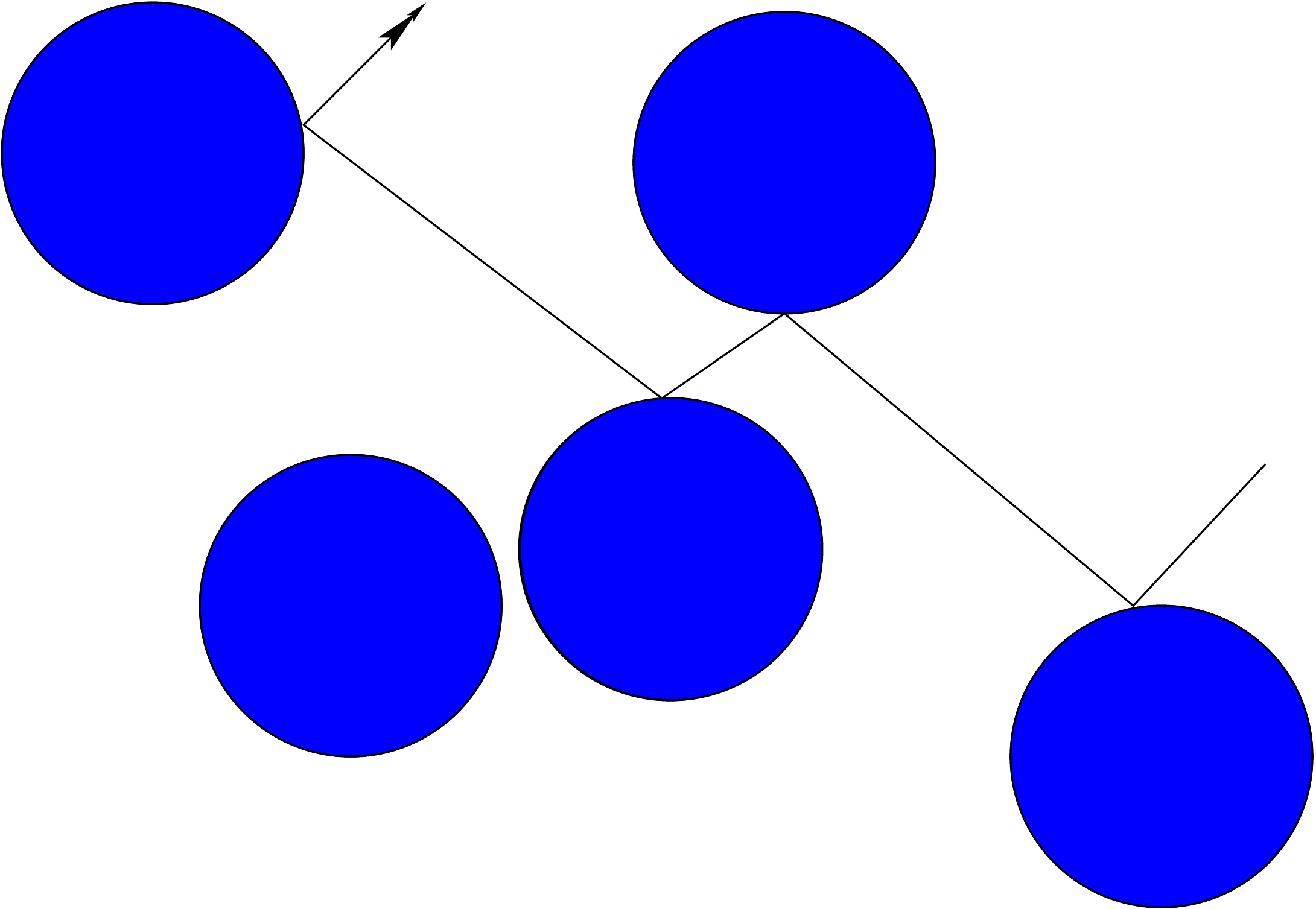}
\end{center}

\caption{The Lorentz gas: a particle path}

\end{figure}

More than 100 years have elapsed since this simple mechanical model was proposed by P. Drude 
and H. Lorentz, and today we know that the motion of electrons in a metal is a much more complicated 
physical phenomenon whose description involves quantum effects. 

Yet the Lorentz gas is an important object of study in nonequilibrium satistical mechanics, and there 
is a very significant amount of literature on that topic --- see for instance \cite{SzaszEncyclo} and the 
references therein. 

The first rigorous derivation of the Lorentz kinetic equation is due to G. Gallavotti \cite{Gallavotti1969,
Gallavotti1972}, who derived it from from a billiard system consisting of randomly (Poisson) distributed 
obstacles, possibly overlapping, considered in some scaling limit --- the Boltzmann-Grad limit, whose 
definition will be given (and discussed) below. Slightly more general, random distributions of obstacles 
were later considered by H. Spohn in \cite{Spohn1978}. 

While Gallavotti's theorem bears on the convergence of the mean electron density (averaging over 
obstacle configurations), C. Boldrighini, L. Bunimovich and Ya. Sinai \cite{BoldriBuniSinai1983} later 
succeeded in proving the almost sure convergence (i.e. for a.e. obstacle configuration) of the electron 
density to the solution of the Lorentz kinetic equation.

In any case, none of the results above says anything on the case of a periodic distribution of obstacles. 
As we shall see, the periodic case is of a completely different nature --- and leads to a very different 
limiting equation, involving  a phase-space different from the one considered by H. Lorentz --- i.e.
$\bR^2\times\bS^1$ --- on which the Lorentz kinetic equation is posed. 

The periodic Lorentz gas is at the origin of many challenging mathematical problems. 
For instance, in the late 1970s, L. Bunimovich and Ya. Sinai studied the periodic Lorentz gas in a 
scaling limit different from the Boltzmann-Grad limit studied in the present paper. In \cite{BuniSinai1980}, 
they showed that the classical Brownian motion is the limiting dynamics of the Lorentz gas under 
that scaling assumption --- their work was later extended with N. Chernov: see \cite{BunSinaiChern1991}. 
This result is indeed a major achievement in nonequilibrium statistical mechanics, as it provides
an example of an irreversible dynamics (the heat equation associated with the classical Brownian motion) 
that is derived from a reversible one (the Lorentz gas dynamics).


\section{The Lorentz gas in the Boltzmann-Grad limit\\ with a Poisson distribution of obstacles}


Before discussing the Boltzmann-Grad limit of the periodic Lorentz gas, we first give a brief description 
of Gallavotti's result \cite{Gallavotti1969, Gallavotti1972} for the case of a Poisson distribution of 
independent, and therefore possibly overlapping obstacles. As we shall see, Gallavotti's argument is 
in some sense fairly elementary, and yet brilliant. 

First we define the notion of a Poisson distribution of obstacles. Henceforth, for the sake of simplicity,
we assume a $2$-dimensional setting.

The obstacles (metallic atoms) are disks of radius $r$ in the Euclidian plane $\bR^2$, centered at 
$c_1,c_2,\ldots,c_j,\ldots\in\bR^2$. Henceforth, we denote by
$$
\{c\}=\{c_1,c_2,\ldots,c_j,\ldots\}=\hbox{ a configuration of obstacle centers.}
$$

We further assume that the configurations of obstacle centers $\{c\}$ are distributed under Poisson's 
law with parameter $n$, meaning that
$$
\hbox{Prob}(\{\{c\}\,|\,\#(A\cap\{c\})=p\})=e^{-n|A|}\frac{(n|A|)^p}{p!}\,,
$$
where $|A|$ denotes the surface, i.e. the $2$-dimensional Lebesgue measure of a measurable subset 
$A$ of the Euclidian plane $\bR^2$.

This prescription defines a probability on countable subsets of the Euclidian plane $\bR^2$.

Obstacles may overlap: in other words, configurations $\{c\} $ such that
$$
\hbox{for some $j\not=k\in\{1,2,\ldots\}$, one has }|c_i-c_j|<2r 
$$
are not excluded. Indeed, excluding overlapping obstacles means rejecting obstacles configurations
$\{c\}$ such that $|c_i-c_j|\le 2r$ for some $i,j\in\bN$. In other words, $\hbox{Prob}(d\{c\})$ is replaced
with
$$
\frac1Z\prod_{i>j\ge 0}\indc_{|c_i-c_j|>2r}\hbox{Prob}(d\{c\})\,,
$$
(where $Z>0$ is a normalizing coefficient.) Since the term
$$
\prod_{i>j\ge 0}\indc_{|c_i-c_j|>2r}\hbox{ is not of the form }\prod_{k\ge 0}\phi_k(c_k)\,,
$$
the obstacles are no longer independent under this new probability measure.

Next we define the billiard flow in a given obstacle configuration $\{c\}$. This definition is self-evident, 
and we give it for the sake of completeness, as well as in order to introduce the notation.

Given a countable subset $\{c\}$ of the Euclidian plane $\bR^2$, the billiard flow in the system of obstacles 
defined by $\{c\}$ is the family of mappings
$$
(X(t;\cdot,\cdot,\{c\}),V(t;\cdot,\cdot,\{c\})):
	\,\left(\bR^2\setminus\bigcup_{j\ge 1}B(c_j,r)\right)\times\bS^1\circlearrowright
$$
defined by the following prescription. 

Whenever the position $X$ of a particle lies outside the surface of any obstacle, that particle moves at 
unit speed along a rectilinear path: 
$$
\begin{aligned}
\dot{X}(t;x,v,\{c\} )&=V(t;x,v,\{c\} )\,,
\\
\dot{V}(t;x,v,\{c\} )&=0\,,\qquad\hbox{ whenever }|X(t;x,v,\{c\} )-c_i|>r\hbox{ for all }i\,,
\end{aligned}
$$
and, in case of a collision with the $i$-th obstacle, is specularly reflected on the surface of that obstacle 
at the point of impingement, meaning that
$$
\begin{aligned}
X(t+0;x,v,\{c\} )&=X(t-0;x,v,\{c\})\in\d B(c_i,r)\,,
\\
V(t+0;x,v,\{c\} )&=\cR\left[\frac{X(t;x,v,\{c\})-c_i}{r}\right]V(t-0;x,v,\{c\} )\,,
\end{aligned}
$$
where $\cR[\om]$ denotes the reflection with respect to the line $(\bR\om)^\bot$:
$$
\cR[\om]v=v-2(\om\cdot v)\om\,,\quad|\om|=1\,.
$$

Then, given an initial probability density $f_{\{c\}}^{in}\equiv f_{\{c\}}^{in}(x,v)$ on the single-particle 
phase-space with support outside the system of obstacles defined by $\{c\}$, we define its evolution 
under the billiard flow by the formula
$$
f(t,x,v,\{c\} )=f^{in}_{\{c\}}(X(-t;x,v,\{c\} ),V(-t;x,v,\{c\}))\,,\quad t\ge 0\,.
$$

Let $\tau_1(x,v,\{c\} ),\tau_2(x,v,\{c\} ),\ldots,\tau_j(x,v,\{c\} ),\ldots$ be the  sequence of collision times 
for a particle starting from $x$ in the direction $-v$ at $t=0$ in the configuration of obstacles $\{c\}$: in 
other words, 
$$
\begin{aligned}
{}&\tau_j(x,v,\{c\} )=
\\
&\quad\sup\{t\,|\,\#\{s\in[0,t]\,|\,\hbox{dist}(X(-s,x,v,\{c\} );\{c\} )=r\}=j-1\}\,.
\end{aligned}
$$

Denoting $\tau_0=0$ and $\Dlt\tau_k=\tau_k-\tau_{k-1}$, the evolved single-particle density $f$ is 
a.e. defined by the formula 
$$
\begin{aligned}
{}&f(t,x,v,\{c\} )=f^{in}(x-tv,v)\indc_{t<\tau_1}
\\
&+\!\sum_{j\ge 1}\!f^{in}\!\left(x\!-\!\!\!\sum_{k=1}^j\!\!
	\Dlt\tau_kV(-\tau_k^-)\!-\!(t-\tau_j)V(-\tau_j^+),
		V(-\tau_j^+)\right)\!\indc_{\tau_j<t<\tau_{j+1}}\,.
\end{aligned}
$$

In the case of physically admissible initial data, there should be no particle located inside an 
obstacle. Hence we assumed that $f^{in}_{\{c\}}=0$ in the union of all the disks of radius $r$ centered 
at the $c_j\in\{c\}$. By construction, this condition is obviously preserved by the billiard flow, so that 
$f(t,x,v,\{c\})$ also vanishes whenever $x$ belongs to a disk of radius $r$ centered at any $c_j\in\{c\}$.

As we shall see shortly, when dealing with bounded initial data, this constraint disappears in the (yet 
undefined) Boltzmann-Grad limit, as the volume fraction occupied by the obstacles vanishes in that 
limit. 

Therefore, we shall henceforth neglect this difficulty and proceed as if $f^{in}$ were any bounded 
probability density on $\bR^2\times\bS^1$.

Our goal is to average the summation above in the obstacle configuration $\{c\} $ under the Poisson
distribution, and to identify a scaling on the obstacle radius $r$ and the parameter $n$ of the Poisson 
distribution leading to a nontrivial limit.

The parameter $n$ has the following important physical interpretation. The expected number of 
obstacle centers to be found in any measurable subset $\Om$ of the Euclidian plane $\bR^2$ is
$$
\sum_{p\ge 0}p\hbox{Prob}(\{\{c\} \,|\,\#(\Om\cap\{c\} )=p\})
	=\sum_{p\ge 0}pe^{-n|\Om|}\frac{(n|\Om|)^p}{p!}=n|\Om|
$$
so that
$$
n=\#\hbox{ obstacles per unit surface in $\bR^2$.}
$$

The average of the first term in the summation defining $f(t,x,v,\{c\} )$ is
$$
f^{in}(x-tv,v)\la\indc_{t<\tau_1}\ra=f^{in}(x-tv,v)e^{-n2rt}
$$
(where $\la\,\cdot\,\ra$ denotes the mathematical expectation) since the condition $t<\tau_1$ means 
that the tube of width $2r$ and length $t$ contains no obstacle center.

\begin{figure}

\centerline{\includegraphics[width=7.0cm]{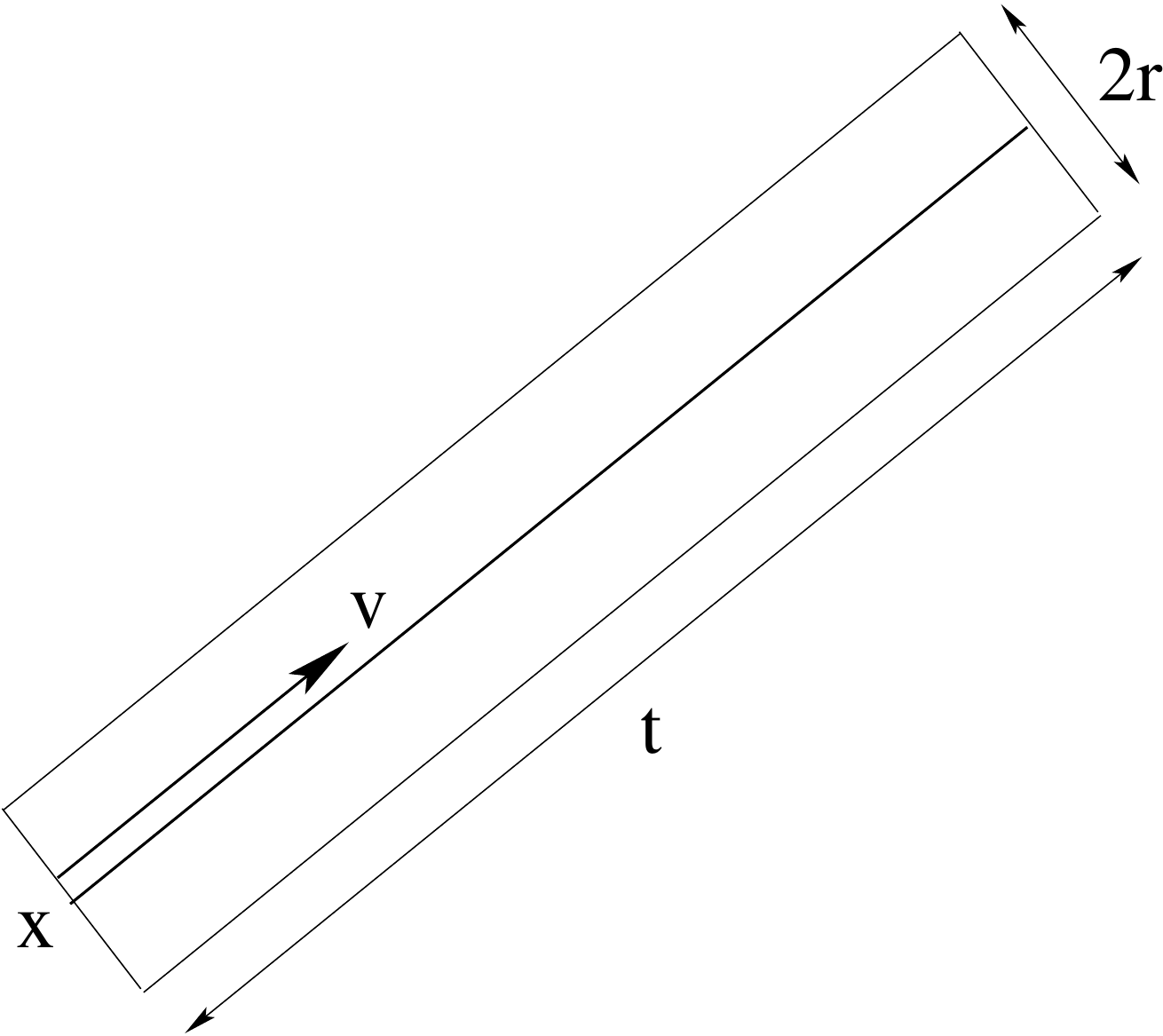}}

\caption{The tube corresponding with the first term in the series
expansion giving the particle density}

\end{figure}

Henceforth, we seek a scaling limit corresponding to small obstacles, i.e. $r\to 0$ and a large number 
of obstacles per unit volume, i.e. $n\to\infty$.

There are obviously many possible scalings satisfying this requirement. Among all these scalings, the 
Boltzmann-Grad scaling in space dimension $2$ is defined by the requirement that the average 
over obstacle configurations of the first term in the series expansion for the particle density $f$ has a 
nontrivial limit.


\bigskip
\noindent
\fbox{\sc Boltzmann-Grad scaling in space dimension 2} 

\smallskip
In order for the average of the first term above to have a nontrivial limit, one must have
$$
r\to 0^+\hbox{ and }n\to+\infty\hbox{ in such a way that }2nr\to\si>0\,.
$$

\bigskip
Under this assumption
$$
\la f^{in}(x-tv,v)\indc_{t<\tau_1}\ra\to f^{in}(x-tv,v)e^{-\si t}\,.
$$

Gallavotti's idea is that this first term corresponds with the solution at time $t$ of the equation
$$
\begin{aligned}
(\d_t+v\cdot\grad_x)f&=-nrf\int_{|\om|=1\atop\om\cdot v>0}\cos(v,\om)d\om=-2nrf
\\
f\rstr_{t=0}&=f^{in}
\end{aligned}
$$
that involves only the loss part in the Lorentz collision integral, and that the (average over obstacle 
configuration of the) subsequent terms in the sum defining the particle density $f$ should converge 
to the Duhamel formula for the Lorentz kinetic equation.

After this necessary preliminaries, we can state Gallavotti's theorem.

\begin{Thm}[Gallavotti \cite{Gallavotti1972}]
Let $f^{in}$ be a continuous, bounded probability density on $\bR^2\times\bS^1$, and let 
$$
f_r(t,x,v,\{c\} )=f^{in}((X^r,V^r)(-t,x,v,\{c\}))\,,
$$
where $(t,x,v)\mapsto (X^r,V^r)(t,x,v,\{c\} )$ is the billiard flow in the system of disks of radius $r$ 
centered at the elements of $\{c\}$. Assuming that the obstacle centers are distributed under the 
Poisson law of parameter $n=\si/2r$ with $\si>0$, the expected single particle density 
$$
\la f_r(t,x,v,\cdot)\ra\to f(t,x,v)\hbox{ in }L^1(\bR^2\times\bS^1)
$$
uniformly on compact $t$-sets, where $f$ is the solution of the Lorentz kinetic equation
$$
\begin{aligned}
(\d_t+v\cdot\grad_x)f+\si f
	&=\si\int_0^{2\pi} f(t,x,R[\b]v)\sin\tfrac{\b}2 \tfrac{d\b}4\,,
\\
f\rstr_{t=0}&=f^{in}\,,
\end{aligned}
$$
where $R[\b]$ denotes the rotation of an angle $\b$.
\end{Thm}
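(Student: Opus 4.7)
The plan is to expand $f_r$ as a series indexed by the number $j$ of collisions undergone by the backward trajectory during the time interval $[0,t]$, take the Poisson expectation term by term, and identify the limit with the Duhamel series for the Lorentz kinetic equation. Writing
$$
\la f_r(t,x,v,\cdot)\ra = \sum_{j\ge 0}T_j^r(t,x,v),
$$
where $T_j^r$ is the expectation of the $j$-th summand in the formula for $f(t,x,v,\{c\})$ displayed above, I would show that each $T_j^r$ converges to the $j$-th Duhamel iterate $T_j$ solving the Lorentz equation. The case $j=0$ has already been handled in the text: it yields $f^{in}(x-tv,v)e^{-\si t}$.

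For $j\ge 1$ the central step is a change of variables. A configuration $\{c\}$ producing exactly $j$ backward collisions is determined by the ordered $j$-tuple of scatterers actually hit, $(c_{i_1},\ldots,c_{i_j})$, all other centres being required to lie outside the tube of width $2r$ around the piecewise-linear backward trajectory. Replace these positions by the collision data $(\Dlt\tau_k,\om_k)_{k=1}^{j}$, where $\Dlt\tau_k>0$ is the time between the $(k{-}1)$-th and $k$-th backward collisions, and $\om_k\in\bS^1$ with $\om_k\cdot V_{k-1}>0$ is the outward unit normal at the $k$-th impact ($V_{k-1}$ being the velocity just prior). A direct computation gives a Jacobian of $\prod_{k=1}^{j}r\cos(V_{k-1},\om_k)$. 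By the independence property of the Poisson point process, conditional on these $j$ prescribed scatterers the probability that no other centre lies in the tube equals $\exp(-n|T_r|)$, where $|T_r|$ denotes the area of that tube. Assuming no self-intersection, the tube is a union of $j+1$ parallelograms of width $2r$ whose lengths sum to $t$, so $|T_r|=2rt+O(r^2)$; together with $n=\si/(2r)$ this gives $\exp(-n|T_r|)\to e^{-\si t}$. Simultaneously, at each collision the factor $nr\cos(V_{k-1},\om_k)\,d\om_k$ tends to $\tfrac{\si}{2}\cos(V_{k-1},\om_k)\,d\om_k$. Summing the resulting iterated integrals and applying the standard change of variable $\b=\pi-2\arccos(v\cdot\om)$ in the hard-sphere scattering map turns $\tfrac{\si}{2}\cos(v,\om)\,d\om$ into $\si\sin(\b/2)\,d\b/4$, recovering the gain term stated in the theorem.

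\textbf{Main obstacle.} The argument above tacitly ignores configurations for which the backward tube self-intersects --- either because the trajectory makes a recollision with a previously visited scatterer, or because a disc of radius $r$ centred at some $c_{i_k}$ intersects the tube past its own collision time, violating the independence computation. The core technical task is to show that such pathological configurations contribute negligibly in the Boltzmann--Grad limit; this is typically done by a geometric estimate bounding the Lebesgue measure of initial data $(x,v)$ whose backward tube self-intersects by $O(r)$ on a fixed time interval, the contribution to $\la f_r\ra$ from such data vanishing as $r\to 0$. A subsidiary but necessary step is a uniform-in-$r$ bound of the form $|T_j^r|\le (\si t)^j\|f^{in}\|_\infty/j!$ that justifies dominated convergence in the sum over $j$, and upgrades the termwise convergence to strong convergence in $L^1(\bR^2\times\bS^1)$ that is uniform on compact $t$-intervals, as claimed.
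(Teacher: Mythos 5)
Your decomposition, change of variables, tube-area computation, and dominated convergence in $j$ reproduce the paper's treatment of the ``Markovian'' part of the series essentially verbatim (the paper parametrizes collisions by deflection angles $\b_k$ with Jacobian $r^j\prod_k\sin\tfrac{\b_k}{2}\,\tfrac{d\b_k}{2}\,d\tau_k$, which is exactly your $\prod_k r\cos(V_{k-1},\om_k)\,d\om_k$ after the substitution you mention). Where you genuinely diverge is in the step you flag as the ``main obstacle.'' You propose to kill the recollision contribution by a direct geometric estimate --- bounding by $O(r)$ the measure of data $(x,v)$ whose backward tube self-intersects --- and you leave that estimate unproved; it is in fact the delicate part of Gallavotti's original argument. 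The paper avoids it entirely with a soft argument: writing $f_r=f^M_r+f^R_r$ with $f^R_r\ge 0$ supported on the recollision set $A_r(T,x,v)$, it first shows $\la f^M_r\ra\to f$ pointwise by the termwise limit you describe, then observes that both the billiard flow and the limiting Lorentz equation conserve total mass, so that
$$
\iint_{\bR^2\times\bS^1}\bigl(\la f^M_r\ra+\la f^R_r\ra\bigr)\,dx\,dv
=\iint_{\bR^2\times\bS^1}f^{in}\,dx\,dv
=\iint_{\bR^2\times\bS^1}f\,dx\,dv\,,
$$
and Fatou's lemma applied to $\la f^M_r\ra$ forces $\la f^R_r\ra\to 0$ in $L^1$ and upgrades the convergence of $\la f^M_r\ra$ to strong $L^1$ convergence. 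This buys exactly the two things your sketch still owes: the negligibility of self-intersecting configurations and the passage from pointwise to $L^1$ convergence, neither of which follows from your uniform bound $|T^r_j|\le(\si t)^j\|f^{in}\|_\infty/j!$ alone. If you want a self-contained argument, either carry out the geometric recollision estimate in detail or adopt the conservation-of-mass/Fatou device.
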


\begin{proof}[End of the proof of Gallavotti's theorem]
The general term in the summation giving $f(t,x,v,\{c\} )$ is
$$
f^{in}\!\left(x\!-\!\!\!\sum_{k=1}^j\!\!\Dlt\tau_kV^r(-\tau_k^-)\!-\!(t-\tau_j)V^r(-\tau_j^+),
	V^r(-\tau_j^+)\right)\!\indc_{\tau_j<t<\tau_{j+1}}\,,
$$
and its average under the Poisson distribution on $\{c\} $ is
$$
\begin{aligned}
\int 
f^{in}\!\left(x-\!\sum_{k=1}^j\Dlt\tau_kV^r(-\tau_k^-)-(t-\tau_j)V^r(-\tau_j^+),
	V^r(-\tau_j^-)\right)
\\
e^{-n|T(t;c_1,\ldots,c_j)|}\frac{n^jdc_1\ldots dc_j}{j!}\,,
\end{aligned}
$$
where $T(t;c_1,\ldots,c_j)$ is the tube of width $2r$ around the particle trajectory colliding first with 
the obstacle centered at $c_1$, \dots, and whose $j$-th collision is with the obstacle centered at 
$c_j$. 

As before, the surface of that tube is
$$
|T(t;c_1,\ldots,c_j)|=2rt+O(r^2)\,.
$$

\begin{figure}

\centerline{\includegraphics[width=8.0cm]{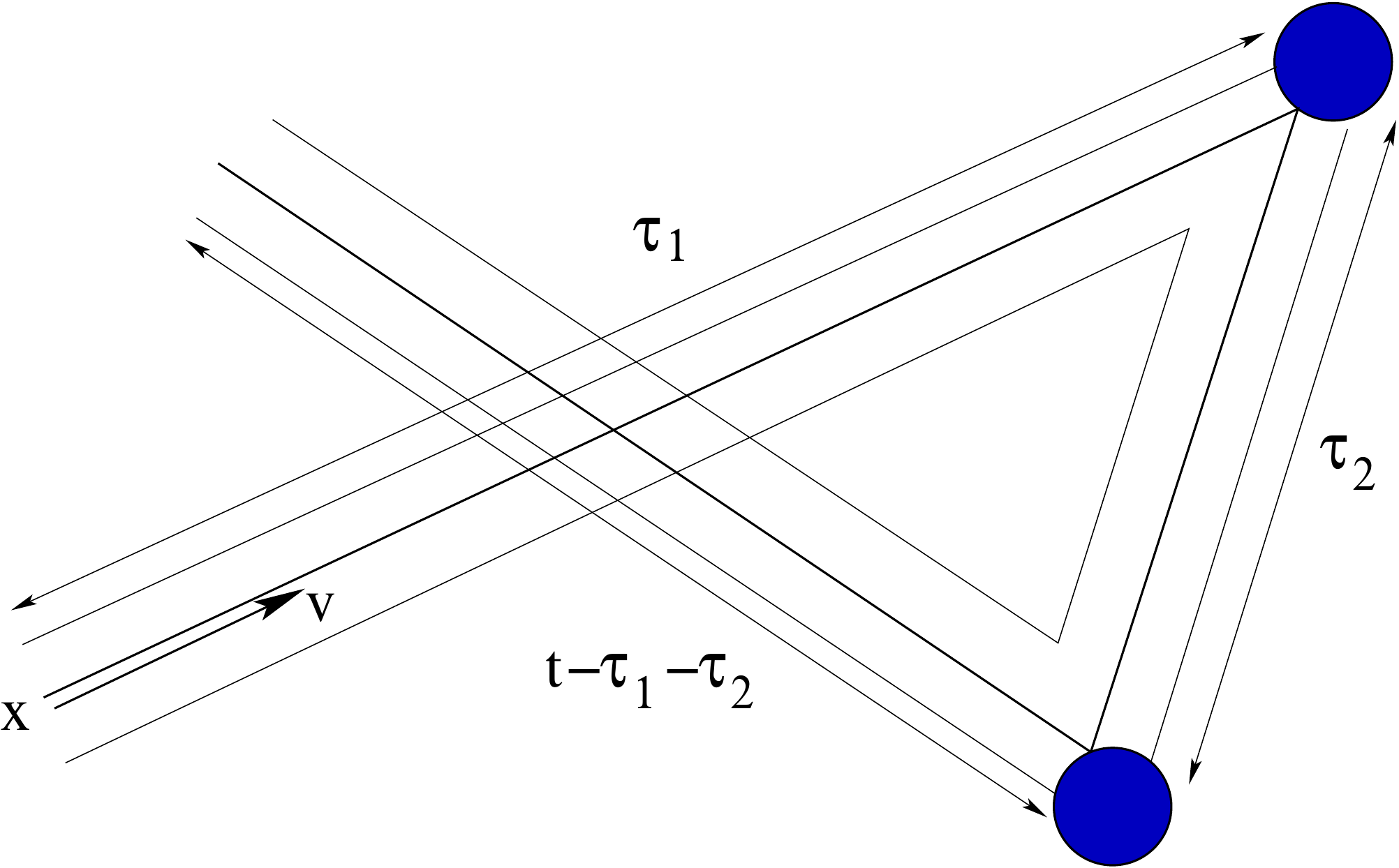}}

\caption{The tube $T(t,c_1,c_2)$ corresponding with the third term in the series
expansion giving the particle density}

\end{figure}

In the $j$-th term, change variables by expressing the positions of the $j$ encountered obstacles in 
terms of free flight times and deflection angles:  
$$
(c_1,\ldots,c_j)\mapsto(\tau_1,\ldots,\tau_j;\b_1,\ldots,\b_j)\,.
$$
The volume element in the $j$-th integral  is changed into
$$
\tfrac{dc_1\ldots dc_j}{j!}=r^j\sin\tfrac{\b_1}{2}\ldots\sin\tfrac{\b_j}{2}\,\,
	\tfrac{d\b_1}2\ldots\tfrac{d\b_j}2d\tau_1\ldots d\tau_j\,. 
$$
The measure in the left-hand side is invariant by permutations of $c_1,\ldots,c_j$; on the right-hand 
side, we assume that
$$
\tau_1<\tau_2<\ldots<\tau_j\,,
$$
which explains why $1/j!$ factor disappears in the right-hand side.

\begin{figure}

\centerline{\includegraphics[width=9.0cm]{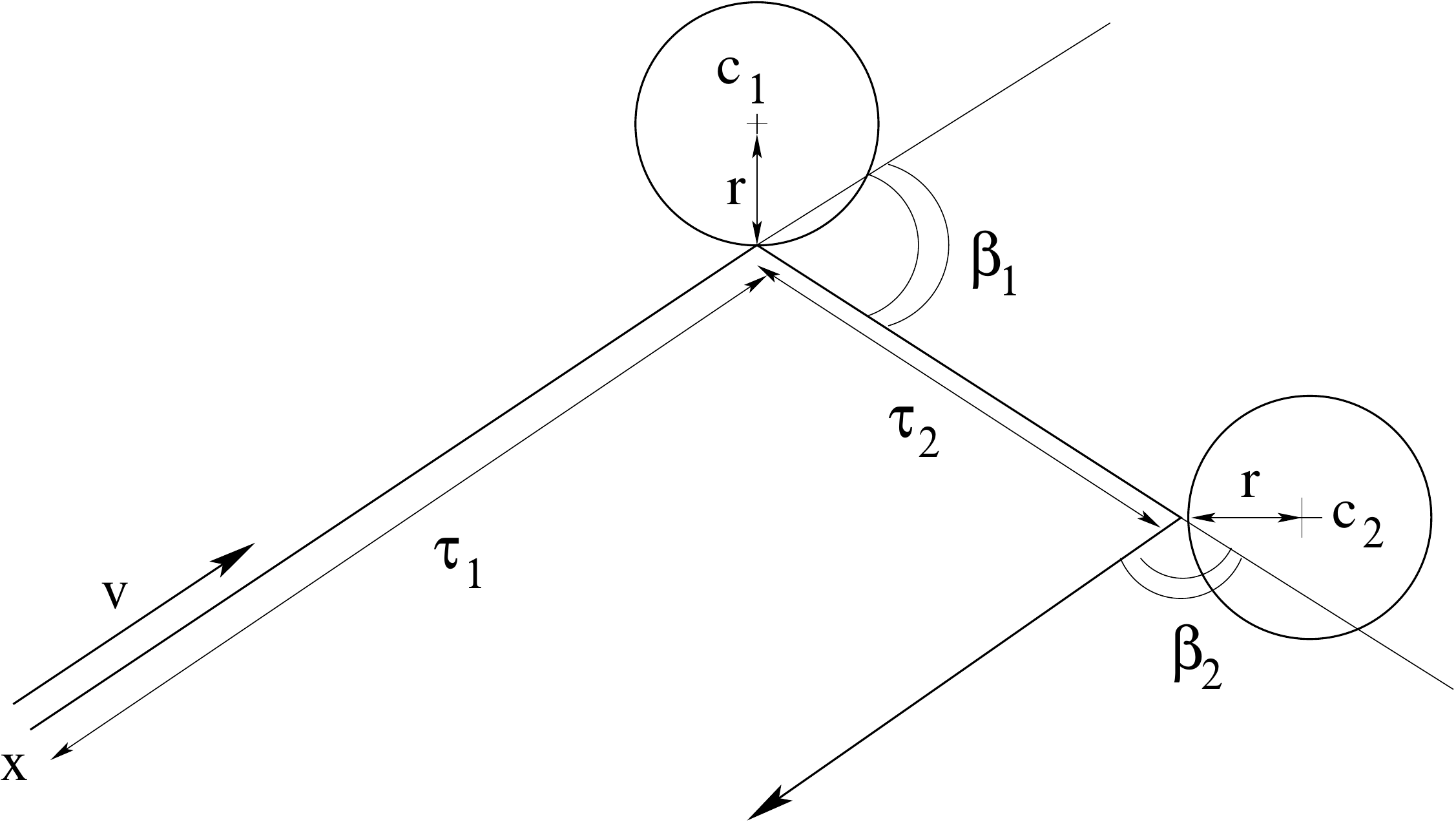}}

\caption{The substitution $(c_1,c_2)\mapsto(\tau_1,\tau_2,\b_1,\b_2)$}

\end{figure}

The substitution above is one-to-one only if the particle does not hit twice the same obstacle. Define 
therefore
$$
\begin{aligned}
{}&A_r(T,x,v)=\{\{c\}\,|\,\hbox{ there exists }0<t_1<t_2<T\hbox{ and }j\in\bN\hbox{ s.t. }
\\
&\qquad\qquad\qquad\hbox{dist}(X^r(t_1,x,v,\{c\}),c_j)=\hbox{dist}(X^r(t_2,x,v,\{c\}),c_j)=r\}
\\
&\qquad=\bigcup_{j\ge 1}\{\{c\}\,|\,
	\hbox{dist}(X^r(t,x,v,\{c\}),c_j)=r\hbox{ for some }0<t_1<t_2<T\}\,,
\end{aligned}
$$
and set
$$
\begin{aligned}
f^M_r(t,x,v,\{c\})&=f_r(t,x,v,\{c\})-f^R_r(t,x,v,\{c\})\,,
\\
f^R_r(t,x,v,\{c\})&=f_r(t,x,v,\{c\})\indc_{A_r(T,x,v)}(\{c\})\,,
\end{aligned}
$$
respectively the Markovian part and the recollision part in $f_r$.

After averaging over the obstacle configuration $\{c\} $, the contribution of the $j$-th term in $f^M_r$ is, 
to leading order in $r$:
$$
\begin{aligned}
(2nr)^je^{-2nrt}\int_{0<\tau_{1}<\ldots<\tau_j<t}\int_{[0,{2\pi}]^j}
\sin\tfrac{\b_1}{2}\ldots\sin\tfrac{\b_j}{2}\tfrac{d\b_1}4\ldots\tfrac{d\b_j}4d\tau_1\ldots d\tau_j 
\\
\times f^{in}\left(x\!-\!\!\!\sum_{k=1}^j\!\!\Dlt\tau_k
R\left[\sum_{l=1}^{k-1}\b_l\right]v\!-\!(t-\tau_j)R\left[\sum_{l=1}^{j-1}\b_l\right]v,
	R\left[\sum_{l=1}^{j}\b_l\right]v\right)\,.
\end{aligned}
$$
It is dominated by
$$
\|f^{in}\|_{L^\infty}O(\si)^je^{-O(\si)t}\frac{t^j}{j!}
$$
which is the general term of a converging series.

Passing to the limit as $n\to+\infty$, $r\to 0$ so that $2rn\to\si$, one finds (by dominated convergence 
in the series) that
$$
\begin{aligned}
\la f^M_r(t,x,v,\{c\} )\ra\to e^{-\si t}f^{in}(x-tv,v)
\\
+
\si e^{-\si t}\int_0^t\int_0^{2\pi}
f^{in}(x-\tau_1v-(t-\tau_1)R[\b_1]v,R[\b_1]v)\sin\tfrac{\b_1}{2}\tfrac{d\b_1}4 d\tau_1
\\
+\sum_{j\ge 2}\si^je^{-\si t}\int_{0<\tau_{j}<\ldots<\tau_1<t}\int_{[0,{2\pi}]^j}
	\sin\tfrac{\b_1}{2}\ldots\sin\tfrac{\b_j}{2}
\\
\times f^{in}\left(x\!-\!\!\!\sum_{k=1}^j\!\!\Dlt\tau_k
R\left[\sum_{l=1}^{k-1}\b_l\right]v\!-\!(t-\tau_j)R\left[\sum_{l=1}^{j-1}\b_l\right]v,
	R\left[\sum_{l=1}^{j}\b_l\right]v\right)
\\
\times\tfrac{d\b_1}4\ldots\tfrac{d\b_j}4d\tau_1\ldots d\tau_j\,,
\end{aligned}
$$
which is the Duhamel series giving the solution of the Lorentz kinetic equation.

Hence, we have proved that
$$
\la f^M_r(t,x,v,\cdot)\ra\to f(t,x,v)\hbox{ uniformly on bounded sets as }r\to 0^+\,,
$$
where $f$ is the solution of the Lorentz kinetic equation. One can check by a straightforward 
computation that the Lorentz collision integral satisfies the property
$$
\int_{\bS^1}\cC(\phi)(v)dv=0\hbox{ for each }\phi\in L^\infty(\bS^1)\,.
$$
Integrating both sides of the Lorentz kinetic equation in the variables $(t,x,v)$ over 
$[0,t]\times\bR^2\times\bS^1$ shows that the solution $f$ of that equation satisfies
$$
\iint_{\bR^2\times\bS^1}f(t,x,v)dxdv=\iint_{\bR^2\times\bS^1}f^{in}(x,v)dxdv
$$
for each $t>0$.

On the other hand, the billiard flow $(X,V)(t,\cdot,\cdot,\{c\})$ obviously leaves the uniform measure 
$dxdv$ on $\bR^2\times\bS^1$ (i.e. the particle number) invariant, so that, for each $t>0$ and each 
$r>0$, 
$$
\iint_{\bR^2\times\bS^1}f_r(t,x,v,\{c\})dxdv=\iint_{\bR^2\times\bS^1}f^{in}(x,v)dxdv\,.
$$
We therefore deduce from Fatou's lemma that
$$
\begin{aligned}
\la f^R_r\ra\to 0\hbox{ in }L^1(\bR^2\times\bS^1)\hbox{ uniformly on bounded $t$-sets}
\\
\la f^M_r\ra\to f\hbox{ in }L^1(\bR^2\times\bS^1)\hbox{ uniformly on bounded $t$-sets}
\end{aligned}
$$
which concludes our sketch of the proof of Gallavotti's theorem. 
\end{proof}

For a complete proof, we refer the interested reader to \cite{Gallavotti1972, Gallavotti1999}.

Some remarks are in order before leaving Gallavotti's setting for the Lorentz gas with the Poisson 
distribution of obstacles.

Assuming no external force field as done everywhere in the present paper is not as inocuous as it may 
seem. For instance, in the case of Poisson distributed holes --- i.e. purely absorbing obstacles, so that 
particles falling into the holes disappear from the system forever --- the presence of an external force 
may introduce memory effects in the Boltzmann-Grad limit, as observed by L. Desvillettes and V. Ricci 
\cite{DesvRicci2004}.

Another remark is about the method of proof itself. One has obtained the Lorentz kinetic equation
\textit{after} having obtained an explicit formula for the solution of that equation. In other words, the
equation is deduced from the solution --- which is a somewhat unusual situation in mathematics.
However, the same is true of Lanford's derivation of the Boltzmann equation \cite{Lanford1975},
as well as of the derivation of several other models in nonequilibrium statistical mechanics. For an
interesting comment on this issue, see \cite{CerciIllnerPulvi1994}, on p. 75.


\section{Santal\'o's formula\\ for the geometric mean free path}


From now on, we shall abandon the random case and concentrate our efforts on the periodic Lorentz 
gas. 

Our first task is to define the Boltzmann-Grad scaling for periodic systems of spherical obstacles. In the 
Poisson case defined above, things were relatively easy: in space dimension $2$, the Boltzmann-Grad 
scaling was defined by the prescription that the number of obstacles per unit volume tends to infinity 
while the obstacle radius tends to $0$ in such a way that
$$
\#\hbox{ obstacles per unit volume }\times\hbox{ obstacle radius }\to\si>0\,.
$$

The product above has an interesting geometric meaning even without assuming a Poisson distribution 
for the obstacle centers, which we shall briefly discuss before going further in our analysis of the periodic 
Lorentz gas.

Perhaps the most important scaling parameter in all kinetic models is the mean free path. This is by no 
means a trivial notion, as will be seen below. As suggested by the name itself, any notion of mean free 
path must involve first the notion of free path length, and then some appropriate probability measure
under which the free path length is averaged.

For simplicity, the only periodic distribution of obstacles considered below is the set of balls of radius 
$r$ centered at the vertices of a unit cubic lattice in the $D$-dimensional Euclidian space.

Correspondingly, for each $r\in(0,\tfrac12)$, we define the domain left free for particle motion, also 
called the ``billiard table" as
$$
Z_r=\{x\in\bR^D\,|\,\hbox{dist}(x,\bZ^D)>r\}\,.
$$

\begin{figure}

\includegraphics[width=10.0cm]{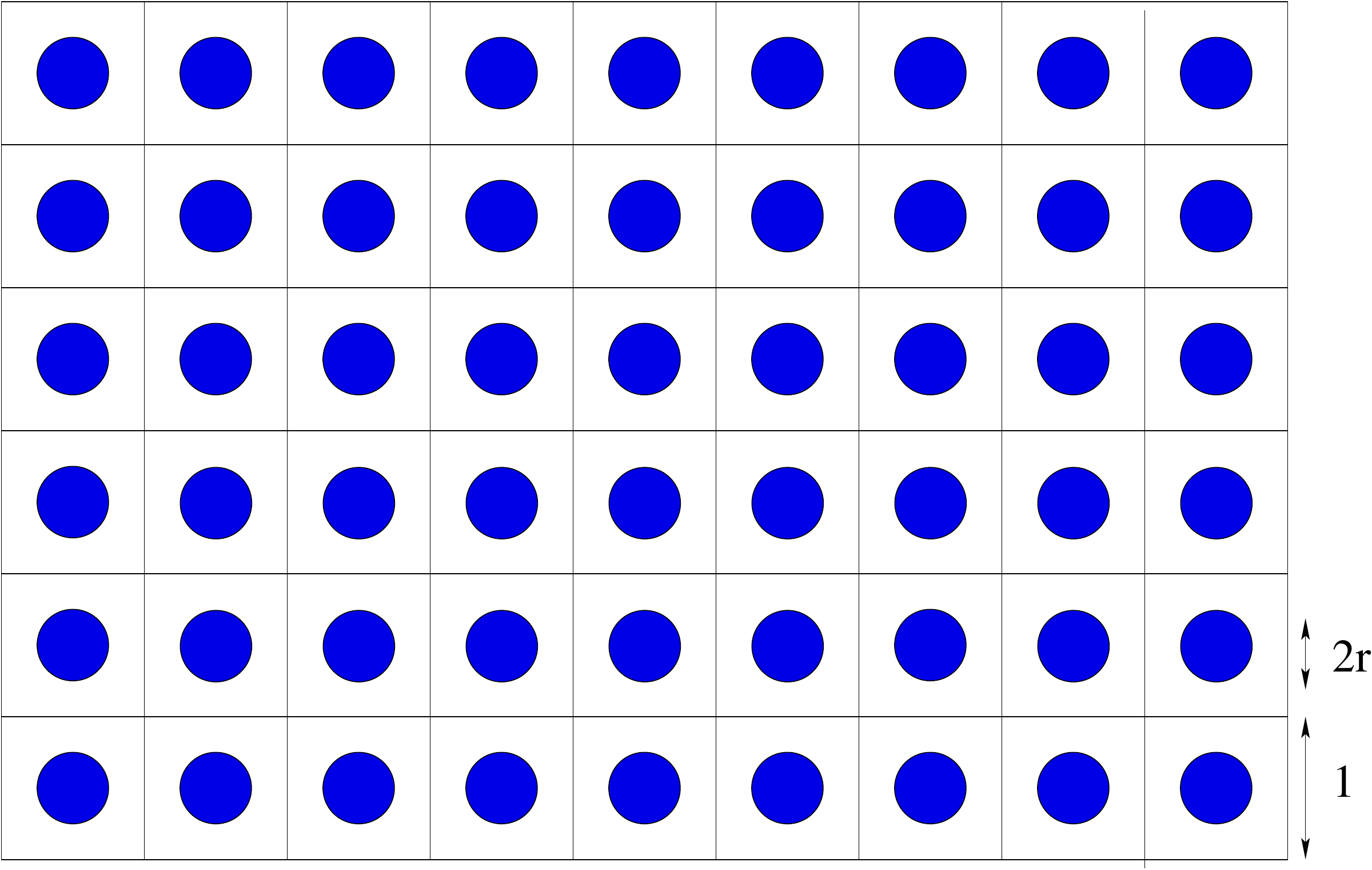}

\caption{The periodic billiard table}

\end{figure}

Defining the free path length in the billiard table $Z_r$ is easy:  the free path length starting from 
$x\in Z_r$ in the direction $v\in\bS^{D-1}$ is
$$
\tau_r(x,v)=\min\{t>0\,|\,x+tv\in\d Z_r\}\,.
$$

Obviously, for each $v\in\bS^{D-1}$ the free path length $\tau_r(\cdot,v)$ in the direction $v$ can be 
extended continuously to 
$$
\{x\in\d Z_r\,|\,v\cdot n_x\not=0\}\,,
$$
where $n_x$ denotes the unit normal vector to $\d Z_r$ at the point $x\in\d Z_r$ pointing towards 
$Z_r$.

With this definition, the mean free path is the quantity defined as
$$
\hbox{Mean Free Path}=\la\tau_r\ra\,,
$$
where the notation $\la\cdot\ra$ designates the average under some appropriate probability measure 
on $\overline{Z_r}\times\bS^{D-1}$. 

\begin{figure}

\includegraphics[width=9.0cm]{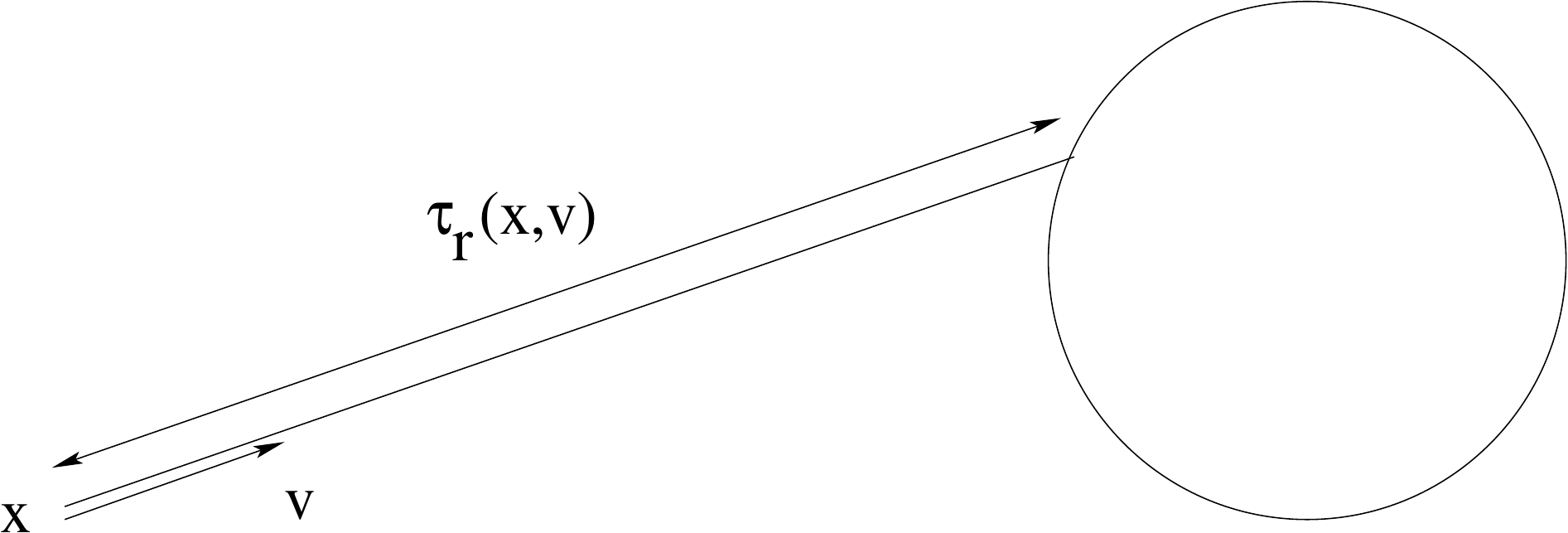}

\caption{The free path length}

\end{figure}

A first ambiguity in the notion of mean free path comes from the fact that there are two fairly natural 
probability measures for the Lorentz gas.

The first one is the uniform probability measure on $Z_r/\bZ^D\times\bS^{D-1}$
$$
d\mu_r(x,v)=\frac{dxdv}{|Z_r/\bZ^D|\,|\bS^{D-1}|}
$$
that is invariant under the billiard flow --- the notation $|\bS^{D-1}|$ designates the $D-1$-dimensional 
uniform measure of the unit sphere $\bS^{D-1}$. This measure is obviously invariant under the billiard 
flow 
$$
(X_r,V_r)(t,\cdot,\cdot):\,Z_r\times\bS^{D-1}\to Z_r\times\bS^{D-1}
$$
defined by
$$
\left\{\begin{matrix}\dot{X}_r=V_r
	\\ \dot{V}_r=0\end{matrix}\right.\quad\hbox{ whenever }X(t)\notin\d Z_r
$$
while
$$
\left\{\begin{array}l
X_r(t^+)=X_r(t^-)=:X_r(t)\hbox{ if }X(t^\pm)\in\d Z_r\,,
\\ 
V_r(t^+)=\cR[n_{X_r(t)}]V_r(t^-)
\end{array}\right.
$$
with $\cR[n]v=v-2v\!\cdot\!nn$ denoting the reflection with respect to the hyperplane $(\bR n)^\bot$.

The second such probability measure is the invariant measure of the billiard map 
$$
d\nu_r(x,v)=
	\frac{v\!\cdot\!n_xdS(x)dv}{v\!\cdot\!n_xdxdv\hbox{-meas}(\Gamma^r_+/\bZ^D)}
$$
where $n_x$ is the unit inward normal at $x\in\d Z_r$, while $dS(x)$ is the $D-1$-dimensional surface 
element on $\d Z_r$, and
$$
\Gamma^r_+:=\{(x,v)\in\d Z_r\times\bS^{D-1}\,|\,v\cdot n_x>0\}\,.
$$
The billiard map $\cB_r$ is the map 
$$
\Gamma^r_+\ni(x,v)\mapsto\cB_r(x,v):=(X_r,V_r)(\tau_r(x,v);x,v)\in\Gamma^r_+\,,
$$
which obviously passes to the quotient modulo $\bZ^D$-translations:
$$
\cB_r:\;\Gamma^r_+/\bZ^D\to\Gamma^r_+/\bZ^D\,.
$$
In other words, given the position $x$ and the velocity $v$ of a particle immediatly after its first collision 
with an obstacle, the sequence $(\cB^n_r(x,v))_{n\ge 0}$ is the sequence of all collision points and 
post-collision velocities on that particle's trajectory.

With the material above, we can define a first, very natural notion of mean free path, by setting
$$
\hbox{Mean Free Path}
	=\lim_{N\to+\infty}\frac1{N}\sum_{k=0}^{N-1}\tau_r(\cB^k_r(x,v))\,.
$$
Notice that, for $\nu_r$-a.e. $(x,v)\in\Gamma^+_r/\bZ^D$, the right hand side of the equality above is 
well-defined by the Birkhoff ergodic theorem. If the billiard map $\cB_r$ is ergodic for the measure 
$\nu_r$, one has
$$
\lim_{N\to+\infty}\frac1{N}\sum_{k=0}^{N-1}\tau_r(\cB^k_r(x,v))
	=\int_{\Gamma^r_+/{\bZ^D}}\tau_rd\nu_r\,,
$$
for $\nu_r$-a.e. $(x,v)\in\Gamma^r_+/\bZ^D$.

Now, a very general formula for computing the right-hand side of the above equality was found by 
the great spanish mathematician L. A. Santal\'o in 1942. In fact, Santal\'o's argument applies to 
situations that are considerably more general, involving for instance curved trajectories instead of 
straight line segments, or obstacle distributions other than periodic. The reader interested in these
questions is referred to Santal\'o's original article \cite{Santalo1943}.

\begin{figure}

\includegraphics[width=6.0cm]{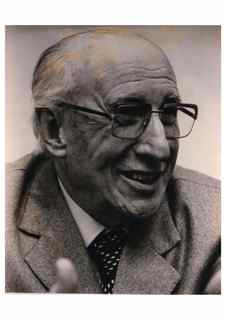}

\caption{Luis Antonio Santal\'o Sors (1911-2001)}

\end{figure}

Here is

\bigskip
\noindent
\fbox{\sc Santal\'o's formula for the geometric mean free path}

\smallskip
One finds that
$$
\ell_r:=\int_{\Gamma^r_+/{\bZ^D}}\tau_r(x,v)d\nu_r(x,v)
	=\frac{1-|\bB^D|r^D}{|\bB^{D-1}|r^{D-1}}
$$
where $\bB^D$ is the unit ball of $\bR^D$ and $|\bB^D|$ its $D$-dimensional
Lebesgue measure.

\smallskip
In fact, one has the following slightly more general

\begin{Lem} [H.S. Dumas, L. Dumas, F. Golse \cite{Dumas2Golse1996}] For $f\in C^1(\bR_+)$ 
such that $f(0)=0$, one has
$$
\iint_{\Gamma^r_+/\bZ^D}f(\tau_r(x,v))v\cdot n_xdS(x)dv
		=\iint_{(Z_r/\bZ^D)\times\bS^{D-1}}f'(\tau_r(x,v))dxdv\,.
$$
\end{Lem}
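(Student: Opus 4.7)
The plan is to recognize the identity as an instance of the fundamental theorem of calculus applied along each trajectory, combined with a tubular change of variables that parametrizes $Z_r$ by its ``entry boundary'' in each direction $v$.

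First, I would fix $v\in\bS^{D-1}$ and introduce, for each such $v$, the codimension-one entry set
$$
\Gamma^r_+(v)=\{x_0\in\d Z_r\,|\,v\cdot n_{x_0}>0\},
$$
together with the map $\Phi_v:(x_0,s)\mapsto x_0+sv$. The key geometric claim is that, modulo $\bZ^D$-translations and up to a set of measure zero (the grazing rays with $v\cdot n_{x_0}=0$ and the measure-zero set of $v$ yielding periodic trajectories), $\Phi_v$ is a bijection from $\{(x_0,s)\,|\,x_0\in\Gamma^r_+(v)/\bZ^D,\;0<s<\tau_r(x_0,v)\}$ onto $Z_r/\bZ^D$, with Jacobian $v\cdot n_{x_0}$. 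This is the usual ``flow-out'' formula: since $v$ is fixed and $n_{x_0}$ is the unit normal at the entry point, the element of volume swept in time $ds$ from a surface element $dS(x_0)$ is $(v\cdot n_{x_0})\,dS(x_0)\,ds$.

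Second, once this change of variables is in hand, I rewrite the right-hand side of the lemma as
$$
\int_{\bS^{D-1}}\!\int_{\Gamma^r_+(v)/\bZ^D}\!\int_0^{\tau_r(x_0,v)} f'\bigl(\tau_r(x_0+sv,v)\bigr)\,(v\cdot n_{x_0})\,ds\,dS(x_0)\,dv.
$$
The cocycle identity $\tau_r(x_0+sv,v)=\tau_r(x_0,v)-s$, valid for $0<s<\tau_r(x_0,v)$ by definition of the free path length, turns the innermost integral into
$$
\int_0^{\tau_r(x_0,v)} f'\bigl(\tau_r(x_0,v)-s\bigr)\,ds = f(\tau_r(x_0,v))-f(0)=f(\tau_r(x_0,v)),
$$
using $f(0)=0$. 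Substituting back reproduces the left-hand side exactly.

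The only delicate step is the tubular change of variables on the quotient $Z_r/\bZ^D$: I must check that the flow-out $\Phi_v$ is injective almost everywhere after passing to the torus. For a given $v$, a failure of injectivity would require two distinct entry points $x_0,x_0'\in\Gamma^r_+(v)$, with $x_0'\equiv x_0+sv\pmod{\bZ^D}$ for some $0<s<\tau_r(x_0,v)$; this corresponds to a ray that hits an obstacle in the quotient before completing its free flight in $\bR^D$, which contradicts the definition of $\tau_r$, or to a ray that closes up on a periodic orbit of the translation flow on the torus without meeting any obstacle, a condition confined to a set of $v$'s of measure zero (rational directions) and hence negligible under integration against $dv$. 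With this caveat handled, the argument above is a rigorous one-line integration by parts along orbits.
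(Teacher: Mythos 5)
Your argument is correct, and it reaches the identity by a route that is dual to the one in the paper. The paper's proof is a two-line integration by parts: it observes that $\tau_r(x+tv,v)=\tau_r(x,v)-t$ means $\tau_r$ solves the transport equation $v\cdot\grad_x\tau_r=-1$ in $Z_r$ with $\tau_r=0$ on the outgoing part of the boundary ($v\cdot n_x<0$), hence $v\cdot\grad_x f(\tau_r)=-f'(\tau_r)$ with $f(\tau_r)=0$ there, and then integrates over $(Z_r/\bZ^D)\times\bS^{D-1}$ and applies Green's formula; the only surviving boundary term is the integral over $\Gamma^r_+/\bZ^D$, with a sign dictated by $n_x$ pointing into $Z_r$. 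You instead disintegrate the volume measure of $Z_r/\bZ^D$ along the rays of direction $v$ via the flow-out map $\Phi_v(x_0,s)=x_0+sv$ with Jacobian $v\cdot n_{x_0}$, and apply the fundamental theorem of calculus on each ray using the same cocycle identity. The two proofs rest on exactly the same facts (the cocycle identity and the vanishing of $f\circ\tau_r$ at the far end of each ray, via $f(0)=0$); yours makes explicit the a.e.\ bijectivity of the flow-out on the quotient (injectivity from the minimality of $\tau_r$, surjectivity up to the measure-zero set of directions admitting infinite backward free paths), which is precisely the geometric content that Green's formula packages for free, while the paper's version quietly requires $f(\tau_r(\cdot,v))$ to be regular enough (piecewise $C^1$ off the tangency set) for the divergence theorem to apply. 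Neither presentation addresses the case where $\tau_r(x_0,v)=+\infty$ on channel directions, but that set of $v$ is null, so both are fine. In short: same computation, with your change-of-variables version being the more elementary and self-contained of the two, and the paper's PDE version the more compact.
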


Santal\'o's formula is obtained by setting $f(z)=z$ in the identity above, and expressing both 
integrals in terms of the normalized measures $\nu_r$ and $\mu_r$.

\begin{proof} 
For each $(x,v)\in Z_r\times\bS^{D-1}$ one has
$$
\tau_r(x+tv,v)=\tau_r(x,v)-t\,,
$$
so that
$$
\frac{d}{dt}\tau_r(x+tv,v)=-1\,.
$$
Hence $\tau_r(x,v)$ solves the transport equation
$$
\left\{\begin{array}{lll}
v\cdot\grad_x\tau_r(x,v)=-1\,,&\quad x\in Z_r\,,\,\,&v\in\bS^{D-1}\,,
\\
\tau_r(x,v)=0\,,&\quad x\in\d Z_r\,,\,\,&v\cdot n_x<0\,.
\end{array}\right.
$$
Since $f\in C^1(\bR_+)$ and $f(0)=0$, one has
$$
\left\{\begin{array}{lll}
v\cdot\grad_xf(\tau_r(x,v))=-f'(\tau_r(x,v))\,,&\quad x\in Z_r\,,\,\,&v\in\bS^{D-1}\,,
\\
f(\tau_r(x,v))=0\,,&\quad x\in\d Z_r\,,\,\,&v\cdot n_x<0\,.
\end{array}\right.
$$
Integrating both sides of the equality above, and applying Green's formula shows that
$$
\begin{aligned}
-\iint_{(Z_r/\bZ^D)\times\bS^{D-1}}&f'(\tau_r(x,v))dxdv
\\
&=
\iint_{(Z_r/\bZ^D)\times\bS^{D-1}}v\cdot\grad_x(f(\tau_r(x,v)))dxdv
\\
&=-\iint_{(\d Z_r/\bZ^D)\times\bS^{D-1}}f(\tau_r(x,v))v\cdot n_xdS(x)dv
\end{aligned}
$$
--- beware the unusual sign in the right-hand side of the second equality above, coming from the 
orientation of the unit normal $n_x$, which is pointing towards $Z_r$. 
\end{proof}

\smallskip
With the help of Santal\'o's formula, we define the Boltzmann-Grad limit for the Lorentz gas with 
periodic as well as random distribution of obstacles as follows:

\bigskip
\noindent
\fbox{\sc Boltzmann-Grad scaling}

\smallskip
The Boltzmann-Grad scaling for the periodic Lorentz gas in space dimension $D$ corresponds with 
the following choice of parameters:
$$
\begin{aligned}
\hbox{distance between neighboring lattice points}&=\eps\ll 1\,,
\\
\hbox{obstacle radius}&=r\ll 1\,,
\\
\hbox{mean free path}&=\ell_r\to\frac1\si>0\,.
\end{aligned}
$$ 

Santal\'o's formula indicates that one should have
$$
r\sim c\eps^{\frac{D}{D-1}}
	\hbox{ with }c=\left(\frac{\si}{|\bB^{D-1}|}\right)^{-\frac1{D-1}}\hbox{ as }\eps\to 0^+\,.
$$

Therefore, given an initial particle density $f^{in}\in C_c(\bR^D\times\bS^{D-1})$, we define $f_r$ to 
be
$$
f_r(t,x,v)=f^{in}\left(r^{D-1}X_r\left(-\frac{t}{r^{D-1}};\frac{x}{r^{D-1}},v\right),
	V_r\left(-\frac{t}{r^{D-1}};\frac{x}{r^{D-1}},v\right)\right)
$$
where $(X_r,V_r)$ is the billiard flow in $Z_r$ with specular reflection on $\d Z_r$.

Notice that this formula defines $f_r$ for $x\in Z_r$ only, as the particle density should remain $0$ for 
all time in the spatial domain occupied by the obstacles. As explained in the previous section, this is 
a set whose measure vanishes in the Boltzmann-Grad limit, and we shall always implicitly extend the 
function $f_r$ defined above by $0$ for $x\notin Z_r$. 

Since $f^{in}$ is a bounded function on $Z_r\times\bS^{D-1}$, the family $f_r$ defined above is a 
bounded family of $L^\infty(\bR^D\times\bS^{D-1})$. By the Banach-Alaoglu theorem, this family is 
therefore relatively compact for the weak-* topology of $L^\infty(\bR_+\times\bR^D\times\bS^{D-1})$.

\smallskip
\textbf{Problem:} to find an equation governing the $L^\infty$ weak-* limit points of the scaled number 
density $f_r$ as $r\to 0^+$.

\smallskip
In the sequel, we shall describe the answer to this question in the $2$-dimensional case ($D=2$.)


\section{Estimates for the distribution of free-path lengths}


In the proof of Gallavotti's theorem for the case of a Poisson distribution of obstacles in space 
dimension $D=2$, the probability that a strip of width $2r$ and length $t$ does not meet any obstacle 
is $e^{-2nrt}$, where $n$ is the parameter of the Poisson distribution --- i.e. the average number of 
obstacles per unit surface. 

This accounts for the loss term
$$
f^{in}(x-tv,v)e^{-\si t}
$$
in the Duhamel series for the solution of the Lorentz kinetic equation, or of the term $-\si f$ on the 
right-hand side of that equation written in the form
$$
(\d_t+v\cdot\grad_x)f=-\si f+\si\int_0^{2\pi}f(t,x,R(\b)v)\sin\tfrac{\b}2\tfrac{d\b}{4}\,.
$$

Things are fundamentally different in the periodic case. To begin with, there are infinite strips included 
in the billiard table $Z_r$ which \textit{never} meet any obstacle.

\begin{figure}

\includegraphics[width=8.0cm]{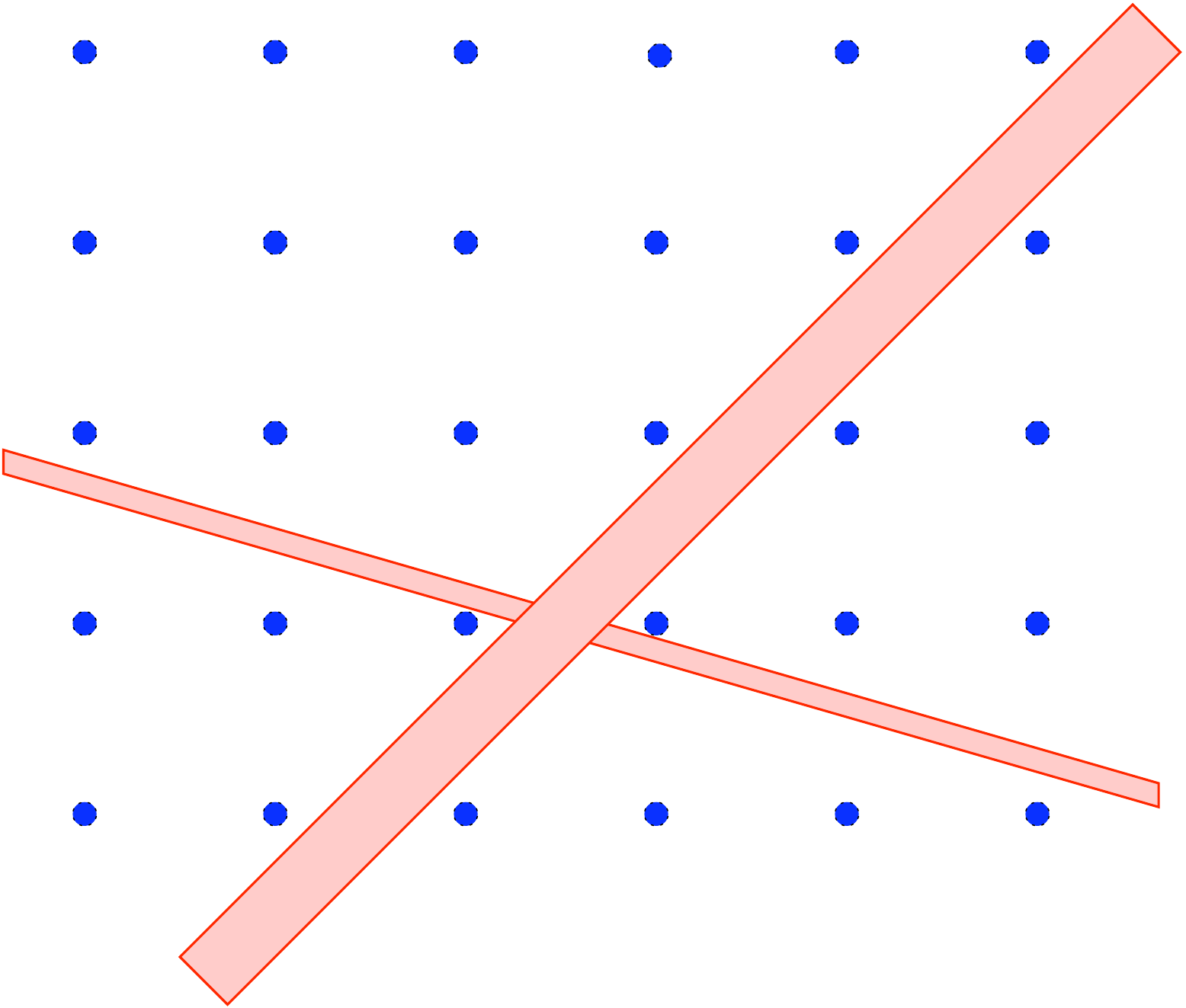}

\caption{Open strips in the periodic billiard table that never meet any obstacle}

\end{figure}

The contribution of the 1-particle density leading to the loss term in the Lorentz kinetic equation is, in 
the notation of the proof of Gallavotti's theorem
$$
f^{in}(x-tv,v)\indc_{t<\tau_1(x,v,\{c\})}\,.
$$
The analogous term in the periodic case is
$$
f^{in}(x-tv,v)\indc_{t<r^{D-1}\tau_r(x/r,-v)}
$$
where $\tau_r(x,v)$ is the free-path length in the periodic billiard table $Z_r$ starting from $x\in Z_r$ in 
the direction $v\in\bS^1$.

Passing to the $L^\infty$ weak-* limit as $r\to 0$ reduces to finding
$$
\lim_{r\to 0}\indc_{t<r^{D-1}\tau_r(x/r,-v)}
	\hbox{ in }w^*-L^\infty(\bR^2\times\bS^1)\,
$$
--- possibly after extracting a subsequence $r_n\downarrow 0$. As we shall see below, this involves 
the distribution of $\tau_r$ under the probability measure $\mu_r$ introduced in the discussion of 
Santal\'o's formula --- i.e. assuming the initial position $x$ and direction $v$ to be independent and
uniformly distributed on $(\bR^D/\bZ^D)\times\bS^{D-1}$.

We define the (scaled) distribution under $\mu_r$ of free path lengths $\tau_r$ to be
$$
\Phi_r(t):=\mu_r(\{(x,v)\in(Z_r/\bZ^D)\times\bS^{D-1}\,|\,\tau_r(x,v)>t/r^{D-1}\})\,.
$$

Notice the scaling $t\mapsto t/r^{D-1}$ in this definition. In space dimension $D$, Santal\'o's formula 
shows that
$$
\iint_{\Gamma_r^+/\bZ^D}\tau_r(x,v)d\nu_r(x,v)\sim\tfrac{1}{|\bB^{D-1}|}r^{1-D}\,,
$$
and this suggests that the free path length $\tau_r$ is a quantity of the order of $1/r^{D-1}$. (In fact, 
this argument is not entirely convincing, as we shall see below.)

In any case, with this definition of the distribution of free path lengths under $\mu_r$, one arrives at 
the following estimate.

\begin{Thm}[Bourgain-Golse-Wennberg \cite{BourgGolWenn1998, GolWenn2000}] In space 
dimension $D\ge 2$, there exists $0<C_D<C'_D$ such that
$$
\frac{C_D}{t}\le\Phi_r(t)\le\frac{C'_D}{t}\quad
	\hbox{ whenever }t>1\hbox{ and }\,0<r<\tfrac12\,.
$$
\end{Thm}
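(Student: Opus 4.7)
The condition $\tau_r(x,v) > T$ with $T = t/r^{D-1}$ is equivalent to the tube of length $T$ and radius $r$ around the trajectory starting at $(x,v)$ being disjoint from the lattice $\bZ^D$. This geometric reformulation invites quite different treatments for the two inequalities.

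\textbf{Upper bound.} I would apply the Dumas–Dumas–Golse lemma from the preceding paragraph with a $C^1$-approximation of the test function $f(z) = (z-T)_+$ (so that $f'(z) = \indc_{z>T}$), which yields the identity
\[
(1 - |\bB^D|r^D)\,|\bS^{D-1}|\,\Phi_r(t) \;=\; Z_r \int_T^{\infty} \nu_r(\{\tau_r > s\})\,ds,
\]
where $Z_r := \iint_{\Gamma^+_r/\bZ^D} v\!\cdot\!n_x\, dS(x)\, dv \sim r^{D-1}$. The target bound $\Phi_r(t) \le C'_D/t$ then reduces to the tail estimate $\nu_r(\{\tau_r > s\}) \lesssim 1/(r^{D-1} s^2)$ for $s \ge T$. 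I would obtain this via a Chebyshev-type second-moment argument on the lattice-point count in the tube: the number of lattice points inside a random tube of length $s$ and radius $r$, after projection to the torus $\bR^D/\bZ^D$, has expectation $\sim r^{D-1} s$ (the tube's $D$-dimensional volume) with variance of comparable order, so the probability of an empty tube is bounded as required.

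\textbf{Lower bound.} For the lower bound I would construct lattice-free tubes explicitly using rational directions. In dimension $D=2$, for each primitive pair $(p,q) \in \bZ^2$ with $|w| := \sqrt{p^2+q^2} < 1/(2r)$, the direction $w/|w|$ admits an obstacle-free corridor of width $1/|w| - 2r$; a trajectory starting inside this corridor with angular deviation $\delta$ from the corridor axis persists for a distance of order $(1/|w|-2r)/|\delta|$ before it can possibly encounter an obstacle. Imposing $(1/|w|-2r)/|\delta| > t/r$, integrating over starting positions and angles, and summing over primitive $(p,q)$ via the classical estimate $\sum_{\gcd(p,q)=1,\,|w|\le R} 1/|w|^2 \sim \log R$, yields a lower bound on $\Phi_r(t)$.

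\textbf{Main obstacle.} The central difficulty lies in securing constants independent of $r$. The naive corridor-by-corridor sum above gives only $\Phi_r(t) \gtrsim r\log(1/r)/t$, which falls short of the claimed $C_D/t$ by a factor $r\log(1/r)$; closing this gap is the hard part and seems to require a finer decomposition of the velocity-direction circle via Farey fractions (or continued-fraction convergents), so as to account for trajectories that thread from one corridor to another without colliding. A symmetric difficulty arises in the upper-bound argument when one tries to rule out heavy tails of $r^{D-1}\tau_r$ near rational velocity directions, and presumably the same Farey-fraction machinery provides the sharp second-moment estimate needed there.
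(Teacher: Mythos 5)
Your lower bound follows the same channel technique as the paper, but the ``main obstacle'' you identify is an artifact of a bookkeeping slip, not a real difficulty. A channel of direction $\om=(p,q)/\sqrt{p^2+q^2}$ has, \emph{modulo} $\bZ^2$, area $w(\om,r)\sqrt{p^2+q^2}$ --- its fundamental period along the axis has length $\sqrt{p^2+q^2}$, not $1$. Carrying this factor through, each primitive $(p,q)$ with $\sqrt{p^2+q^2}<1/4r$ contributes a phase-space measure of order
$$
\sqrt{p^2+q^2}\,w(\om,r)\cdot\frac{r\,w(\om,r)}{t}\;\gtrsim\;\frac{r}{t\sqrt{p^2+q^2}}\,,
$$
so the relevant sum is $\sum_{\gcd(p,q)=1,\ |(p,q)|<1/4r}|(p,q)|^{-1}\sim c/r$, not $\sum|(p,q)|^{-2}\sim\log(1/r)$. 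This extra power of $|(p,q)|$ is exactly what upgrades your $r\log(1/r)/t$ to $C_D/t$; equivalently, the paper rewrites the sum as the number of \emph{all} lattice points in the disk of radius $1/4r$, which is $\sim\pi/16r^2$ and cancels the prefactor $r^2$. No Farey-fraction refinement is needed at this level of precision (that machinery enters only for the sharper Caglioti--Golse and Boca--Zaharescu results). One step you do omit and must supply: the contributions of distinct directions have to be \emph{disjoint} before you may add their measures; the paper checks this via $|\sin(\widehat{\om,\om'})|\ge 1/(|(p,q)||(p',q')|)\ge\sin\th+\sin\th'$.

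For the upper bound there is a genuine gap. Your reduction through the Dumas--Dumas--Golse lemma is fine, but (minding the normalization of $\nu_r$, whose total mass before normalizing is $\sim r^{D-1}$) the tail estimate you actually need is $\nu_r(\{\tau_r>s\})\lesssim 1/(r^{2(D-1)}s^2)$, and no Chebyshev-type second-moment argument can produce a quadratic tail here: the paper itself observes that $\int\tau_r^2\,d\nu_r=+\infty$, and the variance of the lattice-point count in a long thin tube is \emph{not} of the order of its expectation --- for directions near rationals of small height the count is either $0$ or enormous, so the variance blows up relative to the mean. Controlling precisely this degeneracy near rational directions is the content of the upper-bound proof, which in \cite{BourgGolWenn1998, GolWenn2000} is a Fourier-series argument in the spirit of Siegel's proof of Minkowski's convex body theorem (the present paper only cites it). As written, your ``variance of comparable order'' claim is false, and the step fails.
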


The lower bound and the upper bound in this theorem are obtained by very different means.

The upper bound follows from a Fourier series argument which is reminiscent of Siegel's prood of 
the classical Minkowski convex body theorem (see \cite{Siegel1936, Montgomery1994}.)

The lower bound, on the other hand, is obtained by working in physical space. Specifically, one uses 
a channel technique, introduced independently by P. Bleher \cite{Bleher1992} for the diffusive scaling.

This lower bound alone has an important consequence:

\begin{Cor}
For each $r>0$, the average of the free path length (mean free path) under the probability measure 
$\mu_r$ is infinite:
$$
\int_{(Z_r/\bZ^D)\times\bS^{D-1}}\tau_r(x,v)d\mu_r(x,v)=+\infty\,.
$$
\end{Cor}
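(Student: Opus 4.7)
The plan is to apply the layer-cake (distribution function) representation of the integral and then use the lower bound on $\Phi_r$ from the preceding Bourgain--Golse--Wennberg theorem.

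First, I would rescale. Introduce the scaled free path length $T_r(x,v):=r^{D-1}\tau_r(x,v)$, so that by definition
$$
\Phi_r(t)=\mu_r(\{\tau_r>t/r^{D-1}\})=\mu_r(\{T_r>t\}).
$$
By the layer-cake formula (valid since $T_r\ge 0$ is measurable),
$$
\int_{(Z_r/\bZ^D)\times\bS^{D-1}}T_r(x,v)\,d\mu_r(x,v)=\int_0^\infty\Phi_r(t)\,dt,
$$
and then
$$
\int_{(Z_r/\bZ^D)\times\bS^{D-1}}\tau_r(x,v)\,d\mu_r(x,v)=\frac{1}{r^{D-1}}\int_0^\infty\Phi_r(t)\,dt.
$$

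Second, I would invoke the lower bound $\Phi_r(t)\ge C_D/t$ for $t>1$ and $0<r<1/2$, which is the content of the theorem cited just above the corollary. This yields
$$
\int_0^\infty\Phi_r(t)\,dt\ge\int_1^\infty\frac{C_D}{t}\,dt=+\infty,
$$
and hence the mean free path $r^{1-D}\int_0^\infty\Phi_r(t)\,dt$ is infinite for every $r\in(0,1/2)$, which is precisely the assertion of the corollary.

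There is essentially no obstacle here: the only nontrivial input is the lower bound on $\Phi_r$, which is assumed. The only minor point worth checking is the interchange in the layer-cake step, but this is automatic since $T_r\ge 0$ is measurable. It is perhaps worth emphasizing in the write-up the contrast with Santal\'o's formula, which gives a finite \emph{geometric} mean free path $\ell_r$ under the measure $\nu_r$ on $\Gamma_+^r/\bZ^D$; the divergence here shows that this finiteness is really a feature of the entry measure $\nu_r$ weighted by $v\cdot n_x$, and is destroyed when one averages instead with respect to the uniform phase-space measure $\mu_r$, because of the long open channels visible in the periodic billiard table.
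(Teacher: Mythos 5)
Your proof is correct and is essentially the paper's own argument: the layer-cake representation combined with the lower bound $\Phi_r(t)\ge C_D/t$ from the Bourgain--Golse--Wennberg theorem. Your explicit tracking of the scaling factor $r^{1-D}$ coming from the definition of $\Phi_r$ is in fact slightly more careful than the paper's one-line computation, but it changes nothing since both sides diverge.
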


\begin{proof}
Indeed, since $\Phi_r$ is the distribution of $\tau_r$ under $\mu_r$, one has
$$
\int_{(Z_r/\bZ^D)\times\bS^{D-1}}\tau_r(x,v)d\mu_r(x,v)
	=\int_0^\infty\Phi_r(t)dt\ge\int_1^\infty\frac{C_D}{t}dt=+\infty\,.
$$
\end{proof}

Recall that the average of the free path length unded the ``other" natural probability measure $\nu_r$ 
is precisely Santal\'o's formula for the mean free path:
$$
\ell_r=\iint_{\Gamma^+_r/\bZ^D}\tau_r(x,v)d\nu_r(x,v)=\frac{1-|\bB^D|r^D}{|\bB^{D-1}|r^{D-1}}\,.
$$
One might wonder why averaging the free path length $\tau_r$ under the measures $\nu_r$ and 
$\mu_r$ actually gives two so different results. 

First observe that Santal\'o's formula gives the mean free path under the probability measure $\nu_r$ 
concentrated on the surface of the obstacles, and is therefore irrelevant for particles that have not yet 
encountered an obstacle.

Besides, by using the lemma that implies Santal\'o's formula with $f(z)=\tfrac12z^2$, one has
$$
\iint_{(Z_r/\bZ^D)\times\bS^{D-1}}\tau_r(x,v)d\mu_r(x,v)
	=\frac1{\ell_r}\int_{\Gamma^+_r/\bZ^D}\tfrac12\tau_r(x,v)^2d\nu_r(x,v)\,.
$$

Whenever the components $v_1,\ldots,v_D$ are independent over $\bQ$, the linear flow in the 
direction $v$ is topologically transitive and ergodic on the $D$-torus, so that $\tau_r(x,v)<+\infty$ 
for each $r>0$ and $x\in\bR^D$. On the other hand, $\tau_r(x,v)=+\infty$ for some $x\in\ Z_r$ 
(the periodic billiard table) whenever $v$ belongs to some specific class of unit vectors whose 
components are rationally dependent, a class that becomes dense in $\bS^{D-1}$ as $r\to 0^+$. 
Thus, $\tau_r$ is strongly oscillating (finite for irrational directions, possibly infinite for a class of 
rational directions that becomes dense as $r\to 0^+$),  and this explains why $\tau_r$ doesn't 
have a second moment under $\nu_r$.

\smallskip
\noindent
\textit{Proof of the Bourgain-Golse-Wennberg lower bound}

We shall restrict our attention to the case of space dimension $D=2$.

As mentionned above, there are \textit{infinite open strips} included in $Z_r$ --- i.e. never meeting 
any obstacle. Call \textit{a channel} any such  open strip of maximum width, and let $\cC_r$ be the 
set of all channels included in $Z_r$.

If $S\in\cC_r$ and $x\in S$, define $\tau_S(x,v)$ the exit time from the channel starting from $x$ in 
the direction $v$, defined as
$$
\tau_S(x,v)=\inf\{t>0\,|\,x+tv\in\d S\}\,,\quad (x,v)\in S\times\bS^1\,.
$$
Obviously, any particle starting from $x$ in the channel $S$ in the direction $v$ must exit $S$ before 
it hits an obstacle (since no obstacle intersects $S$). Therefore
$$
\tau_r(x,v)\ge\sup\{\tau_S(x,v)\,|\,S\in\cC_r\hbox{ s.t. }x\in S\}\,,
$$
so that
$$
\Phi_r(t)\ge\mu_r\left(\bigcup_{S\in\cC_r}\{(x,v)\in (S/\bZ^2)\times\bS^1
	\,|\,\tau_S(x,v)>t/r\}\right)\,.
$$
This observation suggests that one should carefully study the set of channels $\cC_r$.

\smallskip
\noindent
\underline{Step 1: description of $\cC_r$.}  Given $\om\in\bS^1$, we define
$$
\cC_r(\om):=\{\hbox{channels of direction $\om$ in $\cC_r$}\}\,;
$$
We begin with a lemma which describes the structure of $\cC_r(\om)$.

\begin{Lem} Let $r\in[0,\tfrac12)$ and $\om\in\bS^1$. Then

\smallskip
\noindent
1) if $S\in\cC_r(\om)$, then 
$$
\cC_r(\om):=\{S+k\,|\,k\in\bZ^2\}\,;
$$
2) if $\cC_r(\om)\not=\varnothing$, then
$$
\om=\frac{(p,q)}{\sqrt{p^2+q^2}}
$$
with
$$
(p,q)\in\bZ^2\setminus\{(0,0)\}\hbox{ such that }
\hbox{g.c.d.}(p,q)=1\hbox{ and }\sqrt{p^2+q^2}<\frac1{2r}\,.
$$

We henceforth denote by $\cA_r$ the set of all such $\om\in\bS^1$. Then

\smallskip
\noindent
3) for $\om\in\cA_r$, the elements of $\cC_r(\om)$ are open strips of width
$$
w(\om,r)=\frac1{\sqrt{p^2+q^2}}-2r\,.
$$
\end{Lem}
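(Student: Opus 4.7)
The strategy is to reduce this two-dimensional geometric question to a one-dimensional problem about gaps in a union of intervals on the line orthogonal to $\om$. Fix a unit vector $\om^\perp$ perpendicular to $\om$ and let $\pi:\bR^2\to\bR$ denote the orthogonal projection $\pi(x)=x\cdot\om^\perp$. Every open strip of direction $\om$ has the form $\pi^{-1}(I)$ for some open interval $I\subset\bR$, and such a strip lies in $Z_r$ if and only if $I$ is disjoint from the shadow set $\bigcup_{k\in\bZ^2}[\pi(k)-r,\pi(k)+r]$. Maximal open strips in $Z_r$ of direction $\om$ therefore correspond bijectively to the connected components of the complement of this union, and the width of the strip equals the length of the interval.

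For part (2), set $\Lambda:=\pi(\bZ^2)$, an additive subgroup of $\bR$. If $\Lambda$ is dense, the shadow set already covers all of $\bR$ and no gap exists, so $\cC_r(\om)\neq\varnothing$ forces $\Lambda$ to be discrete. Writing $\om=(\cos\th,\sin\th)$ so that $\om^\perp=(-\sin\th,\cos\th)$, the group $\Lambda$ is generated by $-\sin\th$ and $\cos\th$, and by the Kronecker density criterion it is discrete if and only if $\tan\th\in\bQ$ or $\cos\th=0$ --- equivalently, $\om$ is a rational direction. We may then write (uniquely up to sign) $\om=(p,q)/\sqrt{p^2+q^2}$ with $(p,q)\in\bZ^2\setminus\{0\}$ and $\gcd(p,q)=1$, so that $\pi(k_1,k_2)=(-qk_1+pk_2)/\sqrt{p^2+q^2}$. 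Since $\gcd(p,q)=1$, Bezout's identity gives $\{-qk_1+pk_2:(k_1,k_2)\in\bZ^2\}=\bZ$, hence $\Lambda=\tfrac1{\sqrt{p^2+q^2}}\bZ$. Shadows of diameter $2r$ sit at consecutive points of $\Lambda$ spaced by $1/\sqrt{p^2+q^2}$, so a gap in the complement exists precisely when $1/\sqrt{p^2+q^2}>2r$, i.e. $\sqrt{p^2+q^2}<1/(2r)$.

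Part (3) is then immediate: for $\om\in\cA_r$, the maximal components of $\bR\setminus\bigcup_{n\in\bZ}[n/\sqrt{p^2+q^2}-r,n/\sqrt{p^2+q^2}+r]$ are open intervals of length $w(\om,r)=1/\sqrt{p^2+q^2}-2r$, and these are the transverse widths of channels. For part (1), any two maximal gaps of the shadow set differ by an element of $\Lambda$ by the $\Lambda$-periodicity of that set, and since $\Lambda=\pi(\bZ^2)$ every such translation lifts to a translation by some $k\in\bZ^2$; the corresponding channels therefore differ by the $\bR^2$-translation $k$. Conversely every $\bZ^2$-translate of $S$ is a channel by the $\bZ^2$-invariance of $Z_r$.

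The main obstacle is the Kronecker step in part (2), which rules out channels whose direction has irrational slope; after that, the argument reduces to a Bezout identification of $\Lambda$ and routine bookkeeping about the gaps in a periodic union of intervals on the line.
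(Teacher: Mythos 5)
Your proof is correct and follows essentially the same route as the paper's: the paper rules out irrational directions via density of the line's orbit on $\bT^2$ (dual to your Kronecker argument for $\pi(\bZ^2)$), and for rational directions it parametrizes lattice lines by $qx-py=a$ with $a\in p\bZ+q\bZ=\bZ$, which is exactly your Bezout identification of $\Lambda=\tfrac1{\sqrt{p^2+q^2}}\bZ$ up to the normalizing factor, before subtracting $2r$ for the obstacle shadows. Your one-dimensional projection framing is just a systematic repackaging of the same computation.
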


\begin{proof}[Proof of the Lemma] Statement 1) is obvious.

As for statement 2), if $L$ is an infinite line of direction $\om\in\bS^1$ such that  $\om_2/\om_1$ is 
irrational, then $L/\bZ^2$ is an orbit of a linear flow on $\bT^2$ with irrational slope $\om_2/\om_1$. 
Therefore $L/\bZ^2$ is dense in $\bT^2$ so that $L$ cannot be included in $Z_r$.

Assume that 
$$
\om=\frac{(p,q)}{\sqrt{p^2+q^2}}\hbox{ with }
	(p,q)\in\bZ^2\setminus\{(0,0)\}\hbox{ coprime,}
$$
and let $L,L'$ be two infinite lines with direction $\om$, with equations
$$
qx-py=a\hbox{ and }qx-py=a'\hbox{ respectively.}
$$
Obviously
$$
\hbox{dist}(L,L')=\frac{|a-a'|}{\sqrt{p^2+q^2}}\,.
$$

\begin{figure}

\includegraphics[width=12.0cm]{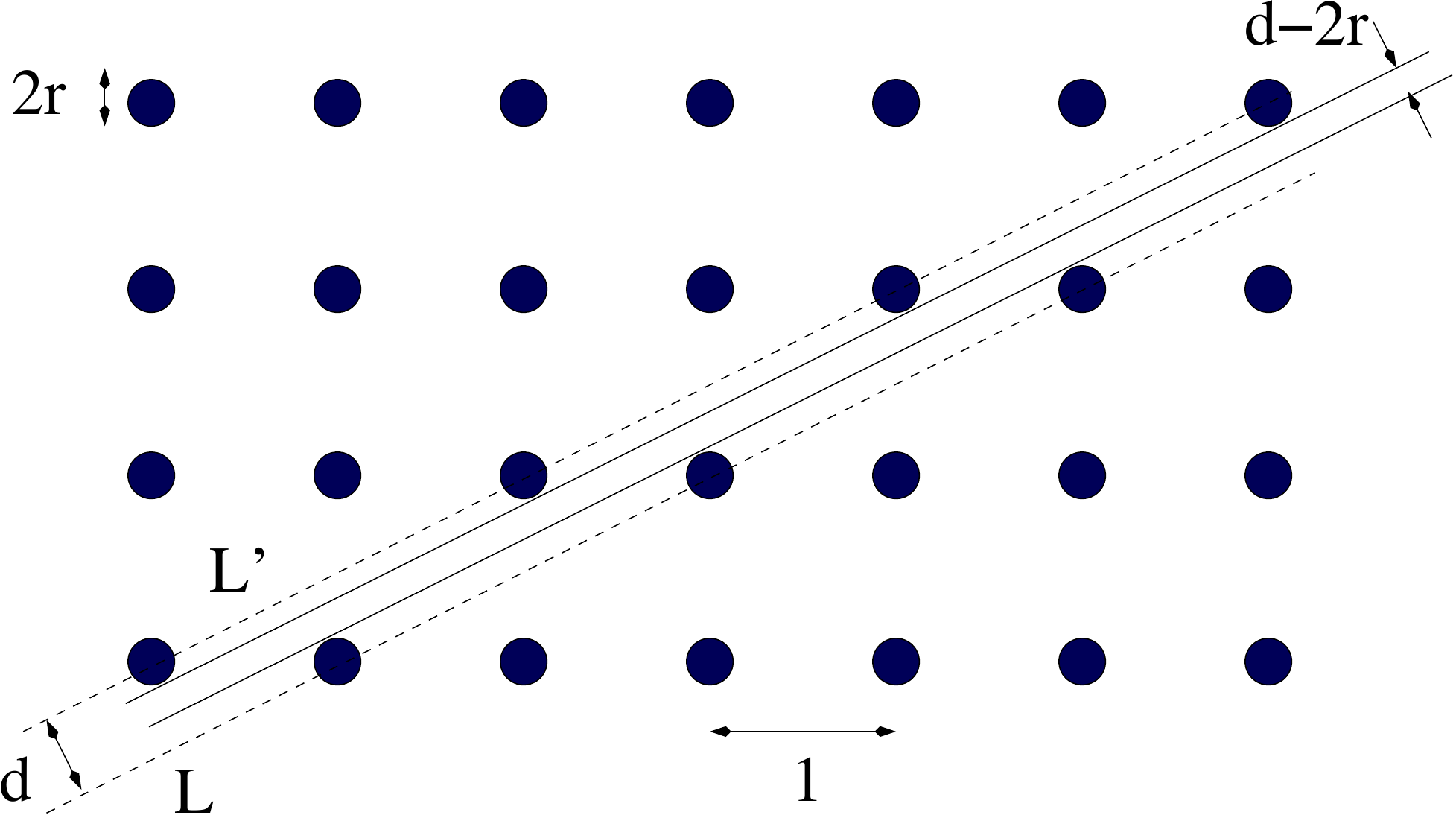}
\caption{A channel of direction $\om=\tfrac1{\sqrt{5}}(2,1)$; minimal distance $d$ between lines $L$ 
and $L'$ of direction $\om$ through lattice points}

\end{figure}

If $L\cup L'$ is the boundary of a channel of direction 
$$
\om=\tfrac{(p,q)}{\sqrt{p^2+q^2}}\in\cA_0
$$
included in $\bR^2\setminus\bZ^2$ --- i.e. of an element of $\cC_0(\om)$, then $L$ and $L'$ intersect 
$\bZ^2$ so that 
$$
a,a'\in p\bZ+q\bZ=\bZ
$$
--- the equality above following from the assumption that $p$ and $q$ are coprime.

Since $\hbox{dist}(L,L')>0$ is minimal, then $|a-a'|=1$, so that
$$
\hbox{dist}(L,L')=\frac1{\sqrt{p^2+q^2}}\,.
$$
Likewise, if $L\cup L'=\d S$ with $S\in\cC_r$, then $L$ and $L' $ are parallel infinite lines tangent to 
$\d Z_r$, and the minimal distance between any such distinct lines is
$$
\hbox{dist}(L,L')=\frac1{\sqrt{p^2+q^2}}-2r\,.
$$
This entails 2) and 3).
\end{proof}

\smallskip
\noindent
\underline{Step 2: the exit time from a channel.}  Let $\om=\tfrac{(p,q)}{\sqrt{p^2+q^2}}\in\cA_r$ and 
let $S\in\cC_r(\om)$. Cut $S$ into three parallel strips of equal width and call $\hat S$ the middle one. 
For each $t>1$ define
$$
\th\equiv\th(\om,r,t):=\arcsin\left(\frac{rw(\om,r)}{3t}\right)\,.
$$

\begin{Lem}
If $x\in\hat S$ and $v\in(R[-\th]\om,R[\th]\om)$, where $R[\th]$ designates the rotation of an angle $\th$, 
then
$$
\tau_S(x,v)\ge t/r\,.
$$
Moreover
$$
\mu_r((\hat S/\bZ^2)\times(R[-\th]\om,R[\th]\om))=\tfrac23 w(\om,r)\th(\om,r,t)\,.
$$
\end{Lem}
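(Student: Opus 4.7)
The plan is to handle the two claims separately by elementary kinematics.

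For the exit time bound, I would parametrize $v = R[\alpha]\om$ with $\alpha\in(-\th,\th)$ and decompose $v=\cos(\alpha)\om+\sin(\alpha)\om^\bot$, where $\om^\bot$ is the unit vector perpendicular to $\om$. Since $\hat S$ is the middle third of the channel $S$ of width $w(\om,r)$, any starting point $x\in\hat S$ lies at perpendicular distance at least $w(\om,r)/3$ from each of the two parallel lines forming $\d S$. The exit from $S$ is controlled entirely by the transverse component of the motion: the particle leaves $S$ only after traveling a perpendicular distance of at least $w(\om,r)/3$ away from $x$, while its perpendicular speed is $|\sin\alpha|\le\sin\th$. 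This yields
$$
\tau_S(x,v)\ge\frac{w(\om,r)/3}{\sin\th},
$$
and substituting $\sin\th=rw(\om,r)/(3t)$ from the definition of $\th(\om,r,t)$ gives $\tau_S(x,v)\ge t/r$ directly.

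For the measure formula, I would exploit the fact that the measure factors into a product in the spatial and angular variables. The angular factor $(R[-\th]\om,R[\th]\om)\subset\bS^1$ has arc length $2\th$. For the spatial factor, appeal to Step 1: $S$ is a strip of width $w(\om,r)$ in the rational direction $\om=(p,q)/\sqrt{p^2+q^2}$ with $\gcd(p,q)=1$, and since $w(\om,r)<1/\sqrt{p^2+q^2}$, the strip $\hat S$ of width $w(\om,r)/3$ projects into $\bT^2$ as an embedded tubular neighborhood of the closed geodesic of direction $\om$, without self-overlap. Multiplying the angular arc length $2\th$ by the contribution $w(\om,r)/3$ from the transverse width of $\hat S$ produces the claimed value $\tfrac23 w(\om,r)\th(\om,r,t)$.

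The main conceptual point — and the only place where one must think, rather than just compute — is the geometric picture behind the exit-time estimate: by forcing $x$ into the middle third we guarantee a traversal distance $\ge w(\om,r)/3$, and by restricting the direction to a cone of opening angle $\th$ we cap the perpendicular speed at $\sin\th$, the two ingredients being tuned so that their ratio is exactly $t/r$. Once this is observed, the remaining work is a one-line kinematic inequality and a one-line product-measure computation; I do not foresee any serious obstacle beyond verifying that $w(\om,r)<1/\sqrt{p^2+q^2}$ keeps the projection of $\hat S$ injective in $\bT^2$, which is immediate from the explicit form of $w(\om,r)=1/\sqrt{p^2+q^2}-2r$.
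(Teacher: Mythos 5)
Your exit-time argument is correct and is exactly the geometric content that the paper leaves to Figure 11 (the paper gives no written proof of this lemma beyond the reference to that figure): a point of the middle third is at perpendicular distance at least $w(\om,r)/3$ from $\d S$, the transverse speed is at most $\sin\th=rw(\om,r)/(3t)$, and the quotient is $t/r$. That half is fine.

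The measure computation, however, contains a genuine error. The quantity you need is the Lebesgue measure of $\hat S/\bZ^2$ in $\bT^2$, and this is \emph{not} the transverse width $w(\om,r)/3$: the strip $\hat S/\bZ^2$ is, as you correctly observe, an embedded tubular neighborhood of the closed geodesic of direction $\om=(p,q)/\sqrt{p^2+q^2}$, but that geodesic has length $\sqrt{p^2+q^2}$ on the unit torus, so
$$
|\hat S/\bZ^2|=\tfrac13 w(\om,r)\sqrt{p^2+q^2}\,,
$$
which is precisely what the paper asserts in Step 3 and then uses in the Conclusion, where the measure of the set in question appears as $\tfrac13 w(\om,r)\sqrt{p^2+q^2}\cdot 2\th(\om,r,t)$. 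Your product $2\th\times w/3$ reproduces the formula as printed in the lemma's statement, but that printed formula is inconsistent with the paper's own subsequent use of it; the factor $\sqrt{p^2+q^2}$ is not cosmetic, since in the final summation over coprime $(p,q)$ it combines with $w(\om,r)^2\ge 1/(4(p^2+q^2))$ to produce terms of size $\sim r/\sqrt{p^2+q^2}$, and without it the sum would be $O(r\ln(1/r))$ and the desired lower bound $\Phi_r(t)\ge C_2/t$ would not follow. So you should replace "the contribution $w(\om,r)/3$ from the transverse width" by the actual area $\tfrac13 w(\om,r)\sqrt{p^2+q^2}$ (and, if one is being literal about $\mu_r$ being a probability measure, divide by the harmless normalization $|Z_r/\bZ^2|\,|\bS^1|$).
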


The proof of this lemma is perhaps best explained by considering Figure 11.

\begin{figure}

\includegraphics[width=12.0cm]{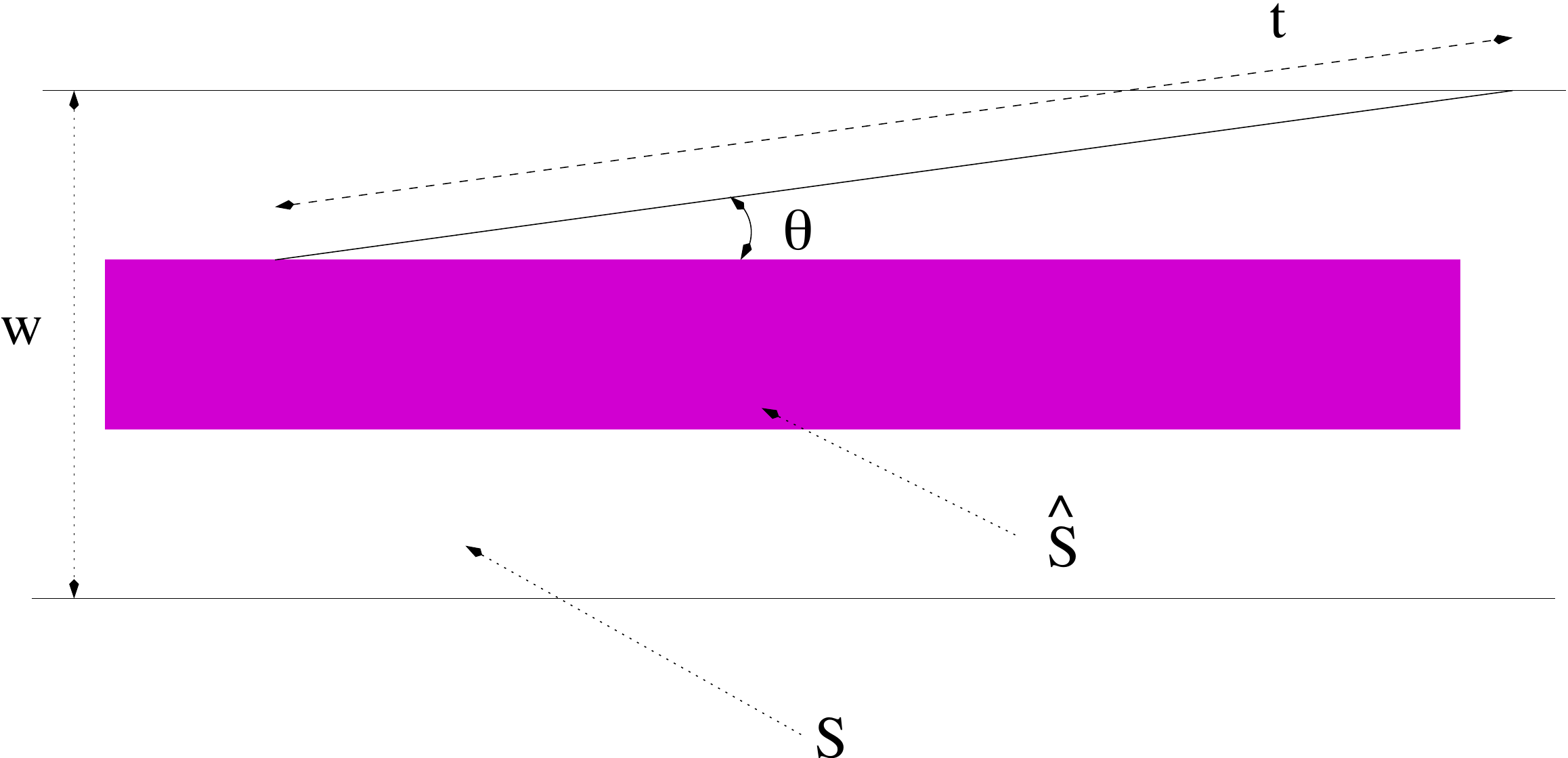}

\caption{Exit time from the middle third $\hat S$ of an infinite strip $S$ of width $w$}

\end{figure}

\smallskip
\noindent
\underline{Step 3: putting all channels together.} Recall that we need to estimate
$$
\mu_r\left(\bigcup_{S\in\cC_r}\{(x,v)\in (S/\bZ^2)\times\bS^1\,|\,\tau_S(x,v)>t/r\}\right)\,.
$$

Pick 
$$
\cA_r\ni\om=\tfrac{(p,q)}{\sqrt{p^2+q^2}}
	\not=\tfrac{(p',q')}{\sqrt{p'^2+q'^2}}=\om'\in\cA_r\,.
$$
Observe that
$$
\begin{aligned}
{}&|\sin(\widehat{\om,\om'})|=\tfrac{|pq'-p'q|}{\sqrt{p^2+q^2}\sqrt{p'^2+q'^2}}
	\ge \tfrac{1}{\sqrt{p^2+q^2}\sqrt{p'^2+q'^2}}
\\
&\ge\max\left(\tfrac{2r}{\sqrt{p^2+q^2}},\tfrac{2r}{\sqrt{p'^2+q'^2}}\right)
	\ge\sin\th(\om,r,t)+\sin\th(\om',r,t)
\\
&\ge\sin(\th(\om,r,t)+\th(\om',r,t))
\end{aligned}
$$
whenever $t>1$.

Then, whenever $S\in\cC_r(\om)$ and $S'\in\cC_r(\om')$
$$
(\hat S\times(R[-\th]\om,R[\th]\om)))\cap(\hat S'\times(R[\th']\om',R[\th']\om')))
=\varnothing
$$
with $\th=\th(\om,r,t)$, $\th'=\th'(\om',r,t)$ and $R[\th]=$the rotation of an angle $\th$.

Moreover, if $\om=\tfrac{(p,q)}{\sqrt{p^2+q^2}}\in\cA_r$ then 
$$
|\hat S/\bZ^2|=\tfrac13 w(\om,r)\sqrt{p^2+q^2}\,,
$$
while
$$
\#\{S/\bZ^2\,|\,S\in\cC_r(\om)\}=1\,.
$$

\begin{figure}

\includegraphics[width=8.0cm]{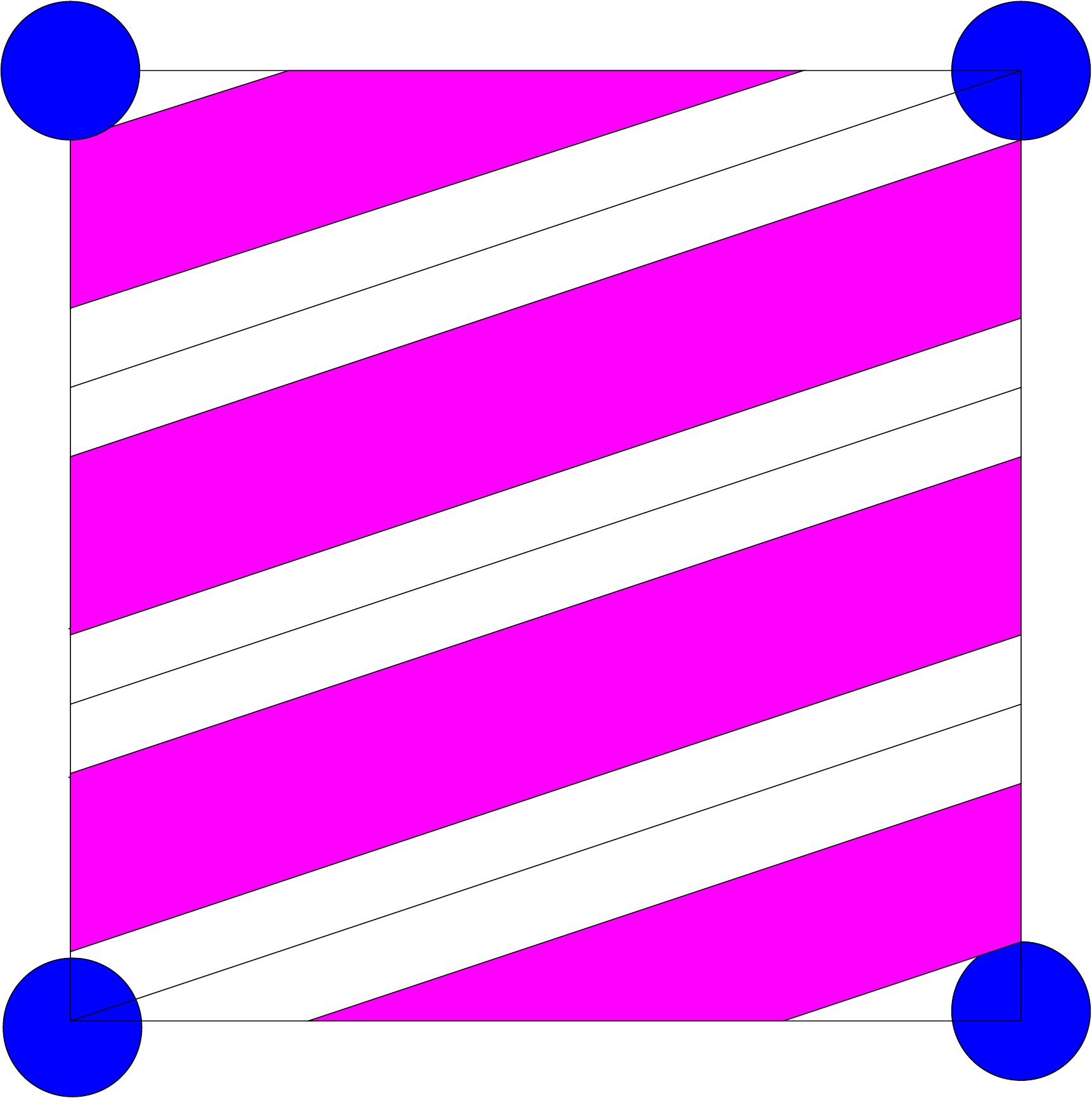}

\centerline{A channel modulo $\bZ^2$}

\end{figure}

\smallskip
\noindent
\underline{Conclusion:} Therefore, whenever $t>1$
$$
\begin{aligned}
{}&\bigcup_{S\in\cC_r}(\hat S/\bZ^2)\times(R[-\th]\om,R[\th]\om)
\\
&\qquad\subset\bigcup_{S\in\cC_r}
	\{(x,v)\in (S/\bZ^2)\times\bS^1\,|\,\tau_S(x,v)>t/r\}\,.
\end{aligned}
$$
and the left-hand side is a disjoint union. Hence
$$
\begin{aligned}
\mu_r\left(\bigcup_{S\in\cC_r}\{(x,v)\in (S/\bZ^2)\times\bS^1
	\,|\,\tau_S(x,v)>t/r\}\right)
\\
\ge
\sum_{\om\in\cA_r}\mu_r((\hat S/\bZ^2)\times(R[-\th]\om,R[\th]\om))
\\
=
\sum_{g.c.d.(p,q)=1\atop p^2+q^2<1/4r^2}
	\tfrac13w(\om,r)\sqrt{p^2+q^2}\cdot 2\th(\om,r,t)
\\
=
\sum_{g.c.d.(p,q)=1\atop p^2+q^2<1/4r^2}
\tfrac23\sqrt{p^2+q^2}w(\om,r)\arcsin\left(\frac{rw(\om,r)}{3t}\right)
\\
\ge
\sum_{g.c.d.(p,q)=1\atop p^2+q^2<1/4r^2}
\tfrac23\sqrt{p^2+q^2}\frac{rw(\om,r)^2}{3t}\,.
\end{aligned}
$$

Now
$$
\sqrt{p^2+q^2}<1/4r\Rightarrow w(\om,r)
	=\tfrac1{\sqrt{p^2+q^2}}-2r\ge\tfrac1{2\sqrt{p^2+q^2}}\,,
$$
so that, eventually
$$
\begin{aligned}
\Phi_r(t)\ge\sum_{g.c.d.(p,q)=1\atop p^2+q^2<1/16r^2}
	\tfrac23\sqrt{p^2+q^2}\frac{rw(\om,r)^2}{3t}
\\
\ge\frac{r^2}{18t}\sum_{g.c.d.(p,q)=1\atop p^2+q^2<1/16r^2}
	\left[\frac{1}{r\sqrt{p^2+q^2}}\right]\,.	
\end{aligned}
$$
This gives the desired conclusion since
$$
\sum_{g.c.d.(p,q)=1\atop p^2+q^2<1/16r^2}\left[\frac{1}{4r\sqrt{p^2+q^2}}\right]
=
\sum_{p^2+q^2<1/16r^2}1\sim\frac{\pi}{16r^2}\,.
$$
The first equality is proved as follows: the term
$$
\left[\frac{1}{4r\sqrt{p^2+q^2}}\right]
$$
is the number of integer points on the segment of length $1/4r$ in the direction $(p,q)$ with $(p,q)\in\bZ^2$  
such that $g.c.d.(p,q)=1$.
\begin{figure}

\includegraphics[width=10.0cm]{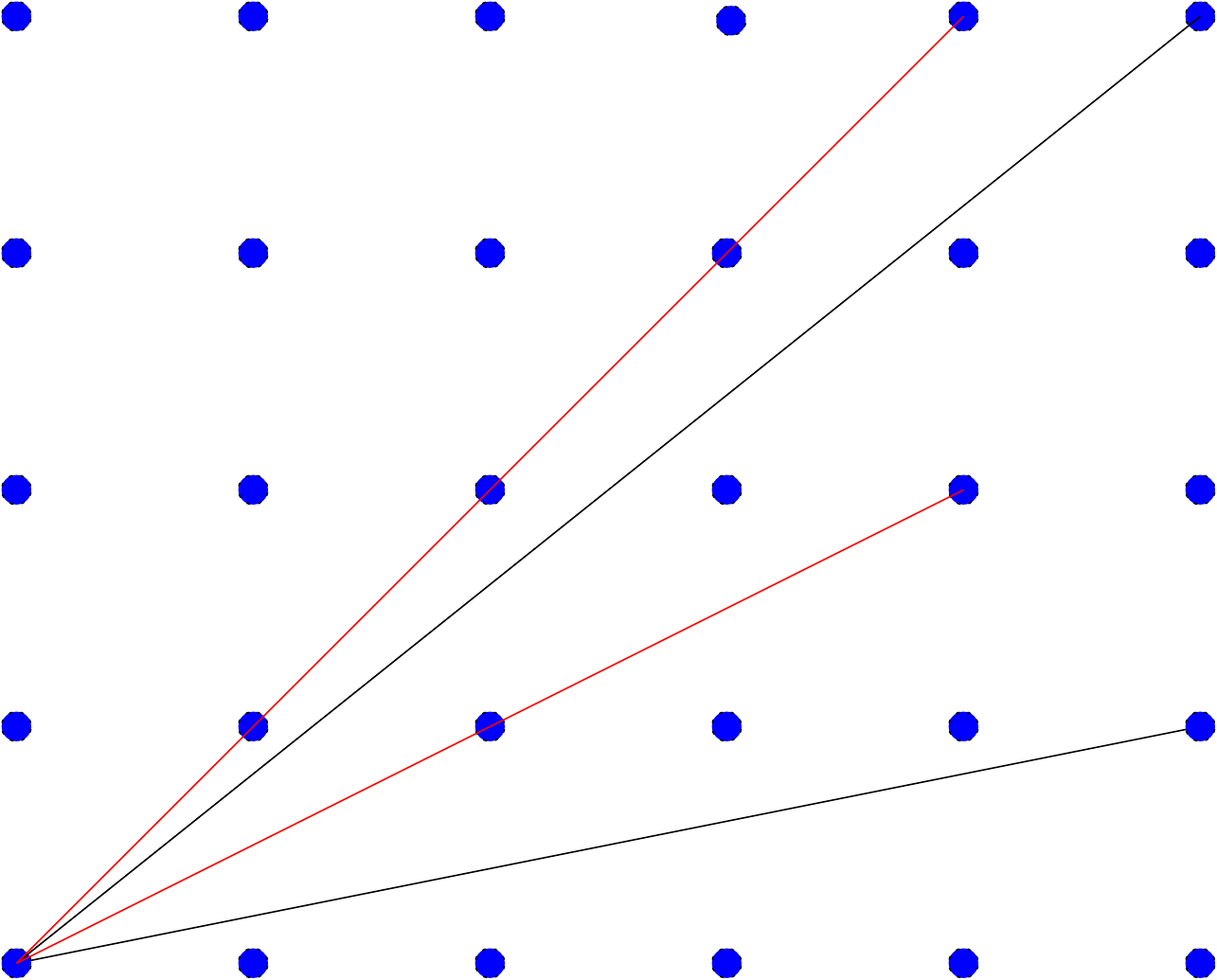}

\caption{Black lines issued from the origin terminate at integer points with coprime coordinates; red lines 
terminate at integer points whose coordinates are not coprime}

\end{figure}

\bigskip
The Bourgain-Golse-Wennberg theorem raises the question, of whe- ther $\Phi_r(t)\simeq C/t$ in some 
sense as $r\to 0^+$ and $t\to+\infty$. Given the very different nature of the arguments used to establish 
the upper and the lower bounds in that theorem, this is a highly nontrivial problem, whose answer seems 
to be known only in space dimension $D=2$ so far. We shall return to this question later, and see that the 
$2$-dimensional situation is amenable to a class of very specific techniques based on continued fractions, 
that can be used to encode particle trajectories of the periodic Lorentz gas.

\smallskip
A first answer to this question, in space dimension $D=2$, is given by the following

\begin{Thm}[Caglioti-Golse \cite{CagliotiFG2003}]
Assume $D=2$ and define, for each $v\in\bS^1$,
$$
\phi_r(t|v)=\mu_r(\{x\in Z_r/\bZ^2\,|\,\tau_r(x,v)\ge t/r\}\,,\quad t\ge 0\,.
$$
Then there exists $\Phi:\,\bR_+\to\bR_+$ such that
$$
\frac1{|\ln\eps|}\int_\eps^{1/4}\phi_r(t,v)\frac{dr}{r}\to\Phi(t)\hbox{ a.e. in }v\in\bS^1
$$
in the limit as $\eps\to 0^+$. Moreover
$$
\Phi(t)\sim\frac1{\pi^2t}\hbox{ as }t\to+\infty\,.
$$
\end{Thm}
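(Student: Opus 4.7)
The plan is to exploit the rigid geometry of directions relative to the lattice, which in dimension $2$ can be encoded through continued fractions. For fixed $v=(\cos\theta,\sin\theta)$ with irrational slope $\alpha=\tan\theta$, I would first unfold the billiard in direction $v$ as a linear flow of slope $\alpha$ on $\bR^2/\bZ^2$ and view $\tau_r(x,v)$ as the first entry time of that orbit into the $r$-neighborhood of $0$ in $\bT^2$. The classical three-distance theorem, applied to the successive returns of the orbit to a transversal, implies that as a function of $x$ this entry time takes at most three values, and those three values are explicit in terms of the first few convergents $p_n/q_n$ of $\alpha$, with $n=n(r,\alpha)$ selected by where $r$ falls relative to the scales $1/q_n$.

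Next I would derive an explicit piecewise expression
$$
\phi_r(t\,|\,v)=F(t;\alpha_n,\eta_n,\rho_n)
$$
in which, at the scale determined by $r$, the data $(\alpha_n,\eta_n,\rho_n)$ are the normalized continued-fraction remainder $\alpha_n=T^n(\alpha)$ (with $T$ the Gauss map $x\mapsto\{1/x\}$), a normalized coordinate recording where $r$ sits between $1/q_{n+1}$ and $1/q_n$, and a normalized ``phase'' recording the three-distance geometry. The crucial point is that the logarithmic change of variable $r=1/(q_n u)$ turns $\int_{1/q_{n+1}}^{1/q_n}\phi_r(t\,|\,v)\frac{dr}{r}$ into a finite integral of $F$ against an explicit density on a compact cell, and the sum over $n$ of these cell integrals is precisely a Birkhoff sum for $T$ acting on $\alpha$.

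Because $T$ is ergodic with respect to the Gauss measure $d\mu_G(x)=\frac{1}{\ln 2}\frac{dx}{1+x}$, Birkhoff's theorem then yields, for $\mu_G$-a.e.\ $\alpha$ (hence a.e.\ $v\in\bS^1$),
$$
\frac1{|\ln\eps|}\int_\eps^{1/4}\phi_r(t\,|\,v)\frac{dr}{r}\longrightarrow \Phi(t)
$$
as $\eps\to 0^+$, where $\Phi(t)$ is the integral of the single cell function against the natural invariant measure on the space of $(\alpha_n,\eta_n,\rho_n)$. The independence of the limit from $v$ is a direct consequence of the ergodicity, and the Bourgain--Golse--Wennberg upper bound $\Phi_r(t)\le C'/t$ provides the uniform integrability needed to justify exchanging limits and integrals in this step.

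For the tail asymptotic $\Phi(t)\sim 1/(\pi^2 t)$ as $t\to+\infty$, I would observe that only the short-free-path branch of the cell function contributes to leading order, and on that branch $F(t;\cdot)$ reduces to a simple $1/t$ multiple of an explicit factor whose Gauss-measure integral can be computed. The coefficient $1/\pi^2$ should emerge as $(6/\pi^2)\cdot(1/6)$, with the factor $6/\pi^2=1/\zeta(2)$ being the density of primitive vectors $(p,q)\in\bZ^2$ that already governs the Bourgain--Golse--Wennberg lower bound, and the factor $1/6$ coming from the explicit evaluation of an elementary integral (involving $\arcsin$ and the strip widths $w(\omega,r)=\frac{1}{\sqrt{p^2+q^2}}-2r$). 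The principal obstacle is the first step: producing an explicit, $r$-manageable formula for $\phi_r(t\,|\,v)$ from the three-distance description and verifying that, under the logarithmic change of variable, it really is coded by the orbit of $\alpha$ under the Gauss map, with enough uniformity in $\alpha$ to legitimize the application of Birkhoff's theorem.
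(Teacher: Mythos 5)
Your outline follows essentially the same route as the paper's (and \cite{CagliotiFG2003}'s): encode $\phi_r(t|v)$ via the three-length theorem in terms of the continued-fraction data of the slope, observe that the $dr/r$ average is quasi-invariant under $(\alpha,r)\mapsto(T\alpha,r/\alpha)$ so that it becomes a Birkhoff-type sum for the Gauss map, and conclude by ergodicity that the limit exists a.e. and is independent of $v$, with the $1/(\pi^2 t)$ tail coming from an explicit computation involving the density $6/\pi^2$ of primitive lattice vectors. The only caveat is that the averaged quantity is not literally a function of finitely many Gauss-map iterates (the denominators $q_n$ depend on the whole past), so the paper replaces the direct Birkhoff argument by its ergodic lemma for finite windows of the errors $d_{N-j}/\eps$ plus an exponentially accurate truncation of $q_N d_{N-1}$ and a Cauchy-criterion identification of the limit — precisely the ``principal obstacle'' you flag at the end.
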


Shortly after \cite{CagliotiFG2003}  appeared, F. Boca and A. Zaharescu improved our method
and managed to compute $\Phi(t)$ explicitly for each $t\ge 0$. One should keep in mind that
their formula had been conjectured earlier by P. Dahlqvist \cite{Dahl}, on the basis of a formal
computation. 

\begin{Thm}[Boca-Zaharescu \cite{BocaZaha2007}]\label{T-BocaZaha}For each $t>0$
$$
\Phi_r(t)\to\Phi(t)=\int_t^\infty(s-t)g(s)ds
$$
in the limit as $r\to 0^+$, where
$$
g(s)=\tfrac{24}{\pi^2}\times\left\{
\begin{matrix} 1\quad &s\in[0,1]\,,\\
\tfrac1s+2\left(1-\tfrac1s\right)^2\ln(1-\tfrac1s)
	-\tfrac12\left|1-\tfrac2s\right|^2\ln|1-\tfrac2s|\quad &s\in(1,\infty)\,.
\end{matrix}\right.
$$
\end{Thm}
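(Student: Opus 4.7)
The plan is to refine the channel/strip techniques used for the Bourgain-Golse-Wennberg bounds and the Caglioti-Golse Ces\`aro average into an exact asymptotic evaluation of $\Phi_r(t)$ as $r\to 0^+$. By the dihedral symmetry of $\bZ^2$ and the invariance of $\mu_r$, I would reduce the problem to directions $v(\alpha)=(\cos\theta,\sin\theta)$ with slope $\alpha=\tan\theta\in(0,1)$, and disintegrate $\Phi_r(t)$ as an integral of $\phi_r(t|v(\alpha))$ against a smooth measure on $\alpha\in(0,1)$. Since rational directions contribute a set of measure zero in $\alpha$, only irrational $\alpha$ need be analyzed, and for each such $\alpha$ the free path length starting from a uniformly distributed $x\in Z_r/\bZ^2$ is exactly the first-hitting time of the linear flow on $\bT^2$ with slope $\alpha$ to the $r$-neighborhood of the origin.

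For this first-hitting problem the classical three-distance theorem applies: the set of return times (as the initial $x$ varies) takes at most three distinct values, and these are expressible in terms of the denominators $q_n=q_n(\alpha)$ of the continued-fraction convergents of $\alpha$, specifically the two consecutive ones with $q_n\le 1/r<q_{n+1}$. After the Boltzmann-Grad rescaling $\tau_r\mapsto r\tau_r$, the three rescaled return times together with the proportions of initial points producing each become smooth functions of the pair $(rq_n,rq_{n+1})$, or equivalently of $rq_n$ and the partial quotient $a_{n+1}=\lfloor q_{n+1}/q_n\rfloor$. I would therefore rewrite $\Phi_r(t)$ as an integral over irrational $\alpha$ of an explicit function of these continued-fraction data and then pass to the limit $r\to 0^+$ by invoking equidistribution of the pairs $(rq_n,rq_{n+1})$ with respect to an explicit density on the triangle $\{(q,q'):0<q,q'\le 1,\,q+q'>1\}$, a statement equivalent to the equidistribution of pairs of consecutive Farey fractions of order $N=\lfloor 1/r\rfloor$.

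Once the limit is obtained as an explicit double integral, the identity $\Phi(t)=\int_t^\infty (s-t)\,g(s)\,ds$ forces $g=\Phi''$, and the piecewise structure of $g$ with breakpoints at $s=1$ and $s=2$ arises from the combinatorial cases of the three-distance theorem, namely the values of the rescaled gap at which two of the three gap lengths coincide or one of them drops out. The main obstacle is the equidistribution statement for pairs of consecutive Farey fractions with the correct explicit density and with error bounds uniform in $t$, since this is what justifies exchanging the $r\to 0^+$ limit with the $\alpha$-integration; this input relies on sums of Kloosterman type (as suggested by the black/red figure at the end of the Bourgain-Golse-Wennberg argument) or, more conceptually, on the spectral gap for $SL_2(\bZ)$ acting on the modular surface, and it is where the genuinely analytic-number-theoretic content of the Boca-Zaharescu theorem is concentrated. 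The final identification of the explicit formula for $g(s)$ is then a direct, if tedious, computation with the resulting density.
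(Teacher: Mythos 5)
Your proposal follows essentially the same route that the paper attributes to Boca--Zaharescu and sketches in its Section 7: reduction by symmetry to slopes in $(0,1)$, the three-length theorem to express the per-direction distribution in terms of the denominators of the two adjacent Farey fractions of order $\cQ=[1/\eps]$ (rather than literally consecutive continued-fraction denominators, a point you should phrase in the Farey language throughout, since $q_{n+1}$ may exceed $1/r$), and then the Kloosterman-sum equidistribution of the pairs $(q/\cQ,\hat q/\cQ)$ over the triangle $\{0<x,y<1<x+y\}$ with density $6/\pi^2$. This is correct in outline and matches the paper's account, so no further comparison is needed.
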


\begin{figure}

\includegraphics[width=6.8cm]{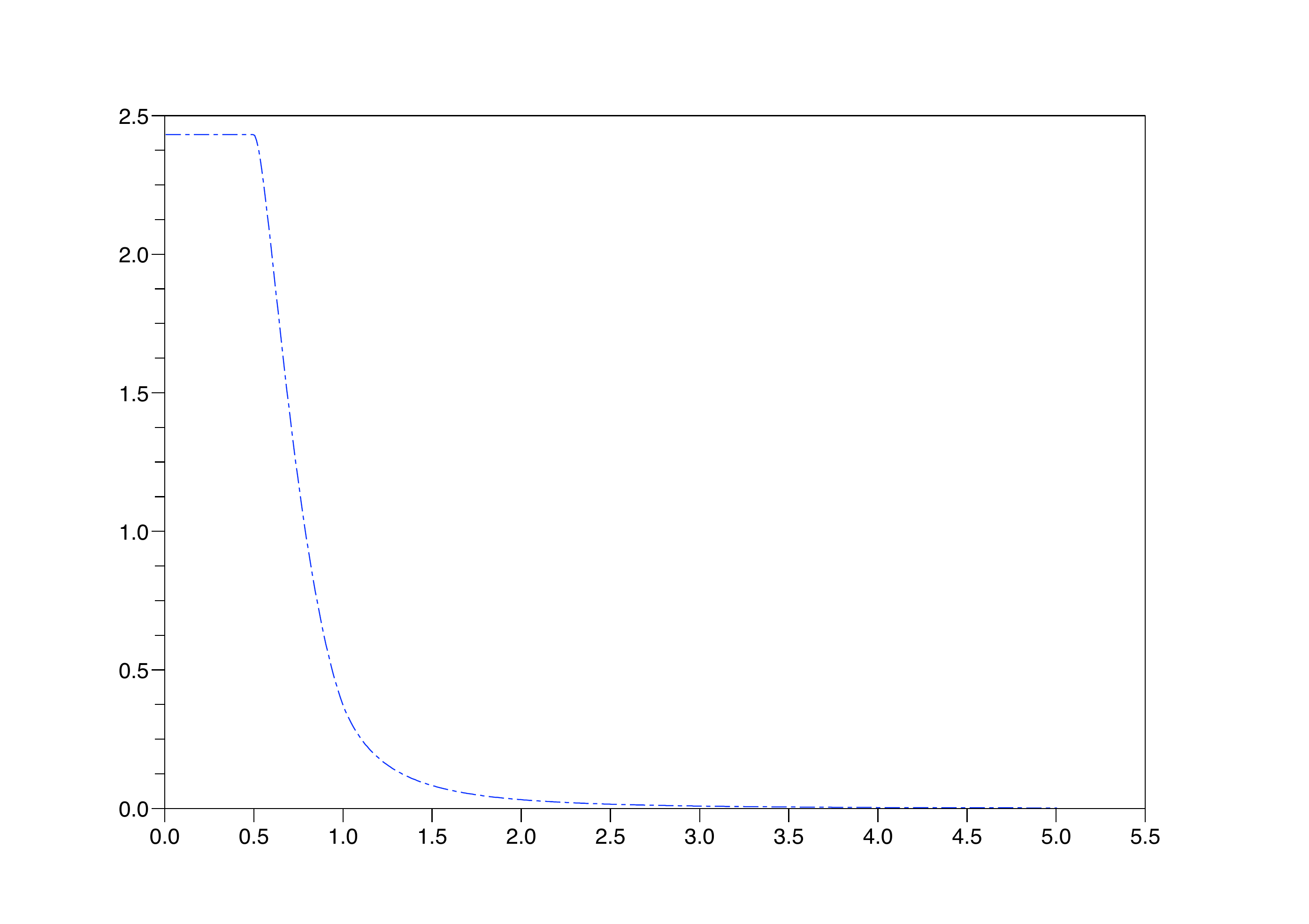}\includegraphics[width=6.8cm]{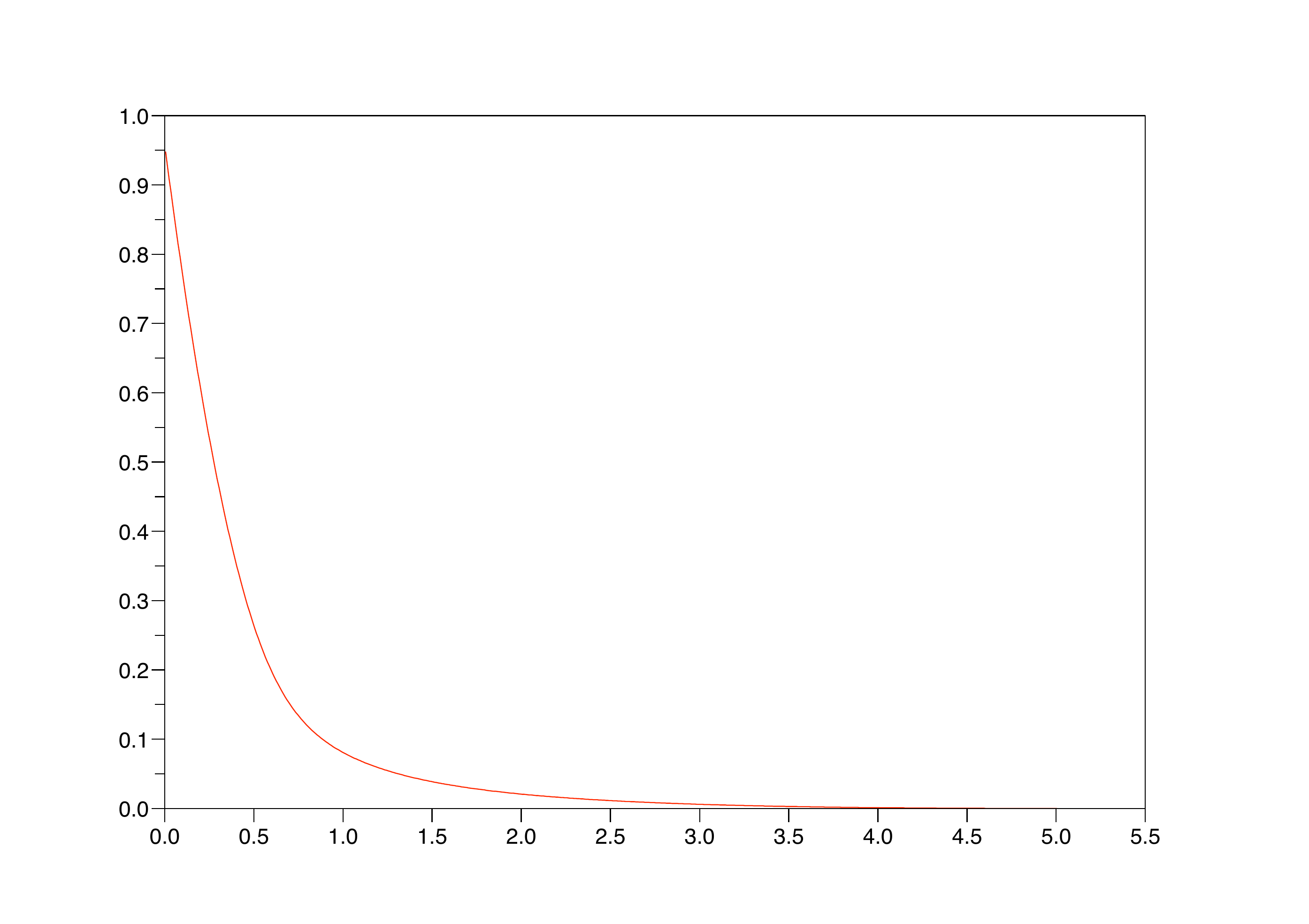}

\caption{Graph of $\Phi(t)$ (right) and $g(t)=\Phi''(t)$ (left)}

\end{figure}

In the sequel, we shall return to the continued and Farey fractions techniques used in the proofs
of these two results, and generalize them.


\section{A negative result for the Boltzmann-Grad limit\\ of the periodic Lorentz gas}


The material at our disposal so far provides us with a first answer --- albeit a negative one --- to the 
problem of determining the Boltzmann-Grad limit of the periodic Lorentz gas. 

For simplicity, we consider the case of a Lorentz gas enclosed in a periodic box 
$\bT^D=\bR^D/\bZ^D$ of unit side. The distance between neighboring obstacles is supposed 
to be $\eps^{D-1}$ with $0<\eps=1/n$, for $n\in\bN$ and $n>2$ so that $\eps<1/2$, while the 
obstacle radius is $\eps^D<\tfrac12\eps^{D-1}$ --- so that obstacles never overlap. Define
$$
Y_\eps=\{x\in\bT^D\,|\,\hbox{dist}(x,\eps^{D-1}\bZ^D)>\eps^D\}
	=\eps^{D-1}(Z_\eps/\bZ^D)\,.
$$

For each $f^{in}\in C(\bT^D\times\bS^{D-1})$, let $f_\eps$ be the solution of 
$$
\begin{aligned}
\d_tf_\eps+v\cdot\grad_xf_\eps=0\,,\quad &(x,v)\in Y_\eps\times\bS^{D-1}
\\
f_\eps(t,x,v)=f_\eps(t,x,\cR[n_x]v)\,,\quad &(x,v)\in\d Y_\eps\times\bS^{D-1}
\\
f_\eps\rstr_{t=0}=f^{in}\,,\quad&
\end{aligned}
$$
where $n_x$ is unit normal vector to $\d Y_\eps$ at the point $x$, pointing towards the interior of 
$Y_\eps$. 

By the method of characteristics
$$
f_\eps(t,x,v)=
f^{in}\left(\eps^{D-1}X_\eps\left(-\tfrac{t}{\eps^{D-1}};\tfrac{x}{\eps^{D-1}},v\right);
	V_\eps\left(-\tfrac{t}{\eps^{D-1}};\tfrac{x}{\eps^{D-1}},v\right)\right)
$$
where $(X_\eps,V_\eps)$ is the billiard flow in $Z_\eps$.

The main result in this section is the following

\begin{Thm}[Golse \cite{Golse2003, Golse2008}]
There exist initial data $f^{in}\equiv f^{in}(x)\in C(\bT^D)$ such that no subsequence of $f_\eps$ 
converges for the weak-* topology of $L^\infty(\bR_+\times\bT^D\times\bS^{D-1})$ to the solution 
$f$ of a linear Boltzmann equation of the form
$$
\begin{aligned}
(\d_t+v\cdot\grad_x)f(t,x,v)&=\si\int_{\bS^{D-1}}p(v,v')(f(t,x,v')-f(t,x,v))dv'
\\
f\rstr_{t=0}&=f^{in}\,,
\end{aligned}
$$
where $\si>0$ and $0\le p\in L^2(\bS^{D-1}\times\bS^{D-1})$ satisfies
$$
\int_{\bS^{D-1}}p(v,v')dv'=\int_{\bS^{D-1}}p(v',v)dv'=1\hbox{ a.e. in }v\in\bS^{D-1}\,.
$$
\end{Thm}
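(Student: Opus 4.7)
The plan is to argue by contradiction, based on the fundamental mismatch between the algebraic $1/t$ tail of the periodic free-path distribution (Bourgain-Golse-Wennberg and Theorem \ref{T-BocaZaha}) and the exponential decay $e^{-\si t}$ that any linear Boltzmann semigroup with bounded cross-section forces on its no-collision component.

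Take $f^{in}\in C(\bT^D)$ with $f^{in}\ge c>0$ and non-constant, e.g.\ $f^{in}=1+\eta$ with $\eta\in C(\bT^D)$, $0\le\eta\le 1/2$, non-constant. Suppose for contradiction that some subsequence $f_{\eps_k}\wto f$ weakly-$*$ in $L^\infty$, with $f$ the solution of the linear Boltzmann equation for some $\si>0$ and normalized $0\le p\in L^2$. The starting point is the pointwise identity
\[
f_\eps(t,x,v)\,\chi_\eps(t,x,v)=f^{in}(x-tv)\,\chi_\eps(t,x,v)\,,
\]
where $\chi_\eps(t,x,v):=\indc_{t<\eps^{D-1}\tau_\eps(x/\eps^{D-1},-v)}$ indicates that the backward billiard trajectory has struck no obstacle by time $t$; this holds because on the no-collision event that trajectory is the free-transport segment $s\mapsto(x-sv,v)$. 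Extracting a further subsequence, I may assume $\chi_{\eps_k}\wto\bar\chi$ weakly-$*$ in $L^\infty$; since $f^{in}(\cdot-tv)$ is a fixed bounded function, the right-hand side passes weakly-$*$ to $f^{in}(x-tv)\bar\chi$. Moreover, by Theorem \ref{T-BocaZaha},
\[
\iint_{\bT^D\times\bS^{D-1}}\chi_\eps(t,x,v)\,dxdv=|\bS^{D-1}|\Phi_\eps(t)\longrightarrow|\bS^{D-1}|\Phi(t)\sim\tfrac{C}{t}\quad\text{as }t\to+\infty\,,
\]
so that $\iint\bar\chi(t,\cdot,\cdot)\,dxdv\sim C/t$ at large $t$.

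The heart of the argument is then to show that under the assumption $f_{\eps_k}\wto f$ with $f$ Boltzmann, the Lorentz no-collision contribution must coincide with the zeroth Duhamel iterate of the Boltzmann solution:
\[
f^{in}(x-tv)\,\bar\chi(t,x,v)=e^{-\si t}\,f^{in}(x-tv)\quad\text{a.e.}
\]
Since $f^{in}\ge c>0$, this forces $\bar\chi=e^{-\si t}$ a.e., so $\iint\bar\chi\,dxdv=|\bS^{D-1}|e^{-\si t}$, in contradiction with the $\sim C/t$ bound once $t$ is large enough that $e^{-\si t}<C/t$. To justify the matching, I would expand both densities as Duhamel series indexed by the number of backward collisions: for Lorentz, $f_\eps=\sum_{n\ge 0}f_\eps^{(n)}$ with $f_\eps^{(0)}=f^{in}(x-tv)\chi_\eps$; for Boltzmann, $f=\sum_{n\ge 0}f^{(n)}$ with $f^{(0)}=e^{-\si t}f^{in}(x-tv)$ and $f^{(n)}$ given by iterated Duhamel. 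Both series converge absolutely on bounded $t$-intervals, the Lorentz one by the Gallavotti-type combinatorial bound already used in the Poisson derivation, the Boltzmann one because $\si<\infty$ and $\|K\|_{L^\infty\to L^\infty}\le 1$ by normalization; one then argues term-by-term weak-$*$ convergence $f_{\eps_k}^{(n)}\wto f^{(n)}$, and the $n=0$ identity follows.

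The main obstacle is precisely this last term-by-term identification: matching the two decompositions requires additional input beyond $f_{\eps_k}\wto f$, because the Lorentz and Boltzmann splittings of $f$ into ``transport plus gain'' need not coincide a priori. It is here that the hypothesis $p\in L^2$ is used essentially, via the Hilbert--Schmidt structure of the collision operator, to keep the Boltzmann iterates in a tractable functional space and make the comparison precise at each order $n$. Conversely, the heavy tail $\Phi(t)\sim C/t$ --- rooted in the channel structure exploited in the proof of the Bourgain-Golse-Wennberg lower bound --- is the quantitative obstruction already at the level $n=0$: no Markovian redistribution of the initial data can reproduce a $1/t$ survival probability.
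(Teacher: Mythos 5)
Your reduction to the identity $f^{in}(x-tv)\,\bar\chi(t,x,v)=e^{-\si t}f^{in}(x-tv)$ is the gap, and it cannot be closed along the route you sketch. Weak-$*$ convergence of the full densities $f_{\eps_k}\wto f$ carries no information about how any particular decomposition of $f_{\eps_k}$ converges: you do know that $f_{\eps_k}\chi_{\eps_k}\wto f^{in}(x-tv)\bar\chi$ (because that product equals a fixed bounded function times $\chi_{\eps_k}$), but nothing forces this limit to coincide with the zeroth Duhamel iterate of the Boltzmann solution. The two splittings into ``transport plus collided'' are defined by entirely different structures --- a microscopic exit time on one side, an abstract semigroup perturbation on the other --- and the term-by-term identification $f_{\eps_k}^{(n)}\wto f^{(n)}$ is essentially equivalent to the statement you are trying to disprove; the Hilbert--Schmidt property of $p$ does nothing to address this obstruction.

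The paper's proof avoids any such identity and keeps only the inequality: for $f^{in}\ge 0$ one has $f_\eps\ge f^{in}(x-tv)\chi_\eps\ge 0$, which survives the weak-$*$ limit as $f\ge f^{in}(x-tv)\Psi(t,v)$ and, by Cauchy--Schwarz in $v$ together with the Bourgain--Golse--Wennberg lower bound, yields $\|f(t,\cdot,\cdot)\|_{L^2}\ge (C_D/t)\,\|f^{in}\|_{L^2}$ for $t>1$. The contradiction on the Boltzmann side is \emph{not} the exponential decay of the ballistic term --- the full Boltzmann solution does not decay at all, it relaxes to its spatial average $\la f^{in}\ra>0$ --- but the Ukai--Point--Ghidouche spectral gap (this is where $p\in L^2$ is actually used), which gives $\|f(t)\|_{L^2}\le\|f^{in}\|_{L^1}+Ce^{-\g t}\|f^{in}\|_{L^2}$. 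One must then take $f^{in}$ to be a concentrated bump so that $\|f^{in}\|_{L^1}/\|f^{in}\|_{L^2}$ falls below $\sup_{t>1}(C_D/t-Ce^{-\g t})$; note that your choice $c\le f^{in}\le 3/2$ makes this ratio of order one and would not produce a contradiction even if the rest of your argument were repaired.
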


This theorem has the following important --- and perhaps surprising --- consequence: \textit{the 
Lorentz kinetic equation cannot govern the Boltzmann-Grad limit of the particle density in the case 
of a periodic distribution of obstacles}.

\begin{proof} 
The proof of the negative result above involves two different arguments:

a) the existence of a spectral gap for any linear Boltzmann equation, and

b) the lower bound for the distribution of free path lengths in the Bourgain-Golse-Wennberg theorem.

\smallskip
\noindent
\underline{Step 1: Spectral gap for the linear Boltzmann equation:}

With $\si>0$ and $p$ as above, consider the unbounded operator $A$ on $L^2(\bT^D\times\bS^{D-1})$
defined by
$$
(A\phi)(x,v)=
	-v\cdot\grad_x\phi(x,v)-\si\phi(x,v)+\si\int_{\bS^{D-1}}p(v,v')\phi(x,v')dv'\,,
$$
with domain
$$
D(A)=\{\phi\in L^2(\bT^D\times\bS^{D-1})\,|
	\,v\cdot\grad_x\phi\in L^2(\bT^D\times\bS^{D-1})\}\,.
$$
Then

\begin{Thm}[Ukai-Point-Ghidouche \cite{UkaiPointGhid1979}] 
There exists positive constants $C$ and $\g$ such that
$$
\|e^{tA}\phi-\la\phi\ra\|_{L^2(\bT^D\times\bS^{D-1})}
    \le Ce^{-\g t}\|\phi\|_{L^2(\bT^D\times\bS^{D-1})}\,,\quad t\ge 0\,,
$$
for each $\phi\in L^2(\bT^D\times\bS^{D-1})$, where
$$
\la\phi\ra=\tfrac1{|\bS^{D-1}|}\iint_{\bT^D\times\bS^{D-1}}\phi(x,v)dxdv\,.
$$
\end{Thm}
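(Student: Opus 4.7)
The plan is to reduce the problem to a uniform spectral gap estimate for the generator $A$ on the closed invariant subspace $\{\phi\in L^2:\la\phi\ra=0\}$, and then to apply the Gearhart--Pr\"uss theorem on a Hilbert space to extract exponential decay of the semigroup $e^{tA}$. Since $A$ commutes with translations in $x$, Fourier series in $x$ reduce the problem to a fibered family of operators $A_k$ on $L^2(\bS^{D-1})$ indexed by $k\in\bZ^D$:
\[
A_k\psi(v)=-2\pi i(k\cdot v)\psi(v)-\si\psi(v)+\si(K_p\psi)(v),
\]
where $K_p\psi(v)=\int_{\bS^{D-1}}p(v,v')\psi(v')\,dv'$ is Hilbert--Schmidt (because $p\in L^2$), hence compact on $L^2(\bS^{D-1})$, and doubly stochastic, so that $\|K_p\|_{L^2\to L^2}\le 1$.

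For the zero mode, $A_0=\si(K_p-I)$ is a compact perturbation of a scalar operator. The constants span a one-dimensional eigenspace of $K_p$ for the eigenvalue $1$; by a Krein--Rutman / Perron--Frobenius argument applied to the positivity-preserving, doubly-stochastic, compact operator $K_p$, every other spectral value $\mu$ of $K_p$ satisfies $|\mu|<1$, so the spectrum of $A_0$ minus $\{0\}$ lies in a half-plane $\{\mathrm{Re}\,z\le -\g_0\}$ for some $\g_0>0$. For $k\ne 0$, Weyl's theorem on the invariance of the essential spectrum under compact perturbation gives that the essential spectrum of $A_k$ is contained in $\{\mathrm{Re}\,z=-\si\}$. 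Any eigenvalue $\l$ of $A_k$ with $\mathrm{Re}\,\l\ge 0$ would satisfy the fixed-point identity $\psi=\si(\si+\l+2\pi i k\cdot v)^{-1}K_p\psi$; testing against $\psi$ and using $\|K_p\|\le 1$ forces $\psi$ into the eigenspace of $K_p$ for the eigenvalue $1$, i.e.\ $\psi$ is a constant function of $v$, but then $(k\cdot v)\psi$ is not a scalar multiple of $\psi$ for $k\ne 0$, a contradiction.

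The main obstacle is to upgrade these fiberwise observations into a gap that is \emph{uniform} in $k$. For the finitely many fibers with $0<|k|\le R$, each $A_k$ has only isolated eigenvalues of finite multiplicity in $\{\mathrm{Re}\,z>-\si\}$, accumulating nowhere except possibly on the essential-spectrum line $\mathrm{Re}\,z=-\si$, so the supremum of the real parts of these eigenvalues is strictly negative on each such fiber. For $|k|\to\infty$, a van der Corput / stationary-phase estimate on the multiplier $(\si+\l+2\pi i k\cdot v)^{-1}$, combined with the Hilbert--Schmidt structure of $K_p$, shows that $\|\si K_p(\si+\l+2\pi i k\cdot v)^{-1}\|_{L^2\to L^2}<1$ uniformly for $|k|\ge R_0$ and $\mathrm{Re}\,\l\ge -\g_1$, which rules out eigenvalues of $A_k$ in that region. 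Combining the two ranges of $k$ produces a spectral gap $\g>0$ uniform in $k$; together with standard resolvent bounds along the shifted imaginary axis, the Gearhart--Pr\"uss theorem on Hilbert spaces then delivers the desired exponential decay $\|e^{tA}\phi-\la\phi\ra\|_{L^2}\le Ce^{-\g t}\|\phi\|_{L^2}$.
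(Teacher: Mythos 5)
First, a point of reference: the paper itself gives no proof of this statement --- it is quoted from Ukai--Point--Ghidouche and explicitly ``taken for granted'' --- so there is no internal argument to compare yours against. Your strategy (Fourier decomposition in $x$, reduction to the fibered operators $A_k=-2\pi i(k\cdot v)-\si+\si K_p$ on $L^2(\bS^{D-1})$, compactness of $K_p$, sublevel-set estimates to control large $|k|$ and large $|\mathrm{Im}\,\l|$, then Gearhart--Pr\"uss) is a sound and essentially standard route to such spectral gaps: the semigroup is contractive since $A$ is dissipative, for $\mathrm{Re}\,\l>-\si$ the spectrum of $A_k$ consists of eigenvalues by analytic Fredholm theory, and your Hilbert--Schmidt bound on $K_p(\l+\si+2\pi ik\cdot v)^{-1}$ does go to $0$ uniformly as $|k|\to\infty$ by absolute continuity of $\int|p(v,v')|^2dv\,dv'$ against the shrinking slab $\{|\mathrm{Im}\,\l+2\pi k\cdot v'|\le M\}$.

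The genuine gap is in the zero mode. Positivity, double stochasticity and compactness of $K_p$ do \emph{not} imply that $1$ is a simple eigenvalue with the remaining spectrum in the open unit disk; Krein--Rutman requires irreducibility. Concretely, split $\bS^{D-1}=S_1\cup S_2$ into two sets of equal normalized measure and take $p(v,v')=2\bigl(\indc_{S_1}(v)\indc_{S_1}(v')+\indc_{S_2}(v)\indc_{S_2}(v')\bigr)$: this $p$ is nonnegative, bounded (hence in $L^2$) and doubly stochastic, yet $K_p$ has the two-dimensional eigenspace spanned by $\indc_{S_1}$ and $\indc_{S_2}$ for the eigenvalue $1$, so $A_0=\si(K_p-I)$ has a two-dimensional kernel and $e^{tA}\phi$ cannot converge to $\la\phi\ra$. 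In other words the statement is false under the hypotheses as literally written, and your proof must invoke an irreducibility/primitivity assumption on $p$ (satisfied, e.g., when $p>0$ a.e., as for the hard-sphere kernel in the original reference) precisely at the Krein--Rutman step; as it stands, that step is unjustified.

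Two smaller repairs are needed. For $k\ne0$ and $\mathrm{Re}\,\l=0$, equality in your test does not place $\psi$ in the eigenspace of $K_p$ for the eigenvalue $1$ --- it only makes $\psi$ a norm-maximizing vector, i.e.\ an eigenvector of $K_p^*K_p$ (in the block example above, $\indc_{S_1}-\indc_{S_2}$ satisfies $\|K_p\psi\|=\|\psi\|$ with $K_p\psi=-\psi$). The clean conclusion is rather that equality forces $(\mathrm{Im}\,\l+2\pi k\cdot v)\,K_p\psi(v)=0$ a.e.; since the level set $\{k\cdot v=-\mathrm{Im}\,\l/2\pi\}$ is null for $k\ne0$, this gives $K_p\psi=0$ and hence $\psi=0$. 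Also, for each fixed $k$ with $0<|k|\le R$, ruling out eigenvalues accumulating at the line $\mathrm{Re}\,z=-\si$ is not enough: you must also exclude eigenvalues drifting to $\pm i\infty$ while staying near the imaginary axis; the same dominated-convergence bound on the Hilbert--Schmidt norm as $|\mathrm{Im}\,\l|\to\infty$ at fixed $k$ supplies this.
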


Taking this theorem for granted, we proceed to the next step in the proof, leading to an explicit lower 
bound for the particle density.

\smallskip
\noindent
\underline{Step 2: Comparison with the case of absorbing obstacles}

Assume that $f^{in}\equiv f^{in}(x)\ge 0$ on $\bT^D$. Then
$$
f_\eps(t,x,v)\ge g_\eps(t,x,v)=f^{in}(x-tv)\indc_{Y_\eps}(x)
	\indc_{\eps^{D-1}\tau_\eps(x/\eps^{D-1},v)>t}\,.
$$
Indeed, $g$ is the density of particles with the \textit{same} initial data as $f$, but assuming that each 
particle \textit{disappear} when colliding with an obstacle instead of being reflected.

Then
$$
\indc_{Y_\eps}(x)\to 1\hbox{ a.e. on $\bT^D$ and }|\indc_{Y_\eps}(x)|\le 1
$$
while, after extracting a subsequence if needed,
$$
\indc_{\eps^{D-1}\tau_\eps(x/\eps^{D-1},v)>t}
	\rightharpoonup\Psi(t,v)\hbox{ in }L^\infty(\bR_+\times\bT^D\times\bS^{D-1})
	\hbox{ weak-*}\,.
$$
Therefore,  if $f$ is a weak-* limit point of $f_\eps$ in $L^\infty(\bR_+\times\bT^D\times\bS^{D-1})$ as 
$\eps\to 0$
$$
f(t,x,v)\ge f^{in}(x-tv)\Psi(t,v)\hbox{ for a.e. }(t,x,v)\,.
$$

\smallskip
\noindent
\underline{Step 3: using the lower bound on the distribution of $\tau_r$}

Denoting by $dv$ the uniform probability measure on $\bS^{D-1}$
$$
\begin{aligned}
\tfrac1{|\bS^{D-1}|}\iint_{\bT^D\times\bS^{D-1}}&f(t,x,v)^2dxdv
\\
&\ge
\tfrac1{|\bS^{D-1}|}\iint_{\bT^D\times\bS^{D-1}}f^{in}(x-tv)^2\Psi(t,v)^2dxdv
\\
&=
\int_{\bT^D}f^{in}(y)^2dy\tfrac1{|\bS^{D-1}|}\int_{\bS^{D-1}}\Psi(t,v)^2dv
\\
&\ge
\|f^{in}\|^2_{L^2(\bT^D)}\left(\tfrac1{|\bS^{D-1}|}\int_{\bS^{D-1}}\Psi(t,v)dv\right)^2
\\
&=
\|f^{in}\|^2_{L^2(\bT^D)}\Phi(t)^2\,.
\end{aligned}
$$
By the Bourgain-Golse-Wennberg lower bound on the distribution $\Phi$ of free path lengths
$$
\|f(t,\cdot,\cdot)\|_{L^2(\bT^D\times\bS^{D-1})}\ge 
	\frac{C_D}{t}\|f^{in}\|_{L^2(\bT^D)}\,,\quad t>1\,.
$$
On the other hand, by the spectral gap estimate, if $f$ is a solution of the linear Boltzmann equation, 
one has
$$
\|f(t,\cdot,\cdot)\|_{L^2(\bT^D\times\bS^{D-1})}
	\le\int_{\bT^D}f^{in}(y)dy+Ce^{-\g t}\|f^{in}\|_{L^2(\bT^D)}
$$
so that
$$
\frac{C_D}{t}\le\frac{\|f^{in}\|_{L^1(\bT^D)}}{\|f^{in}\|_{L^2(\bT^D)}}+Ce^{-\g t}
$$
for each $t>1$.

\smallskip
\noindent
\underline{Step 4: choice of initial data}

Pick $\rho$ to be a bump function supported near $x=0$ and such that
$$
\int\rho(x)^2dx=1\,.
$$
Take $f^{in}$ to be $x\mapsto\l^{D/2}\rho(\l x)$ periodicized, so that
$$
\int_{\bT^D}f^{in}(x)^2dx=1\,,
	\hbox{ while }\int_{\bT^D}f^{in}(y)dy=\l^{-D/2}\int\rho(x)dx\,.
$$
For such initial data, the inequality above becomes
$$
\frac{C_D}{t}\le\l^{-D/2}\int\rho(x)dx+Ce^{-\g t}\,.
$$
Conclude by choosing $\l$ so that
$$
\l^{-D/2}\int\rho(x)dx<\sup_{t>1}\left(\frac{C_D}{t}-Ce^{-\g t}\right)>0\,.
$$
\end{proof}

\smallskip
\noindent
\textbf{Remarks:}

\noindent
1) The same result (with the same proof) holds for any smooth obstacle shape included in a shell 
$$
\{x\in\bR^D\,|\,C\eps^D<\hbox{dist}(x,\eps^{D-1}\bZ^D)<C'\eps^D\}\,.
$$

\noindent
2) The same result (with same proof) holds if the specular reflection boundary condition is replaced 
by more general boundary conditions, such as absorption (partial or complete) of the particles at 
the boundary of the obstacles, diffuse reflection, or any convex combination of specular and diffuse 
reflection --- in the classical kinetic theory of gases, such boundary conditons are known as 
``accomodation boundary conditions".

\noindent
3) But introducing even the smallest amount of stochasticity in any periodic configuration of 
obstacles can again lead to a Boltzmann-Grad limit that is described by the Lorentz kinetic model.

\noindent
\textbf{Example.} (Wennberg-Ricci \cite{WennRicci2004}) In space dimension $2$,  take obstacles 
that are disks of radius $r$ centered at the vertices of the lattice $r^{1/(2-\eta)}\bZ^2$, assuming that 
$0<\eta<1$. Santal\'o's formula suggests that the free-path lengths scale like $r^{\eta/(2-\eta)}\to 0$.

Suppose the obstacles are removed independently with large probability --- specifically, with probability 
$p=1-r^{\eta/(2-\eta)}$. In that case, the Lorentz kinetic equation governs the 1-particle density in the 
Boltzmann-Grad limit as $r\to 0^+$.

\smallskip
Having explained why neither the Lorentz kinetic equation nor any linear Boltzmann equation can 
govern the Boltzmann-Grad limit of the periodic Lorentz gas, in the remaining part of these notes, we 
build the necessary material used in the description of that limit.


\section{Coding particle trajectories with continued fractions}


With the Bourgain-Golse-Wennberg lower bound for the distribution of free path lengths in the 
periodic Lorentz gas, we have seen that the $1$-particle phase space density is bounded below 
by a quantity that is incompatible with the spectral gap of any linear Boltzmann equation --- in 
particular with the Lorentz kinetic equation.

In order to further analyze the Boltzmann-Grad limit of the periodic Lorentz gas, we cannot content 
ourselves with even more refined estimates on the distribution of free path lengths, but we need a 
convenient way to encode particle trajectories.

More precisely, the two following problems must be answered somehow: 

\smallskip
\noindent
\underline{First problem:} for a particle leaving the surface of an obstacle in a given direction, to 
find the position of its next collision with an obstacle;

\smallskip
\noindent
\underline{Second problem:} average --- in some sense to be defined --- in order to eliminate the 
direction dependence.

\smallskip
From now on, our discussion is limited to the case of spatial dimension $D=2$, as we shall use 
continued fractions, a tool particularly well adapted to understanding the rational approximation 
of real numbers. Treating the case of a space dimension $D>2$ along the same lines would require
a better understanding of \textit{simultaneous} rational approximation  of $D-1$ real numbers (by 
$D-1$ rational numbers with the same denominator), a notoriously more difficult problem.

\smallskip
We first introduce some basic geometrical objects used in coding particle trajectories.

The first such object is the notion of \textit{impact parameter}.

For a particle with velocity $v\in\bS^1$ located at the position $x$ on the surface of an obstacle 
(disk of radius $r$), we define its impact parameter $h_r(x,v)$  by the formula
$$
h_r(x,v)=\sin(\widehat{n_x,v})\,.
$$
In other words, the absolute value of the impact parameter $h_r(x,v)$ is the distance of the center 
of the obstacle to the infinite line of direction $v$ passing through $x$ .

Obviously
$$
h_r(x,\cR[n_x]v)=h_r(x,v)
$$
where we recall the notation $\cR[n]v=v-2v\cdot nn$.

\begin{figure}

\includegraphics[width=10.0cm]{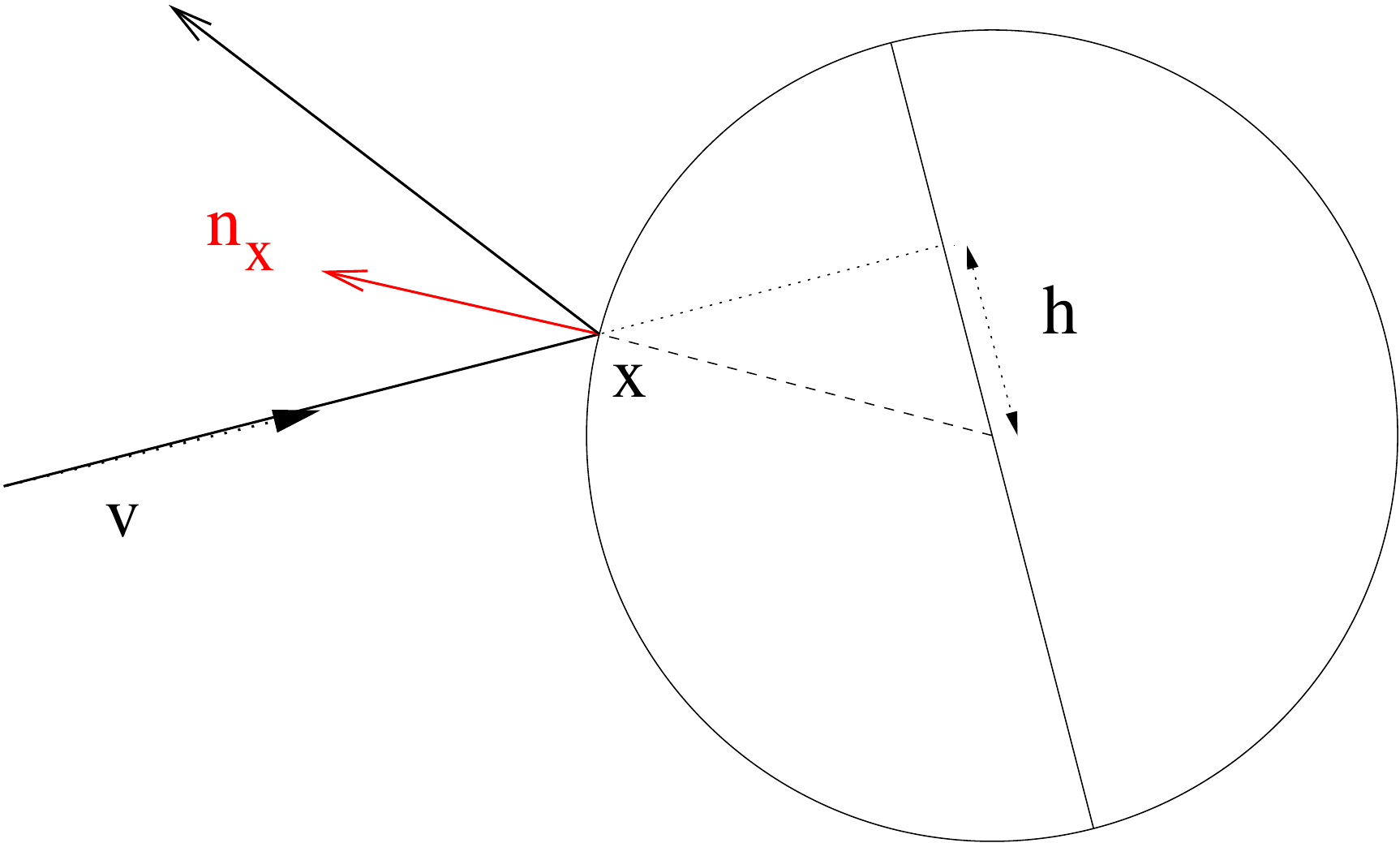}

\caption{The impact parameter $h$ corresponding with the collision point $x$ at the surface of an 
obstacle, and a direction $v$}

\end{figure}

The next important object in computing particle trajectories in the Lorentz gas is the \textit{transfer 
map}.

For a particle leaving the surface of an obstacle in the direction $v$ and with impact parameter $h'$, 
define
$$
T_r(h',v)=(s,h)\hbox{ with }
\left\{\begin{array}{l}
s=\hbox{ $r\times$ distance to the next collision point}
\\
h=\hbox{ impact parameter at the next collision}
\end{array}\right.
$$

Particle trajectories in the Lorentz gas are completely determined by the transfer map $T_r$ and its 
iterates.

Therefore, a first step in finding the Boltzmann-Grad limit of the periodic, $2$-dimensional Lorentz 
gas, is to compute the limit of $T_r$ as $r\to 0^+$, in some sense that will be explained later.

\begin{figure}

\includegraphics[width=10.0cm]{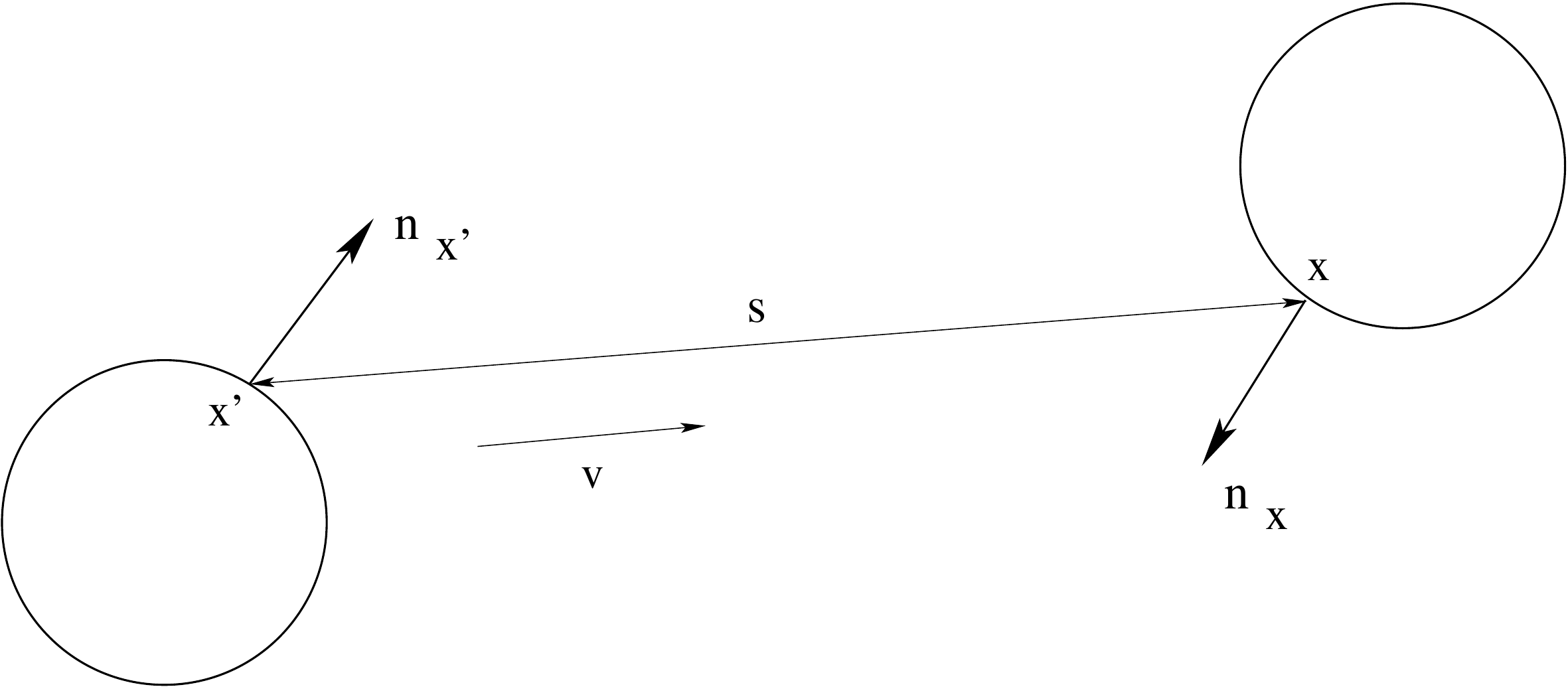}

\caption{The transfer map}

\end{figure}

At first sight, this seems to be a desperately hard problem to solve, as particle trajectories in the 
periodic Lorentz gas depend on their directions and the obstacle radius in the strongest possible 
way. Fortunately, there is an interesting property of rational approximation on the real line that 
greatly reduces the complexity of this problem.

\smallskip
\noindent
\textbf{The 3-length theorem}

\underline{\sc Question (R. Thom, 1989):} on a flat $2$-torus with a disk removed, consider a linear 
flow with irrational slope. What is the longest orbit?

\begin{Thm}[Blank-Krikorian \cite{BlankKriko1993}]
On a flat $2$-torus  with a segment removed, consider a linear flow with irrational slope $0<\a<1$.
The orbits of this flow have at most $3$ different lengths --- exceptionally $2$, but generically $3$. 
Moreover, in the generic case where these orbits have exactly $3$ different lengths, the length of
the longest orbit is the sum of the two other lengths.
\end{Thm}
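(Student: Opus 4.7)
The plan is to reduce the statement to the classical three-distance (or three-gap) theorem for irrational rotations on the circle, by viewing the removed segment as a Poincar\'e section for the linear flow. After an $SL_2(\bZ)$ change of coordinates on $\bT^2$ together with a rescaling, I would assume that the removed segment has the form $I=[0,a]\times\{0\}$. For the flow with direction $(\cos\th,\sin\th)$ satisfying $\tan\th=\a$, the first return map to the full horizontal geodesic $\bT^1\times\{0\}$ is the rigid rotation $R_\b:s\mapsto s+\b\mod 1$, with $\b=1/\a\mod 1$ (irrational since $\a$ is), and the corresponding flow time is the constant $c=1/\sin\th$. The length of the orbit of the linear flow starting at $(s,0)\in I$ and returning to $I$ for the first time is therefore $c\cdot n(s)$, where $n(s)\in\bN$ is the first return time of the $R_\b$-orbit of $s$ to the subinterval $[0,a]$.

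It then suffices to show that $n:I\to\bN$ takes at most three distinct values, and that if three values $n_1<n_2<n_3$ occur, then $n_3=n_1+n_2$; this is an equivalent form of the three-distance theorem. The proof I would carry out is as follows. Since $R_\b$ is an isometry, each level set $A_k=\{s\in I:n(s)=k\}$ is a subinterval of $I$, and the associated Kakutani--Rokhlin towers
\begin{equation*}
\mathcal{T}_k=\bigsqcup_{j=0}^{k-1}R_\b^j(A_k),\qquad k\in n(I),
\end{equation*}
disjointly partition $\bT^1$ up to a null set. The interior endpoints of the pieces $A_k$ inside $I$ are the points $s\in\mathring I$ with $R_\b^{n(s)}(s)\in\{0,a\}$; equivalently, they are pre-images $R_\b^{-k}(0)$ or $R_\b^{-k}(a)$ where $k$ is the smallest positive integer for which such a pre-image lands in $I$. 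A short argument using the minimality in the definition of $n$ shows that there can be at most one such $s$ for each of the two endpoints of $I$, so $I$ is split into at most three subintervals $A_k$.

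The additive identity $n_3=n_1+n_2$ in the generic three-value case follows from comparing two partitions of length-$a$ intervals related by a rigid shift: the top slices $R_\b^{n_i-1}(A_{n_i})$ disjointly partition $R_\b^{-1}(I)=I-\b\mod 1$, whereas the bases $A_{n_i}$ partition $I$. Matching these two partitions across the shift by $-\b$, and tracking which of the two endpoints $0,a$ of $I$ each piece meets, forces the tower of largest height to have height equal to the sum of the heights of the other two, in the spirit of the standard proof of the three-distance theorem. The hard part of the argument is the at-most-three bound itself --- equivalently, showing that the partition $\{A_k\}$ of $I$ is not further refined by deeper iterates of $R_\b^{-1}$. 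This Diophantine core is most cleanly established either by induction on the Stern--Brocot / Ostrowski numeration built from the convergents of $\b$, or via a self-similar renormalization argument in which the induced rotation on the union of two adjacent $A_k$'s is again an irrational rotation on a shorter arc, enabling a descent. The remaining steps --- the Poincar\'e-section reformulation and the additive identity --- are then formal consequences of this core.
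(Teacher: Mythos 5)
The paper offers no proof of this statement --- it is imported verbatim from Blank--Krikorian \cite{BlankKriko1993} and used as a black box --- so there is no internal argument to compare yours against. Your strategy is nonetheless the standard and correct one, and it matches the way the paper itself links the Blank--Krikorian theorem to the classical three-length theorem stated immediately afterwards: suspending the rotation $R_\b$ over the full closed geodesic containing the slit, with constant roof function $c=1/\sin\th$, identifies the orbit lengths with $c\,n(s)$, where $n(s)$ is the first entrance time of the $R_\b$-orbit of $s$ into the subinterval, and the claim becomes Slater's three-value return-time theorem. Note also that your ``short argument'' already settles the at-most-three bound completely: an interior point $s$ of $I$ at which the first-return map $T$ is discontinuous must satisfy $T(s)\in\{0,a\}$, and injectivity of $T$ allows at most one such $s$ per endpoint, hence at most two cuts and three pieces. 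The closing hedge about still needing an Ostrowski or renormalization descent for the ``Diophantine core'' is therefore unnecessary --- that core is already closed by the injectivity argument.

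Two repairs would make the write-up airtight. First, an $SL_2(\bZ)$ change of coordinates cannot put an arbitrary segment into horizontal position; in the paper's own application the slit is orthogonal to the flow direction and therefore has irrational slope. For a general transversal segment $p(u)=p_0+ue$, write the return condition as $p(u)+T(u)v=p(u')+m$ with $m\in\bZ^2$; linear independence of $e$ and $v$ shows that $T(u)$ and $u'-u$ are determined by the lattice vector $m$ alone, hence constant on each continuity piece, after which the counting of pieces reduces to the same injectivity argument as in the horizontal case. Second, the additive relation is the one step left genuinely compressed: the bases $A_{n_1},A_{n_2},A_{n_3}$ and the tops $T(A_{n_1}),T(A_{n_2}),T(A_{n_3})$ are two three-interval partitions of $I$, each top a translate of its base by $n_i\b$ modulo $1$, and since these translations are pairwise distinct mod $1$ the endpoint bookkeeping does force the tallest tower's height to be the sum of the other two --- but this is exactly where the three heights interact and it should be written out rather than waved at. Both points are routine, so I consider the proposal correct.
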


These lengths are expressed in terms of the continued fraction expansion of the slope $\a$. 

\begin{figure}

\includegraphics[width=9.0cm]{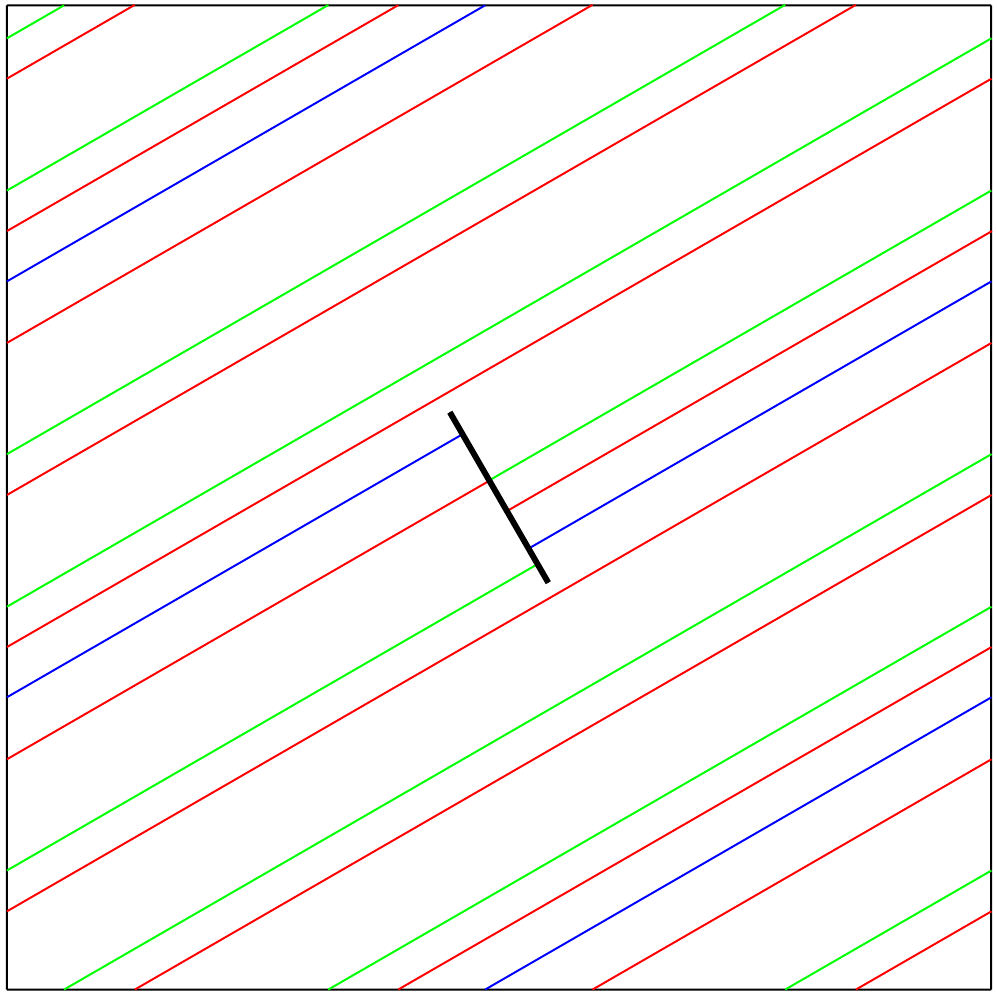}

\caption{Three types of orbits: the blue orbit is the shortest, the red one is the longest, while the
green one is of the intermediate length.The black segment removed is orthogonal to the direction
of the trajectories.}

\end{figure}

Together with E. Caglioti in \cite{CagliotiFG2003}, we proposed the idea of using the Blank-Krikorian 
$3$-length theorem to analyze particle paths in the $2$-dimensional periodic Lorentz gas.

More precisely, orbits with the same lengths in the Blank-Krikorian theorem define a $3$-term partition 
of the flat $2$-torus into parallel strips, whose lengths and widths are computed exactly in terms of the 
continued fraction expansion of the slope (see Figure 17\footnote{Figures 16 and 17 are taken from
a conference by E. Caglioti at the Centre International de Rencontres Math\'ematiques, Marseilles,
February 18-22, 2008.}.)

The collision pattern for particles leaving the surface of one obstacle --- and  therefore the transfer map 
--- can be explicitly determined in this way, for a.e. direction $v\in\bS^1$.

\begin{figure}

\includegraphics[width=10.0cm]{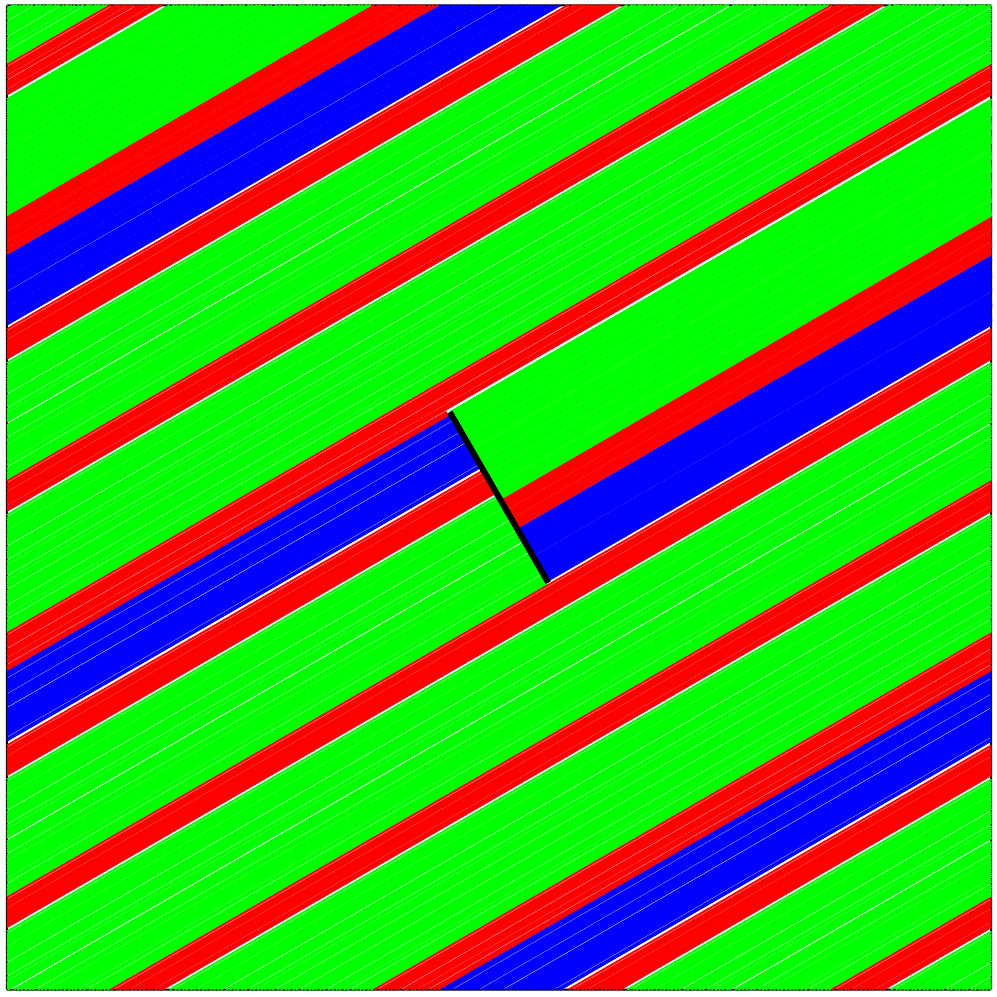}

\caption{The $3$-term partition. The shortest orbits are collected in the blue strip, the longest orbits
in the red strip, while the orbits of intermediate length are collected in the green strip.}

\end{figure}

In fact, there is a classical result known as the $3$-length theorem, which is related to Blank-Krikorian's. 
Whereas the Blank-Krikorian theorem considers a linear flow with irrational slope on the flat $2$-torus, 
the classical $3$-length theorem is a statement about rotations of an irrational angle --- i.e. about
sections of the linear flow with irrational slope.

\begin{Thm}[$3$-length theorem]
Let $\a\in(0,1)\setminus\bQ$ and $N\ge 1$. The sequence 
$$
\{n\a\,|\,0\le n\le N\}
$$
defines $N+1$ intervals on the circle of unit length $\simeq\bR/\bZ$. The lengths of these intervals take 
at most $3$ different values.
\end{Thm}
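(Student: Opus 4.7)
The plan is to exploit the rotation $R : x \mapsto x + \a \bmod 1$, which is an isometry of $\bR/\bZ$, and show that it ``almost permutes'' the $N+1$ gaps carved out by $S_N := \{P_0,\ldots,P_N\}$ where $P_j = j\a \bmod 1$. The key observation is that $R(S_N) = \{P_1,\ldots,P_{N+1}\}$, so the symmetric difference $S_N \triangle R(S_N) = \{P_0, P_{N+1}\}$ consists of only two points, meaning that only a few gaps can behave ``exceptionally'' under the induced action on gaps. Since $R$-orbits of gaps have constant length, bounding the number of such orbits by $3$ will yield the theorem.

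First I would check that, for any gap $G = (P_i, P_{i'})$ of $S_N$ with $P_{i'}$ the clockwise neighbor of $P_i$, the rotated arc $R(G) = (P_{i+1}, P_{i'+1})$ is again a gap of $S_N$ \emph{unless} either (a) one of $i, i'$ equals $N$, so that $R(G)$ has $P_{N+1} \notin S_N$ as an endpoint, or (b) the point $P_0$ lies in the interior of $R(G)$. I will call such $G$ an \emph{end gap}. Case (a) contributes at most two end gaps (the two gaps incident to $P_N$); case (b) contributes at most one, since only one gap of $R(S_N)$ can contain $P_0$ in its interior. So there are at most $3$ end gaps. A symmetric argument involving $R^{-1}$ bounds by $3$ the number of \emph{start gaps}, meaning those $G$ for which $R^{-1}(G)$ is not a gap of $S_N$; the obstructions are that $G$ has $P_0$ as an endpoint, or $R^{-1}(G)$ contains $P_N$ in its interior.

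Iterating $R$ on any non-end gap produces another gap of the same length, so continuing until we hit an end gap partitions the $N+1$ gaps into maximal $R$-chains, each beginning at a unique start gap and terminating at a unique end gap. Since all gaps in a chain share a common length and the number of chains equals the number of end gaps, we obtain at most $3$ distinct gap lengths. The main point to verify carefully is the assertion that the only obstructions to $R(G)$ being a gap of $S_N$ are (a) and (b), which I expect to follow directly from the identity $S_N \triangle R(S_N) = \{P_0, P_{N+1}\}$ together with the fact that $R(G)$ is automatically a gap of $R(S_N)$; once this is in hand, the counting of exceptional gaps and the orbit partition are essentially bookkeeping.
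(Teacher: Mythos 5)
Your argument is correct, and it is worth noting that the paper itself gives no proof of this theorem: it is stated as a classical result with references to S\'os and to Sur\'anyi et al., and the surrounding machinery of the paper (the Blank--Krikorian theorem, the widths $A$, $B$ and lengths $Q$, $Q'$ of the three strips) is developed through continued fractions. Your proof is the ``dynamical'' one: the rotation $R$ is an isometry carrying the gap set of $S_N$ to that of $R(S_N)$, and since $S_N\triangle R(S_N)=\{P_0,P_{N+1}\}$ (which uses the irrationality of $\a$ to guarantee the $P_j$ are distinct and $P_{N+1}\notin S_N$), at most three gaps can fail to be mapped to gaps of $S_N$ --- the two gaps adjacent to $P_N$, whose images acquire the endpoint $P_{N+1}\notin S_N$, and the unique gap whose image contains $P_0$ in its interior. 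The verification you defer does go through exactly as you expect: since $\operatorname{int}(R(G))$ meets no point of $R(S_N)$ and $S_N\subset R(S_N)\cup\{P_0\}$, the only possible intruder in $\operatorname{int}(R(G))$ is $P_0$. The one step you should make explicit is why every forward $R$-chain of gaps actually terminates at an end gap rather than cycling: if $R^k(G)=G$ as arcs for some $k\ge 1$, then $k\a\in\bZ$, contradicting irrationality; combined with finiteness and the injectivity of $R$ on gaps, this shows the gaps decompose into disjoint paths, each ending at a distinct end gap, so there are at most three orbits and hence at most three lengths. Compared with the continued-fraction route used elsewhere in the paper, your argument is shorter and more elementary, but it is purely qualitative: it does not identify the three lengths, whereas the paper's subsequent analysis of collision patterns requires the explicit expressions of these lengths in terms of the convergents $p_n/q_n$ and errors $d_n$ of $\a$.
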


This striking result was conjectured by H. Steinhaus, and proved in 1957 independently by P. Erd\"os,
G. Hajos, J. Suranyi, N. Swieczkowski, P. Sz\"usz --- reported in \cite{Suranyi1958}, and by Vera S\`os 
\cite{Sos1958}.

\begin{figure}

\includegraphics[width=5.05cm]{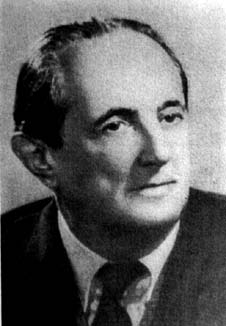}\includegraphics[width=5.60cm]{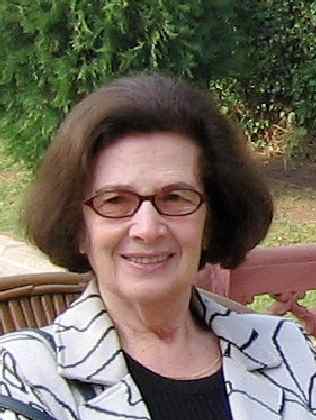}

\caption{Left: Hugo D. Steinhaus (1887-1972); right: Vera T. S\'os}

\end{figure}

\smallskip
As we shall see, the $3$-length theorem (in either form) is the key to encoding particle paths in the 
$2$-dimensional Lorentz gas. We shall need explicitly the formulas giving the lengths and widths 
of the $3$ strips in the partition of the flat $2$-torus defined by the Blank-Krikorian theorem. As this 
is based on the continued fraction expansion of the slope of the linear flow considered in the
Blank-Krikorian theorem, we first recall some basic facts about continued fractions. An excellent
reference for more information on this subject is \cite{Khinchin1964}.

\smallskip
\noindent
\textbf{Continued fractions}

Assume $0<v_2<v_1$ and set $\a=v_2/v_1$, and consider the continued fraction expansion of $\a$: 
$$
\a=[0;a_0,a_1,a_2,\ldots]
	=\frac1{\displaystyle a_0+\frac1{\displaystyle a_1+\ldots}}\,.
$$
Define the sequences of convergents $(p_n,q_n)_{n\ge 0}$ --- meaning that
$$
\frac{p_{n+2}}{q_{n+2}}=[0;a_0,\ldots,a_n]\,,\quad n\ge 2
$$
--- by the recursion formulas 
$$
\begin{array}{ll}
p_{n+1}=a_np_n+p_{n-1}\,,\quad &p_0=1\,,\,\,p_1=0\,,
\\
q_{n+1}=a_nq_n+q_{n-1}\quad &q_0=0\,,\,\,q_1=1\,.
\end{array}
$$
Finally, let $d_n$ denote the sequence of errors
$$
d_n=|q_n\alpha-p_n|=(-1)^{n-1}(q_n\a-p_n)\,,\quad n\ge 0\,,
$$
so that
$$
d_{n+1}=-a_nd_n+d_{n-1}\,,\quad d_0=1\,,\,\,d_1=\a\,.
$$

The sequence $d_n$ is decreasing and converges to $0$, at least exponentially fast. (In fact,
the irrational number for which the rational approximation by continued fractions is the slowest 
is the one for which the sequence of denominators $q_n$ have the slowest growth, i.e. the
golden mean
$$
\th=[0;1,1,\ldots]=\frac1{1+\displaystyle\frac1{1+\ldots}}=\frac{\sqrt{5}-1}2\,.
$$
The sequence of errors associated with $\th$ satisfies $d_{n+1}=-d_n+d_{n-1}$ for each 
$n\ge 1$ with $d_0=1$ and $d_1=\th$, so that $d_n=\th^n$ for each $n\ge 0$.)

By induction, one verifies that
$$
q_nd_{n+1}+q_{n+1}d_n=1\,,\quad n\ge 0\,.
$$
\textbf{Notation:} we write $p_n(\a),q_n(\a),d_n(\a)$ to indicate the dependence of these quantities
in $\a$.

\smallskip
\noindent
\textbf{Collision patterns}

The Blank-Krikorian $3$-length theorem has the following consequence, of fundamental importance 
in our analysis.

Any particle leaving the surface of one obstacle in some irrational direction $v$ will next collide with 
one of \textit{at most $3$} --- exceptionally 2 --- other obstacles.

Any such collision pattern involving the $3$ obstacles seen by the departing particle in the direction of 
its velocity is completely determined by exactly $4$ parameters, computed in terms of the continued 
fraction expansion of $v_2/v_1$ --- in the case where $0<v_2<v_1$, to which the general case can be 
reduced by obvious symmetry arguments.

\begin{figure}

\includegraphics[width=11.0cm]{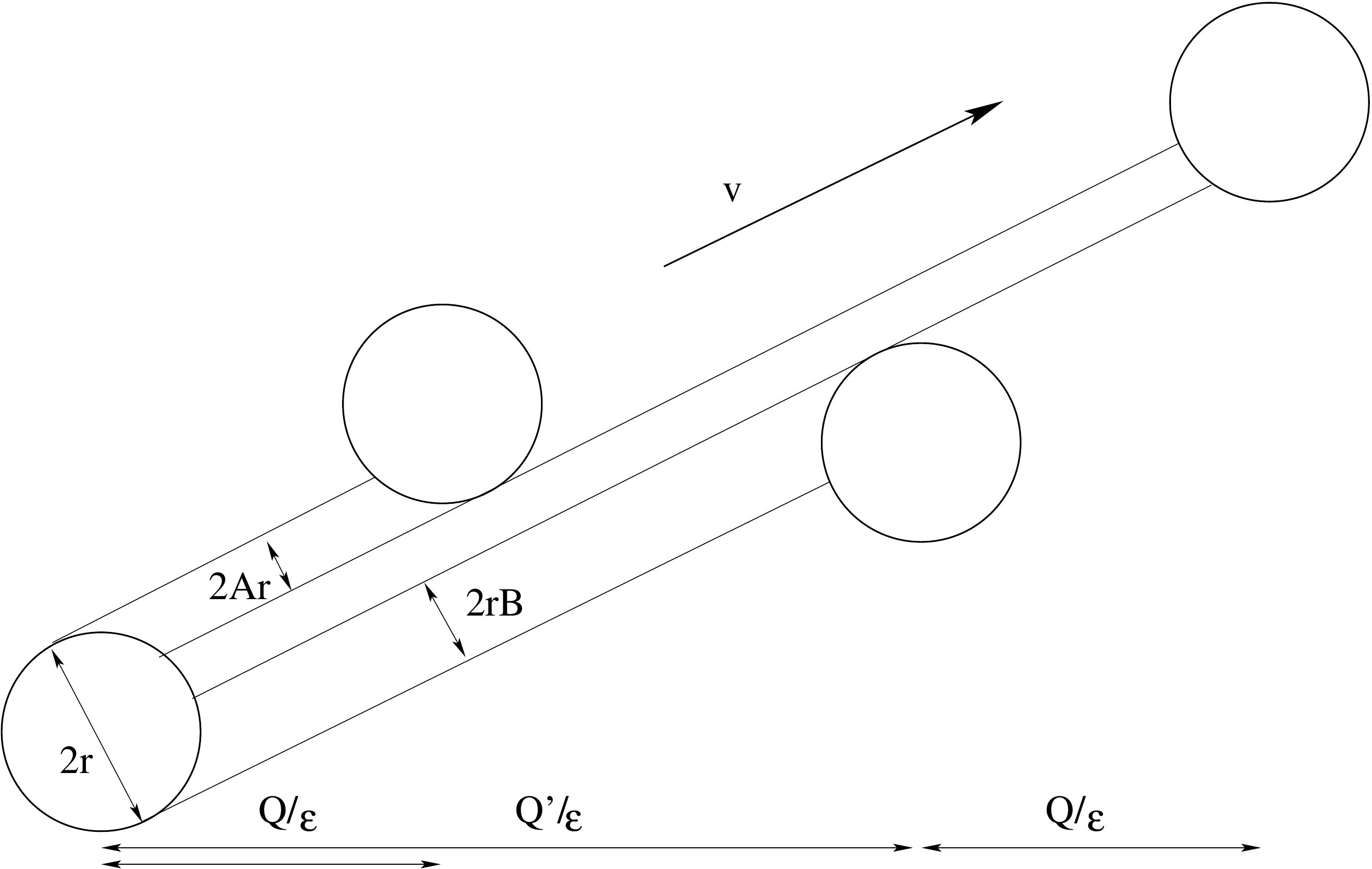}

\caption{Collision pattern seen from the surface of one obstacle. Here, $\eps=2r/v_1$.}

\end{figure}

Assume therefore $0<v_2<v_1$ with $\a=v_2/v_1\notin\bQ$. Henceforth,
we set $\eps=2r\sqrt{1+\a^2}$ and define
$$
\ba
N(\alpha,\eps)&=\inf\{n\ge 0\,|\,d_n(\a)\le\eps\}\,,
\\
k(\alpha,\eps)&=-\left[\frac{\eps-d_{N(\alpha,\eps)-1}(\a)}{d_{N(\alpha,\eps)}(\a)}\right]\,.
\ea
$$
The parameters defining the collision pattern are $A,B,Q$ --- as they appear on the previous figure 
--- together with an extra parameter $\Si\in\{\pm 1\}$. Here is how they are computed in terms of the 
continued fraction expansion of $\a=v_2/v_1$:
$$
\ba
A(v,r)&=1-\tfrac{d_{N(\a,\eps)}(\a)}{\eps}\,,
\\
B(v,r)&=1-\tfrac{d_{N(\a,\eps)-1}(\a)}{\eps}+\tfrac{k(\a,\eps)d_{N(\a,\eps)}(\a)}{\eps}\,,
\\
Q(v,r)&=\eps q_{N(\a,\eps)}(\a)\,,
\\
\Si(v,r)&=(-1)^{N(\a,\eps)}\,.
\ea
$$
The extra-parameter $\Si$ in the list above has the following geometrical meaning. It determines the 
relative position of the closest and next to closest obstacles seen from the particle leaving the surface 
of the obstacle at the origin in the direction $v$.

The case represented on the figure where the closest obstacle is on top of the strip consisting of the 
longest particle path corresponds with $\Si=+1$, the case where that obstacle is at the bottom of this 
same strip corresponds with $\Si=-1$.

The figure above showing one example of collision pattern involves still another parameter, denoted 
$Q'$ on that figure.

This parameter $Q'$ is not independent from $A,B,Q$, since one must have
$$
AQ+BQ'+(1-A-B)(Q+Q')=1
$$
each term in this sum corresponding to the surface of one of the three strips in the 3-term partition of 
the 2-torus. (Remember that the length of the longest orbit in the Blank-Krikorian theorem is the sum 
of the two other lengths.) Therefore
$$
Q'(v,r)=\frac{1-Q(v,r)(1-B(v,r))}{1-A(v,r)}\,.
$$

\smallskip
Once the structure of collision patterns elucidated with the help of the Blank-Krikorian variant of the 
$3$-length theorem, we return to our original problem, namely that of computing the transfer map. 

In the next proposition, we shall see that the transfer map in a given, irrational direction $v\in\bS^1$ 
can be expressed explicitly in terms of the parameters $A,B,Q,\Si$ defining the collision pattern 
correponding with this direction.

Denote
$$
\bK:=]0,1[^3\times\{\pm1\}
$$
and let $(A,B,Q,\Si)\in\bK$ be the parameters defining the collision pattern seen by a particle leaving 
the surface of one obstacle in the direction $v$. Set
$$
\begin{array}{rl}
\mathbf{T}_{A,B,Q,\Si}(h')=&(Q,h'-2\Si(1-A))
\\ \hbox{if }&1-2A<\Si h'\le1\,,
\\
\mathbf{T}_{A,B,Q,\Si}(h')=&\left(Q',h'+2\Si (1-B)\right)
\\ \hbox{if }&-1\le\Si h'<-1+2B\,,
\\
\mathbf{T}_{A,B,Q,\Si }(h')=&\left(Q'+Q,h'+2\Si (A-B)\right)
\\ \hbox{if }&-1+2B\le\Si h'\le1-2A\,.
\end{array}
$$
With this notation, the transfer map is essentially given by the explicit formula $\mathbf{T}_{A,B,Q,\Si}$, 
except for an error of the order $O(r^2)$ on the free-path length from obstacle to obstacle.  

\begin{Prop}[Caglioti-Golse \cite{CagliotiFG2008, CagliotiFG2009}]
One has 
$$
T_r(h',v)=\mathbf{T}_{(A,B,Q,\Si)(v,r)}(h')+(O(r^2),0)
$$
in the limit as $r\to 0^+$.
\end{Prop}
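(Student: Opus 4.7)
The plan is to derive the formula via the Blank--Krikorian $3$-length theorem applied to the linear flow on the flat torus $\bR^2/\bZ^2$. By the symmetries $v\mapsto(\pm v_1,\pm v_2)$ and $v\mapsto(v_2,v_1)$ of the square lattice, I first reduce to the case $0<v_2<v_1$ with $\a=v_2/v_1$ irrational, so that $\eps=2r\sqrt{1+\a^2}$ is the length of the orthogonal projection of a disk obstacle onto the line through its center perpendicular to $v$.

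The first step is to replace disks by points, modulo an $O(r^2)$ correction in the first coordinate. The true collision with a neighboring disk $D(c,r)$ occurs at a distance $r\sqrt{1-h^2}$ \emph{before} the foot of the perpendicular from $c$ to the trajectory, where $h$ is the impact parameter at that collision; this is an $O(r)$ correction in real distance, hence $O(r^2)$ after multiplication by $r$, since the first coordinate of $T_r$ is $s=r\times\text{distance}$. Under this approximation, computing $T_r(h',v)$ amounts to computing the first-return map of the linear flow of slope $\a$ on $\bR^2/\bZ^2$ to the segment of length $\eps$ perpendicular to $v$ through the origin.

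Next I would apply the Blank--Krikorian theorem to this first-return map. It partitions the perpendicular segment into at most three subintervals, corresponding to three parallel strips in the torus whose widths (perpendicular to $v$) and lengths (along $v$) are dictated by the continued fraction expansion of $\a$. One checks that $N(\a,\eps)$ is precisely the depth at which the continued fraction algorithm resolves the obstacle --- i.e.\ the least $n$ with $d_n(\a)\le\eps$ --- that $k(\a,\eps)$ counts the number of further subtractions of $d_N$ that fit inside $\eps-d_{N-1}$, and that the three widths are $A\eps$, $B\eps$, and $(1-A-B)\eps$, the corresponding lengths (rescaled by $\eps$) being $Q$, $Q'$, and $Q+Q'$. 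The sign $\Si=(-1)^N$ records on which side of the segment the shortest strip lies, which is controlled by the parity of $N$ because the convergents $p_n/q_n$ alternate above and below $\a$.

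Finally, given $h'\in[-1,1]$, reading off which of the three strips contains the corresponding trajectory yields one of the three cases in the statement; the first coordinate of the image is the rescaled length of that strip ($Q$, $Q'$, or $Q+Q'$), and the shifts $-2\Si(1-A)$, $+2\Si(1-B)$, $+2\Si(A-B)$ in the impact parameter come from a direct computation of the perpendicular offset between the trajectory and the next obstacle center within its strip. The main obstacle is the combinatorial bookkeeping that matches the Blank--Krikorian partition with the continued-fraction data $(N,k,d_N,d_{N-1},q_N)$ and produces the exact formulas for $A$, $B$, $Q$; once this is in place, the remainder is elementary plane geometry together with the $O(r^2)$ distance correction above.
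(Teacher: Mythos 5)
Your proposal follows essentially the same route as the paper: the proposition is read off from the collision-pattern figure built from the Blank--Krikorian $3$-length theorem, with the first coordinate acquiring an $O(r^2)$ error because the actual impact point lies within $O(r)$ of the projection of the obstacle center onto the transversal (and $s$ is $r$ times the distance), while the impact-parameter component is exact. The only quibble is your geometric gloss on $\eps$: it is the length of the shadow of a disk on the vertical ordinate axis when projecting along $v$ (i.e.\ $2r/v_1$), not the orthogonal projection onto the line perpendicular to $v$, which would just be $2r$; this does not affect the argument.
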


In fact, the proof of this proposition can be read on the figure above that represents a generic collision 
pattern. The first component in the explicit formula
$$
\mathbf{T}_{(A,B,Q,\Si)(v,r)}(h')
$$
represents exactly $r\times$ the distance between the vertical segments that are the projections of the 
diameters of the $4$ obstacles on the vertical ordinate axis. Obviously, the free-path length from 
obstacle to obstacle is the distance between the corresponding vertical segments, minus a quantity 
of the order $O(r)$ that is the distance from the surface of the obstacle to the corresponding vertical 
segment. 

On the other hand, the second component in the same explicit formula is exact, as it relates impact 
parameters, which are precisely the intersections of the infinite line that contains the particle path with 
the vertical segments corresponding with the two obstacles joined by this particle path.

\smallskip
If we summarize what we have done so far, we see that we have solved our first problem stated at the 
beginning of the present section, namely that of finding a convenient way of coding the billiard flow in 
the periodic case and for space dimension $2$, for a.e. given direction $v$.


\section{An ergodic theorem for collision patterns}


It remains to solve the second problem, namely, to find a convenient way of averaging the computation 
above so as to get rid of the dependence on the direction $v$.

Before going further in this direction, we need to recall some known facts about the ergodic theory of 
continued fractions. 

\smallskip
\noindent
\textbf{The Gauss map}

Consider the Gauss map, which is defined on all irrational numbers in $(0,1)$ as follows:
$$
T:\,(0,1)\setminus\bQ\ni x
	\mapsto Tx=\tfrac1x-\left[\tfrac1x\right]\in(0,1)\setminus\bQ\,.
$$

This Gauss map has the following invariant probability measure --- found by Gauss himself:
$$
dg(x)=\tfrac1{\ln 2}\frac{dx}{1+x}\,.
$$

Moreover, the Gauss map $T$ is ergodic for the invariant measure $dg(x)$.  By Birkhoff's theorem, for 
each $f\in L^1(0,1;dg)$
$$
\frac1N\sum_{k=0}^{N-1}f(T^kx)\to\int_0^1f(z)dg(z)\hbox{ a.e. in }x\in(0,1)
$$
as $N\to+\infty$.

How the Gauss map is related to continued fractions is explained as follows: for
$$
\a=[0;a_0,a_1,a_2,\ldots]=\frac1{\displaystyle a_0+\frac1{\displaystyle a_1+\ldots}}
	\in(0,1)\setminus\bQ
$$
the terms $a_k(\a)$ of the continued fraction expansion of $\a$ can be computed from the iterates of 
the Gauss map acting on $\a$: specifically 
$$
a_k(\a)=\left[\frac1{T^k\a}\right]\,,\quad k\ge 0
$$

As a consequence, the Gauss map corresponds with the shift to the left on infinite sequences of positive 
integers arising in the continued fraction expansion of irrationals in $(0,1)$. In other words,
$$
T[0;a_0,a_1,a_2,\ldots]=[0;a_1,a_2,a_3\ldots]\,,
$$
equivalently recast as
$$
a_n(T\a)=a_{n+1}(\a)\,,\quad n\ge 0\,.
$$

Thus, the terms $a_k(\a)$ of the continued fraction expansion of any $\a\in(0,1)\setminus\bQ$ are easily 
expressed in terms of the sequence of iterates $(T^k\a)_{k\ge 0}$ of the Gauss map acting on $\a$. The
error $d_n(\a)$ is also expressed in term of that same sequence $(T^k\a)_{k\ge 0}$, by equally simple
formulas.

Starting from the induction relation on the error terms 
$$
d_{n+1}(\a)=-a_n(\a)d_n(\a)+d_{n-1}(\a)\,,\quad d_0(\a)=1\,,\,\,d_1(\a)=\a
$$
and the explicit formula relating $a_n(T\a)$ to $a_n(\a)$, we see that
$$
\a d_n(T\a)=d_{n+1}(\a)\,,\quad n\ge 0\,.
$$
This entails the formula
$$
d_n(\a)=\prod_{k=0}^{n-1}T^k\a\,,\quad n\ge 0\,.
$$
Observe that, for each $\th\in[0,1]\setminus\bQ$, one has
$$
\th\cdot T\th<\tfrac12\,,
$$
so that
$$
d_n(\a)\le 2^{-[n/2]}\,,\quad n\ge 0\,,
$$
which establishes the exponential decay mentionned above. (As a matter of fact, exponential
convergence is the slowest possible for the continued fraction algorithm, as it corresponds 
with the rational approximation of algebraic numbers of degree $2$, which are the hardest to 
approximate by rational numbers.) 

Unfortunately, the dependence of $q_n(\a)$ in $\a$ is more complicated. Yet one can find a way 
around this, with the following observation. Starting from the relation
$$
q_{n+1}(\a)d_n(\a)+q_n(\a)d_{n+1}(\a)=1\,,
$$
we see that
$$
\ba
q_n(\a)d_{n-1}(\a)&=\sum_{j=1}^n(-1)^{n-j}\frac{d_n(\a)d_{n-1}(\a)}{d_j(\a)d_{j-1}(\a)}	
\\
&=\sum_{j=1}^n(-1)^{n-j}\prod_{k=j}^{n-1}T^{k-1}\a T^k\a	\,.	
\ea
$$
Using once more the inequality $\th\cdot T\th<\tfrac12$ for $\th\in[0,1]\setminus\bQ$, one can
truncate the summation above at the cost of some exponentially small error term. Specifically, 
one finds that
$$
\ba
{}&\left|q_n(\a)d_{n-1}(\a)
	-\sum_{j=n-l}^n(-1)^{n-j}\frac{d_n(\a)d_{n-1}(\a)}{d_j(\a)d_{j-1}(\a)}\right|	
\\
&\qquad=
\left|q_n(\a)d_{n-1}(\a)
	-\sum_{j=n-l}^n(-1)^{n-j}\prod_{k=j}^{n-1}T^{k-1}\a T^k\a\right|\le 2^{-l}\,.
\ea
$$
More information on the ergodic theory of continued fractions can be found in the classical
monograph \cite{Khinchin1964} on continued fractions, and in Sinai's book on ergodic
theory \cite{Sinai1994}.

\smallskip
\noindent
\textbf{An ergodic theorem}

We have seen in the previous section that the transfer map satisfies
$$
T_r(h',v)=\mathbf{T}_{(A,B,Q,\Si)(v,r)}(h')+(O(r^2),0)\hbox{ as }r\to 0^+
$$
for each $v\in\bS^1$ such that $v_2/v_1\in(0,1)\setminus\bQ$.

Obviously, the parameters $(A,B,Q,\Si)$ are extremely sensitive to variations in $v$ and $r$ as 
$r\to 0^+$, so that even the explicit formula for $T_{A,B,Q,\Si}$, is not too useful in itself. 

Each time one must handle a strongly oscillating quantity such as the free path length $\tau_r(x,v)$ 
or the transfer map $T_r(h',v)$, it is usually a good idea to consider the distribution of that quantity 
under some natural probability measure than the quantity itself. Following this principle, we are led 
to consider the family of probability measures in $(s,h)\in\bR_+\times[-1,1]$ 
$$
\de((s,h)-T_r(h',v))\,,
$$
or equivalently
$$
\de((s,h)-T_{(A,B,Q,\Si)(v,r)}(h'))\,.
$$

A first obvious idea would be to average out the dependence in $v$ of this family of measures: as we 
shall see later, this is not an easy task.

A somewhat less obvious idea is to average over obstacle radius. Perhaps surprisingly, this is easier 
than averaging over the direction $v$.

That averaging over obstacle radius is a natural operation in this context can be explained by the 
following observation. We recall that the sequence of errors $d_n(\a)$ in the continued fraction 
expansion of an irrational $\a\in(0,1)$ satisfies
$$
\a d_n(T\a)=d_{n+1}(\a)\,,\quad n\ge 0\,,
$$
so that 
$$
N(\a,\eps)=\inf\{n\ge 1\,|\,d_n(\a)\le\eps\}
$$
is transformed by the Gauss map as follows:
$$
N(a,\eps)=N(T\a,\eps/\a)+1\,.
$$

In other words, the transfer map for the $2$-dimensional periodic Lorentz gas in the billiard table 
$Z_r$ (meaning, with circular obstacles of radius $r$ centered at the vertices of the lattice $\bZ^2$) 
in the direction $v$ corresponding with the slope $\a$ is essentially the same as for the billiard table 
$Z_{r/\a}$ but in the direction corresponding with the slope $T\a$. Since the problem is invariant 
under the transformation
$$
\a\mapsto T\a\,,\qquad r\mapsto r/\a
$$
this suggests the idea of averaging with respect to the scale invariant measure in the variable $r$, 
i.e. $\frac{dr}{r}$ on $\bR_+^*$.

\smallskip
The key result in this direction is the following ergodic lemma for functions that depend on 
\textit{finitely many} $d_n$s.

\begin{Lem}[Caglioti-Golse \cite{CagliotiFG2003, Golse2006, CagliotiFG2009}]
For $\a\in(0,1)\setminus\bQ$, set
$$
N(\a,\eps)=\inf\{n\ge 0\,|\,d_n(\a)\le\eps\}\,.
$$
For each $m\ge 0$ and each $f\in C(\bR_+^{m+1})$, one has 
$$
\frac1{|\ln\eta|}\int_\eta^{1/4}f\left(\frac{d_{N(\a,\eps)}(\a)}{\eps},\ldots,
	\frac{d_{N(\a,\eps)-m}(\a)}{\eps}\right)\frac{d\eps}\eps\to L_m(f)
$$
a.e. in $\a\in(0,1)$ as $\eta\to 0^+$, where the limit $L_m(f)$ is independent of $\a$.
\end{Lem}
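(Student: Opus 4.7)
The plan is to convert the $\eps$-integral into a Birkhoff sum for the Gauss map $T$ acting on $(0,1)$ equipped with its invariant Gauss measure $dg(\a)=\frac{1}{\ln 2}\frac{d\a}{1+\a}$, then apply the pointwise ergodic theorem together with L\'evy's theorem for the normalizing factor $|\ln\eta|$.

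\emph{Step 1: partition the $\eps$-integral at the jumps of $N(\a,\cdot)$.} Since $N(\a,\eps)=N$ exactly when $d_N(\a)\le\eps<d_{N-1}(\a)$, the integrand is constant-in-$N$ on each such slab and the domain decomposes accordingly. Writing $\a_j:=T^j\a$, the identity $d_k(\a)=\a_0\a_1\cdots\a_{k-1}$ yields $d_{N-j}(\a)/d_N(\a)=(\a_{N-j}\cdots\a_{N-1})^{-1}$. Changing variables to $u=d_N(\a)/\eps$, which ranges in $[\a_{N-1},1)$ as $\eps$ traverses the $N$-th slab, a direct computation gives
\[
\int_{d_N(\a)}^{d_{N-1}(\a)}\!\! f\!\left(\tfrac{d_N(\a)}{\eps},\ldots,\tfrac{d_{N-m}(\a)}{\eps}\right)\tfrac{d\eps}{\eps}
=F(T^{N-1}\a,\ldots,T^{N-m}\a),
\]
where
\[
F(\b_1,\ldots,\b_m):=\int_{\b_1}^{1}\!f\!\left(u,\tfrac{u}{\b_1},\tfrac{u}{\b_1\b_2},\ldots,\tfrac{u}{\b_1\cdots\b_m}\right)\tfrac{du}{u}.
\]

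\emph{Step 2: recognize the Birkhoff sum and normalize.} Set $G(\b):=F(T^{m-1}\b,\ldots,T\b,\b)$, so that the $N$-th slab contributes $G(T^{N-m}\a)$. Summing over $N_0\le N\le N(\a,\eta)-1$, where $N_0:=N(\a,1/4)$ is bounded, and absorbing the two partial slabs at $\eps=1/4$ and $\eps=\eta$ as a boundary term of size $O(\|G\|_\infty)=o(|\ln\eta|)$, I obtain
\[
\int_\eta^{1/4}\!\! f\!\left(\tfrac{d_{N(\a,\eps)}(\a)}{\eps},\ldots,\tfrac{d_{N(\a,\eps)-m}(\a)}{\eps}\right)\tfrac{d\eps}{\eps}
=\sum_{N=N_0}^{N(\a,\eta)-1}G(T^{N-m}\a)+o(|\ln\eta|).
\]
Since $T$ is ergodic for $dg$, Birkhoff's theorem gives, for Lebesgue-a.e. $\a$,
\[
\frac{1}{N(\a,\eta)}\sum_{N=N_0}^{N(\a,\eta)-1}G(T^{N-m}\a)\longrightarrow \int_0^1 G(\b)\,dg(\b).
\]
Applying Birkhoff simultaneously to $\ln(1/\a)\in L^1(dg)$ and using $d_n(\a)=\prod_{k=0}^{n-1}T^k\a$, one has $-\tfrac{1}{n}\ln d_n(\a)\to L:=\pi^2/(12\ln 2)$ (L\'evy's constant), hence $N(\a,\eta)/|\ln\eta|\to 1/L$ a.e. Combining these two limits produces the announced convergence with
\[
L_m(f)=\frac{1}{L}\int_0^1 G(\b)\,dg(\b),
\]
which depends only on $f$ because ergodicity erases the $\a$-dependence.

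\emph{Main obstacle.} The genuine difficulty is that for $f$ merely continuous on $\bR_+^{m+1}$ the arguments $u/(\b_1\cdots\b_k)$ can be arbitrarily large as the $\b_j$ approach $0$, so neither $F$ nor $G$ is automatically bounded, and Birkhoff's $L^1$ version must be invoked with care. I would handle this by the standard truncation argument: approximate $f$ uniformly on compacts by $f_R$ supported in $[0,R]^{m+1}$, apply the bounded case above to obtain $L_m(f_R)$, and control the tail $\|G-G_R\|_{L^1(dg)}$ using that the random variables $\a_j$ are bounded in $L^p(dg)$ for any $p$ and that the distribution of $d_{n-k}/d_n$ under $dg$ has all moments (explicit estimates via the transfer operator of $T$, or simply dominated convergence against the $g$-integrable majorant obtained from local boundedness of $f$ combined with the exponential decay $d_n\le 2^{-[n/2]}$). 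The remaining measure-zero set where Birkhoff fails and the uniform control of the boundary slabs are routine.
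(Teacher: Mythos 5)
Your argument is correct and is precisely the route the paper sets up (and the cited references follow): decompose the $d\eps/\eps$ integral into the slabs $d_N(\a)\le\eps<d_{N-1}(\a)$, rewrite each slab's contribution as a fixed function of iterates of the Gauss map via $d_n(\a)=\prod_{k=0}^{n-1}T^k\a$, and conclude by Birkhoff's theorem for the Gauss measure together with L\'evy's constant to convert the normalization $N(\a,\eta)$ into $|\ln\eta|$. The only point to watch --- which you correctly flag yourself in the final paragraph --- is that $G$ is unbounded even for bounded $f$ (because of the $\ln(1/\b_1)$ factor), so the boundary-slab estimate must invoke $G(T^n\a)/n\to0$ a.e. for $G\in L^1(dg)$ rather than the stated $O(\|G\|_\infty)$.
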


With this lemma, we can average over obstacle radius any function that depends on collision patterns, 
i.e. any function of the parameters $A,B,Q,\Si$.

\begin{Prop}[Caglioti-Golse \cite{CagliotiFG2009}]
\label{T-ErgoABQsi}
Let $\bK=[0,1]^3\times\{\pm1\}$. For each $F\in C(\bK)$, there exists $\cL(F)\in\bR$  independent of 
$v$ such that
$$
\frac1{\ln(1/\eta)}\int_\eta^{1/2}F(A(v,r),B(v,r),Q(v,r),\Si(v,r))\frac{dr}{r}\to\cL(F)
$$
for a.e. $v\in\mathbf{S}^1$ such that $0<v_2<v_1$ in the limit as $\eta\to 0^+$.
\end{Prop}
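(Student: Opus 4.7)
The plan is to express $F(A,B,Q,\Si)$ as a continuous function of finitely many ratios $d_{N-j}(\a)/\eps$ up to a uniformly small error, and then to invoke the ergodic lemma. Observe first that $A=1-d_N(\a)/\eps$ depends only on $d_N(\a)/\eps$, and $B=1-d_{N-1}(\a)/\eps+k(\a,\eps)d_N(\a)/\eps$ depends only on the two ratios $d_N(\a)/\eps$ and $d_{N-1}(\a)/\eps$, since the integer $k(\a,\eps)$ is itself determined by them. Only $Q=\eps q_{N(\a,\eps)}(\a)$ requires further work.

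For $Q$, I would use the identity
\[
q_n(\a)d_{n-1}(\a)=\sum_{j=1}^n(-1)^{n-j}\prod_{k=j}^{n-1}T^{k-1}\a\cdot T^k\a
\]
derived earlier, together with the bound $T^{k-1}\a\cdot T^k\a<1/2$, to truncate the sum at $j=n-l$ with error $O(2^{-l})$. Setting $x_j:=d_{N-j}(\a)/\eps$, this yields $Q=Q_l+O(2^{-l})$ with
\[
Q_l:=\sum_{i=0}^l(-1)^i\frac{x_0}{x_i\,x_{i+1}},
\]
a rational function of $x_0,\ldots,x_{l+1}$. Uniform continuity of $F$ on the compact set $\bK$ then gives $F(A,B,Q,\Si)=F(A,B,Q_l,\Si)+\om_F(2^{-l})$ uniformly in $(v,r)$, where $\om_F$ is the modulus of continuity of $F$.

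To handle the sign $\Si=(-1)^{N(\a,\eps)}$, I would split $F(A,B,Q_l,\Si)=F^+(x_0,\ldots,x_{l+1})\indc_{N\text{ even}}+F^-(x_0,\ldots,x_{l+1})\indc_{N\text{ odd}}$, where each $F^\pm$ is bounded and continuous off a measure-zero set (the jumps of $k$ entering the formula for $B$). After the change of variable $r\mapsto\eps=2r\sqrt{1+\a^2}$, which preserves $dr/r=d\eps/\eps$ and only shifts the endpoints by bounded factors absorbed in a $1/|\ln\eta|$ correction, the problem reduces to evaluating the limit of
\[
\frac1{|\ln\eta|}\int_\eta^{1/4}f(x_0,\ldots,x_{l+1})\,\indc_{N(\a,\eps)\text{ even/odd}}\,\frac{d\eps}\eps.
\]
The parity restriction is handled by applying the ergodic lemma with $T^2$ in place of $T$ (the Gauss map being mixing, $T^2$ is ergodic for the Gauss measure $dg$), treating even and odd values of $N$ separately. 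The cited ergodic lemma then yields a.e.\ convergence to an $\a$-independent limit; summing the two parity contributions and letting $l\to\infty$ produces $\cL(F)$.

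The main obstacle is verifying that the mild discontinuities of $F^\pm$ (from the floor function in $k$, and from the parity indicator) do not obstruct the ergodic averaging. This is handled by approximating $F^\pm$ in $L^1$ by genuinely continuous functions: the exceptional set has $d\eps/\eps$-measure of order $d_N/d_{N-1}=T^{N-1}\a$ within each interval where $N$ is constant, so its total contribution is itself controlled by the same Birkhoff argument and vanishes in the limit. The $v$-independence of $\cL(F)$ then follows from the $\a$-independence in the ergodic lemma, transported via the smooth change of coordinates $v\mapsto \a=v_2/v_1$.
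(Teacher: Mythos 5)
Your proposal follows essentially the same route as the paper's sketch: reduce $A$, $B$ and (after truncating the alternating series for $q_N(\a)d_{N-1}(\a)$ with an $O(2^{-l})$ error absorbed by the modulus of continuity of $F$) $Q$ to functions of finitely many ratios $d_{N(\a,\eps)-j}(\a)/\eps$, apply the ergodic lemma, and pass to the limit $l\to\infty$ by a Cauchy argument. Your treatment of the sign $\Si=(-1)^{N(\a,\eps)}$ via the parity of $N$ and the ergodicity of $T^2$, and of the discontinuities coming from the floor function in $k(\a,\eps)$, merely fills in details that the paper's sketch leaves implicit (it writes $F=F_++\Si F_-$ and says nothing further about the $\Si F_-$ term), so the two arguments coincide in substance.
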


\begin{proof}[Sketch of the proof] First eliminate the $\Si$ dependence by decomposing
$$
F(A,B,Q,\Si)=F_+(A,B,Q)+\Si F_-(A,B,Q)\,.
$$
Hence it suffices to consider the case where $F\equiv F(A,B,Q)$.

Setting $\a=v_2/v_1$ and $\eps=2r/v_1$, we recall that
$$
\ba
A(v,r)&\hbox{ is a function of }\frac{d_{N(\a,\eps)}(\a)}{\eps}\,,
\\
B(v,r)&\hbox{ is a function of }\frac{d_{N(\a,\eps)}(\a)}{\eps}
	\hbox{ and }\frac{d_{N(\a,\eps)-1}(\a)}{\eps}\,.
\ea
$$

As for the dependence of $F$ on $Q$, proceed as follows: in $F(A,B,Q)$, replace $Q(v,r)$ with
$$
\frac{\eps}{d_{N(\a,\eps)-1}}
\sum_{j=N(\a,\eps)-l}^{N(\a,\eps)}(-1)^{{N(\a,\eps)}-j}
	\frac{d_{N(\a,\eps)}(\a)d_{{N(\a,\eps)}-1}(\a)}{d_j(\a)d_{j-1}(\a)}\,,
$$
at the expense of an error term of the order 
$$
O(\hbox{modulus of continuity of $F$}(2^{-m}))\to 0\hbox{ as }l\to 0\,,
$$
uniformly as $\eps\to 0^+$.

This substitution leads to an integrand of the form
$$
f\left(\frac{d_{N(\a,\eps)}(\a)}{\eps},
	\ldots,\frac{d_{N(\a,\eps)-m-1}(\a)}{\eps}\right)
$$
to which we apply the ergodic lemma above: its Ces\`aro mean converges, in the small radius limit, 
to some limit $\cL_m(F)$ independent of $\a$.

By uniform continuity of $F$, one finds that 
$$
|\cL_m(F)-\cL_{m'}(F)|=O(\hbox{modulus of continuity of $F$}(2^{-m\vee m'}))
$$
(with the notation $m\vee m'=\max(v,v')$), so that $\cL_m(F)$ is a Cauchy sequence as $m\to\infty$. 
Hence 
$$
\cL_m(F)\to\cL(F)\hbox{ as }m\to\infty
$$
and with the error estimate above for the integrand, one finds that 
$$
\frac1{\ln(1/\eta)}\int_\eta^{1/2}F(A(v,r),B(v,r),Q(v,r),\Si(v,r))\frac{dr}{r}\to\cL(F)
$$
as $\eta\to 0^+$.
\end{proof}

\smallskip
With the ergodic theorem above, and the explicit approximation of the transfer map expressed in 
terms of the parameters $(A,B,Q,\Si)$ that determine collision patterns in any given direction $v$, 
we easily arrive at the following notion of a ``probability of transition" for a particle leaving the surface 
of an obstacle with an impact parameter $h'$ to hit the next obstacle on its trajectory at time $s/r$ with 
an impact parameter $h$.

\begin{Thm}[Caglioti-Golse, \cite{CagliotiFG2008, CagliotiFG2009}]
For each $h'\in[-1,1]$, there exists a probability density $P(s,h|h')$ on $\bR_+\times[-1,1]$ such that, 
for each $f\in C(\bR_+\times[-1,1])$,
$$
\frac1{|\ln\eta|}\int_\eta^{1/4}f(T_r(h',v))\frac{dr}{r}\to\int_0^\infty\int_{-1}^1f(s,h)P(s,h|h')dsdh
$$
a.e. in $v\in\bS^1$ as $\eta\to 0^+$.
\end{Thm}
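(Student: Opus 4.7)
The plan is to combine the deterministic approximation $T_r(h',v)=\mathbf{T}_{(A,B,Q,\Si)(v,r)}(h')+(O(r^2),0)$ from the previous proposition with the ergodic averaging of Proposition~\ref{T-ErgoABQsi} on logarithmic averages of continuous functions of $(A,B,Q,\Si)$. Fix $h'\in[-1,1]$ and $f\in C_c(\bR_+\times[-1,1])$, and set $\Phi_{f,h'}(A,B,Q,\Si):=f(\mathbf{T}_{A,B,Q,\Si}(h'))$. By uniform continuity of $f$, the $O(r^2)$ perturbation of the first coordinate contributes an error $\omega_f(Cr^2)=o(1)$ uniformly in $r\in[\eta,1/4]$, hence vanishes in the logarithmic Ces\`aro average. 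It therefore suffices to prove that
$$
\frac{1}{|\ln\eta|}\int_\eta^{1/4}\Phi_{f,h'}\bigl((A,B,Q,\Si)(v,r)\bigr)\,\frac{dr}{r}\longrightarrow\Lambda_{h'}(f)
$$
a.e.\ in $v$, for some positive linear functional $\Lambda_{h'}$; the discrepancy between $1/4$ here and $1/2$ in Proposition~\ref{T-ErgoABQsi} contributes $O(1/|\ln\eta|)$ and is harmless.

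The main technical issue is that $\Phi_{f,h'}$ fails to be continuous on $\bK$: it is piecewise continuous with jumps along the two hyperplanes $\{\Si h'=1-2A\}$ and $\{\Si h'=-1+2B\}$ separating the three branches of $\mathbf{T}_{A,B,Q,\Si}$. For each $\delta>0$ I would construct $F_\delta\in C(\bK)$ by interpolating linearly across a $\delta$-neighborhood of these hyperplanes, with $\|F_\delta\|_\infty\le\|f\|_\infty$ and $F_\delta=\Phi_{f,h'}$ outside this neighborhood. Proposition~\ref{T-ErgoABQsi} yields convergence of the logarithmic Ces\`aro average of $F_\delta$ to $\cL(F_\delta)$ a.e.\ in $v$, while the residual $\Phi_{f,h'}-F_\delta$ is bounded by $2\|f\|_\infty$ times the indicator of that neighborhood. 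Its contribution is controlled by the $\cL$-measure of the neighborhood, which vanishes as $\delta\to 0^+$ provided $\cL$ is absolutely continuous with respect to Lebesgue measure on $\bK$. Absolute continuity of $\cL$ can be read off from its construction in the proof of Proposition~\ref{T-ErgoABQsi}: it is assembled from the Gauss invariant density $\tfrac{1}{\ln 2}\tfrac{d\a}{1+\a}$ and Birkhoff sums of continuous functions of the iterates $T^k\a$, transported to $(A,B,Q)$ by a smooth change of variables from continued fraction errors, so no mass concentrates on affine subvarieties. Thus $\Lambda_{h'}(f):=\cL(\Phi_{f,h'})$ is well defined.

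Diagonalising over a countable dense subfamily of $C_c(\bR_+\times[-1,1])$ yields a single full-measure set of $v\in\bS^1$ on which the above convergence holds for every $f$, and $f\mapsto\Lambda_{h'}(f)$ extends to a positive linear functional bounded by $\|f\|_\infty$, represented by Riesz--Markov as a Radon measure $\mu_{h'}$ on $\bR_+\times[-1,1]$. That $\mu_{h'}$ is a probability measure follows by testing with an approximate identity and using the billiard-flow invariance of $dxdv$. To extract the density $P(\cdot,\cdot\mid h')$, observe that on each of the three open branches the map $(A,B,Q)\mapsto\mathbf{T}_{A,B,Q,\Si}(h')$ is affine in the second coordinate and involves $Q$, $Q'=(1-Q(1-B))/(1-A)$ or $Q+Q'$ in the first, with non-degenerate Jacobian; pushing forward the absolutely continuous measure $\cL$ by this map yields an absolutely continuous measure on $\bR_+\times[-1,1]$ whose density $P(s,h\mid h')$ is the sum of the Radon--Nikodym derivatives over the three branches.

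The hard step is the verification that $\cL$ is absolutely continuous and does not charge the branch-boundary hyperplanes. Establishing this requires carrying through the Gauss-map Birkhoff averages in the proof of the ergodic lemma and identifying the precise invariant density on $\bK$ induced by the reparametrisation $(\a,\eps)\mapsto(T\a,\eps/\a)$ underlying the scale invariance of the problem. Once this measure-theoretic step is in place, the continuity reduction, the Riesz--Markov representation and the pushforward argument producing $P(s,h\mid h')$ are essentially routine.
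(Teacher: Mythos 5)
Your proposal follows the same route as the paper: approximate $T_r(h',v)$ by $\mathbf{T}_{(A,B,Q,\Si)(v,r)}(h')$ up to the $(O(r^2),0)$ error, apply the logarithmic-average ergodic Proposition for continuous functions of $(A,B,Q,\Si)$, and realize $P(s,h|h')\,ds\,dh$ as the pushforward of the limiting functional under $(A,B,Q,\Si)\mapsto\mathbf{T}_{A,B,Q,\Si}(h')$; this is exactly how the paper passes from that Proposition to the Theorem, and how Section 8 then identifies $P$ as the image of $dm$. You are also right, and more explicit than the paper, about the one real obstruction: $f\circ\mathbf{T}_{\cdot}(h')$ is only piecewise continuous on $\bK$, so the Proposition does not apply directly, and your mollification argument needs the limiting functional $\cL$ to give no mass to the branch hyperplanes $\{\Si h'=1-2A\}$ and $\{\Si h'=-1+2B\}$ (and to the face $A=1$, where $Q'$ blows up).

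Where your plan has a genuine gap is in how you propose to obtain that regularity. You want to ``read off'' absolute continuity of $\cL$ from the Gauss-map construction in the proof of the ergodic lemma; but, as the paper stresses at the opening of Section 8, that construction produces $\cL(F)$ only as a Cauchy limit of functionals $\cL_m(F)$, each of which exists by Birkhoff's theorem with no explicit formula, so neither a density nor even the nullity of hyperplanes can be extracted from it without substantial extra work. The paper closes this gap by a different mechanism: Theorem~\ref{T-Meas-m}, proved by comparing the $\tfrac{dr}{r}$-averages with averages over the velocity $v$ computed via Farey fractions and the Boca--Zaharescu Kloosterman-sum lemma, identifies $\cL(F)=\int_\bK F\,dm$ with $dm$ explicitly absolutely continuous with respect to Lebesgue measure on $\bK$. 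With that in hand, your mollification step, the Riesz--Markov representation and the three-branch pushforward computation all go through as you describe. So the architecture is right, but the step you flag as ``hard'' is not merely hard along the route you sketch --- it is precisely the content of the explicit determination of $dm$, and should be imported from there rather than rederived from the Birkhoff averages.
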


In other words, the transfer map converges in distribution and in the sense of Ces\`aro, in the small 
radius limit, to a transition probability $P(s,h|h')$ that is independent of $v$.

\smallskip
We are therefore left with the following problems:

\noindent
a) to compute the transition probability $P(s,h|h')$ explicitly and discuss its properties, and

\noindent
b) to explain the role of this transition probability in the Boltzmann-Grad limit of the periodic Lorentz 
gas dynamics.


\section{Explicit computation \\ of the transition probability $P(s,h|h')$}


Most unfortunately, our argument leading to the existence of the limit $\cL(F)$, the core result of the 
previous section, cannot be used for computing explicitly the value $\cL(F)$. Indeed, the convergence 
proof is based on the ergodic lemma in the last section, coupled to a sequence of approximations of 
the parameter $Q$ in collision patterns that involve only finitely many error terms $d_n(\a)$ in the 
continued fraction expansion of $\a$. The existence of the limit is obtained through Cauchy's criterion, 
precisely because of the difficulty in finding an explicit expression for the limit.

Nevertheless, we have arrived at the following expression for the transition
probability $P(s,h|h')$:

\begin{Thm}[Caglioti-Golse \cite{CagliotiFG2008, CagliotiFG2009}]\label{T-TransiProba}
The transition probability density $P(s,h|h')$ is expressed in terms of $a=\tfrac12|h-h'|$  and 
$b=\tfrac12|h+h'|$ by the explicit formula
$$
\ba
P(s,h|h')=\frac{3}{\pi^2sa}\Big[\left((s-\tfrac12sa)\!\wedge\!(1+\tfrac12sa)
	\!-\!(1\!\vee\!(\tfrac12s+\tfrac12sb)\right)_+
\\
+\!\left((s-\textstyle\frac12sa)\!\wedge\! 1\!
	-\!((\tfrac12s+\tfrac12sb)\!\vee\!\left(1\!-\!\textstyle\frac12sa\right)\right)_+
\\
\!+\!sa\!\wedge\!|1-s|\mathbf{1}_{s<1}+(sa\!-\!|1-s|)_+\Big]\,,
\ea
$$
with the notations $x\wedge y=\min(x,y)$ and $x\vee y=\max(x,y)$.

Moreover, the function
$$
(s,h,h')\mapsto (1+s)P(s,h|h')\hbox{ belongs to }L^2(\bR_+\times[-1,1]^2)\,.
$$
\end{Thm}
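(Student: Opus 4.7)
The plan is to reduce the computation of $P(s,h|h')$ to identifying the joint distribution of the collision-pattern parameters $(A,B,Q,\Sigma)$ under the Cesàro-log limit measure $\cL$ from Proposition \ref{T-ErgoABQsi}, and then to push that distribution forward through the explicit three-branch map $\mathbf{T}_{A,B,Q,\Sigma}$ described before Proposition 7.1. Concretely, the previous theorem guarantees that $T_r(h',v)$ converges in distribution (after logarithmic averaging in $r$) to the image of $(A,B,Q,\Sigma)$ under $\mathbf{T}_{\bullet}(h')$, so if I can produce an explicit density $\Pi(A,B,Q,\Sigma)$ for the law of $(A,B,Q,\Sigma)$ under $\cL$, then
\[
P(s,h|h')\,ds\,dh=\bigl(\mathbf{T}_{\bullet}(h')\bigr)_{*}\bigl(\Pi\,dA\,dB\,dQ\otimes\tfrac12(\delta_{+1}+\delta_{-1})\bigr).
\]

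First I would compute $\Pi$. By the symmetry between the two orientations $\Sigma=\pm 1$ (which comes from the parity $(-1)^{N(\alpha,\eps)}$ and hence equidistributes in the log-Cesàro average because $N(\alpha,\eps)$ increases by one each time $\eps$ crosses a value $d_n(\alpha)$), one may integrate out $\Sigma$ and work with $(A,B,Q)\in[0,1]^3$. The formulas
\[
A=1-\tfrac{d_N}{\eps},\qquad B=1-\tfrac{d_{N-1}}{\eps}+k\tfrac{d_N}{\eps},\qquad Q=\eps q_N,
\]
together with the identity $q_Nd_{N-1}+q_{N-1}d_N=1$ and the ergodic lemma from the previous section, let one rewrite $(A,B,Q)$ as a function of the pair $(d_N/\eps,\,d_{N-1}/\eps)$ together with the tail sum
\[
q_Nd_{N-1}=\sum_{j=1}^{N}(-1)^{N-j}\prod_{k=j}^{N-1}T^{k-1}\alpha\cdot T^k\alpha,
\]
which converges geometrically from the top. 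Applying the ergodic lemma to the truncations and passing to the limit produces an explicit density for $(d_N/\eps,\,d_{N-1}/\eps)$ on the triangle $\{0<x,y<1,\,x+y>1\}$ (up to the admissibility constraint imposed by $\eps\in[d_N,d_{N-1})$), and a conditional distribution for the renormalized $Q$. The normalization constant comes out to $3/\pi^2$, in agreement with the constants in Theorem \ref{T-BocaZaha}.

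Second, I would push $\Pi$ forward through $\mathbf{T}_{A,B,Q,\Sigma}(h')$. The map has three linear branches, selected by the position of $\Sigma h'$ relative to $1-2A$ and $-1+2B$; on each branch the target $(s,h)$ is an affine image, so the push-forward splits as a sum of three contributions obtained by solving for $(A,B)$ or $(A,B,Q)$ in terms of $(s,h,h')$ on each branch and substituting into $\Pi$. The combinatorial structure of the support conditions will produce the $\max$/$\min$ truncations $(\cdot)_+$, $\wedge$, $\vee$ of the stated formula, while the Jacobian factors $1/(sa)$ with $a=\tfrac12|h-h'|$ arise because $h-h'=\pm 2\Sigma(1-A)$ or $\pm 2\Sigma(A-B)$ translates variations of $A,B$ into variations of $h$, and $s=\eps q_N=Q$ (to leading order) fixes one remaining degree of freedom. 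Adding the three contributions, using the parity symmetries $h\leftrightarrow -h$ and $\Sigma\leftrightarrow -\Sigma$, and grouping terms according to the sign of $1-s$ should reproduce exactly the claimed formula.

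Finally, for the $L^2$ bound with weight $(1+s)$, I would use the formula directly: each of the four summands in the bracket is supported where $s$ is at most of order $1/a$ (because each truncation $(\cdots)_+$ vanishes once $s a\gtrsim 1$), so $P(s,h|h')\lesssim 1/(s^2 a)\wedge 1/s$ uniformly in $h,h'$, which is square-integrable against $dh\,dh'$ on $[-1,1]^2$ and against $(1+s)^2\,ds$ on $\bR_+$ — the large-$s$ tail being controlled by the $1/s^2$ decay and the small-$s$ singularity in $a$ being integrable in $h-h'$ after Fubini. The main obstacle in this program is Step~2 above: identifying the explicit joint density $\Pi$ from the log-Cesàro ergodic theorem, since the latter was proved by a non-constructive Cauchy argument. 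Carrying it out will require either invoking the invariant measure of the natural extension of the Gauss map (so that pairs $(T^{n-1}\alpha,T^n\alpha)$ become jointly distributed under an explicit density proportional to $(1+xy)^{-2}$) or, equivalently, translating into the Farey-fraction framework used by Boca and Zaharescu in Theorem \ref{T-BocaZaha}; everything else is bookkeeping on three branches and elementary estimates.
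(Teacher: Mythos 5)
Your overall architecture is the paper's: obtain the explicit limiting law $dm(A,B,Q,\Sigma)$ of the collision parameters (Theorem \ref{T-Meas-m}), push it forward through the three-branch map $\mathbf{T}_{A,B,Q,\Sigma}(h')$ to get $P(s,h|h')\,ds\,dh$, and check the $L^2$ statement by inspection of the resulting formula. The push-forward bookkeeping and the integrability estimate are indeed routine once $dm$ is known. The genuine gap is the one you yourself flag — producing the explicit density $\Pi=dm$ — and neither of the two routes you offer is carried far enough, or stated correctly enough, to close it.

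First, the natural-extension suggestion is wrong as written: the pair $(T^{n-1}\alpha,T^n\alpha)$ lives on the graph of the Gauss map and therefore has no two-dimensional density; the density proportional to $(1+xy)^{-2}$ is the invariant law of the natural-extension coordinates $(T^n\alpha,\,q_{n-1}(\alpha)/q_n(\alpha))$. Even after this correction one must still control the stopping index $N(\alpha,\eps)$ and the position of $\eps$ inside $[d_{N},d_{N-1})$ as $\eps$ ranges over a logarithmic scale — a renewal problem that the invariant measure alone does not solve. Second, the Farey-fraction route is the one the paper actually takes, but your sketch omits its essential structural device: one does \emph{not} compute the log-Ces\`aro average in $r$ at fixed $v$. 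Instead one computes $\lim_{r\to0^+}\int_{\bS^1_+}F(A,B,Q,\Sigma)(v,r)\,dv$ at fixed $r$, using the dictionary between $(Q,\tilde Q,D)$ and the pair of adjacent Farey denominators $q,\hat q$ with $0<q,\hat q\le\cQ<q+\hat q$ and $\gcd(q,\hat q)=1$, and then evaluates the resulting sums by the Boca--Zaharescu Kloosterman-sum lemma, which yields the density $\tfrac{6}{\pi^2}$ on the triangle $\{0<x,y<1<x+y\}$ \emph{for the normalized denominators} (not for $(d_N/\eps,d_{N-1}/\eps)$, since $d_{N-1}/\eps>1$). The two limits are then identified only because Proposition \ref{T-ErgoABQsi} asserts that the log-Ces\`aro limit $\cL(F)$ is a.e.\ constant in $v$, so the $v$-average of the $r$-averages must converge to that same constant. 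Without this exchange of averaging variables, the explicit constant $\tfrac{12}{\pi^2}$ and the domain $\{0<B<1-A,\ 0<Q<\tfrac1{2-A-B}\}$ in $dm_0$ — hence the factor $\tfrac{3}{\pi^2}$ and the truncations in $P(s,h|h')$ — are not accessible from the ergodic lemma, whose Cauchy-sequence argument is intrinsically non-constructive.
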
 

In fact, the key result in the proof of this theorem is the asymptotic distribution of $3$-obstacle 
collision patterns --- i.e. the computation of the limit $\cL(f)$, whose existence has been proved 
in the last section's proposition.

\begin{Thm}[Caglioti-Golse \cite{CagliotiFG2009}]\label{T-Meas-m}
Define $\bK=[0,1]^3\times\{\pm1\}$; then, for each $F\in C(\bK)$
$$
\ba
\frac1{|\ln\eta|}\int_\eta^{1/4}F((A,B,Q,\Si)(v,r))\frac{dr}{r}
	\to\cL(F)
\\
=\int_{\bK}F(A,B,Q,\Si)dm(A,B,Q,\Si)\hbox{ a.e. in }v\in\bS^1
\ea
$$
as $\eta\to 0^+$, where
$$
\ba
dm(A,B,Q,\Si)&=dm_0(A,B,Q)\otimes\tfrac12(\de_{\Si=1}+\de_{\Si=-1})\,,
\\
dm_0(A,B,Q)&=\tfrac{12}{\pi^2}
\indc_{0<A<1}\indc_{0<B<1-A}\indc_{0<Q<\frac1{2-A-B}}	\frac{dAdBdQ}{1-A}\,.
\ea
$$
\end{Thm}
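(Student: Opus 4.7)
The plan is to sharpen the existence statement in the preceding proposition into an explicit identification of $\cL(F)$, by carrying out each change of variables in detail and invoking Birkhoff's ergodic theorem for the Gauss map $T$ with its invariant measure $dg(x) = dx/((1+x)\ln 2)$.

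First I peel off the $\Si$-dependence by writing $F = F_+(A,B,Q) + \Si F_-(A,B,Q)$. Since $\Si(v,r) = (-1)^{N(\a,\eps)}$ is constant on each strip $I_n(\a) := (d_n(\a), d_{n-1}(\a)]$ and alternates sign between consecutive strips, pairing $I_n(\a)$ with $I_{n+1}(\a)$ and applying Birkhoff's theorem shows that the alternating contributions cancel in the log-average, so that $\cL(\Si F_-) = 0$; this yields the factor $\tfrac12(\de_{\Si=1}+\de_{\Si=-1})$.

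For $F_+$, on each strip $I_n(\a)$ I introduce $t := d_n(\a)/\eps \in [x, 1]$ with $x := T^{n-1}\a$, so that $A = 1-t$, $d_{n-1}/\eps = t/x$, $k(\a,\eps) = -[1/t - 1/x]$, and $d\eps/\eps = -dt/t$. The log-length $-\ln x$ of $I_n(\a)$ has Birkhoff average $\Lambda := \int_0^1(-\ln x)\,dg(x) = \pi^2/(12\ln 2)$, and Birkhoff's theorem applied to a function depending only on $(A,B)$ gives
\[
\cL(F_+) = \frac{12}{\pi^2}\int_0^1\int_x^1 F_+(1-t,\,1-t/x+k(t,x)\,t)\,\frac{dt\,dx}{t(1+x)}.
\]
Performing the change of variables $(t,x)\mapsto (A,B)$ at fixed $k$, the Jacobian is $t/x^2$; writing $x = x_k := (1-A)/D_k$ with $D_k := 1 - B + k(1-A)$, the partial-fraction identity $x_k^2/(1+x_k) = (1-A)(1/D_k - 1/D_{k+1})$ makes the sum over $k \ge 1$ telescope to $(1-A)/(2-A-B)$. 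Consequently the induced $(A,B)$-marginal of $\cL$ is $(12/\pi^2)[(1-A)(2-A-B)]^{-1}\,dA\,dB$ on $\{0<A<1,\,0<B<1-A\}$, which matches the marginal of $dm_0$.

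To incorporate the $Q$-dependence I use the truncated representation established in the preceding section,
\[
q_n(\a)d_{n-1}(\a) = \sum_{j=1}^{n}(-1)^{n-j}\prod_{k=j}^{n-1}T^{k-1}\a\cdot T^k\a,
\]
so that, up to an $O(2^{-m})$ error from $\th\cdot T\th<\tfrac12$, $Q$ becomes a continuous function of $(t, x, T^{n-2}\a,\ldots,T^{n-m}\a)$. Birkhoff's theorem on the $T$-product system on $(0,1)^{m+1}$ yields approximants $\cL_m(F)$ expressed as integrals against $m+1$ copies of $dg$; by uniform continuity of $F$ these form a Cauchy sequence whose limit is $\cL(F)$. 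The main obstacle is the identification step: showing that, conditionally on $(A,B)$, the variable $Q$ is asymptotically uniform on $(0, 1/(2-A-B))$ --- a decoupling and equidistribution statement for the remote digits $(T^{n-j}\a)_{j\ge 2}$ that is essentially a mixing property of the Gauss map on its natural extension. Once this is established the joint limit collapses to $dm$; the support constraint $0 < Q < 1/(2-A-B)$ reflects the geometric identity $AQ + BQ' + (1-A-B)(Q+Q') = 1$ from the derivation of the transfer map (positivity of $Q'$), and consistency of the total mass follows from the dilogarithm identity $\int_0^1 \ln(2-A)/(1-A)\,dA = \pi^2/12$.
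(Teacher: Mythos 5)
Your route is genuinely different from the paper's, and it founders on exactly the step the paper goes out of its way to avoid. The paper states explicitly, at the start of the section containing this theorem, that the Gauss-map/Birkhoff argument of the preceding proposition ``cannot be used for computing explicitly the value $\cL(F)$'': it yields existence of the limit only through Cauchy's criterion. Instead, the paper identifies $\cL(F)$ indirectly. It computes the limit as $r\to 0^+$ of the \emph{velocity} average $\int_{0<v_2<v_1}F((A,B,Q,\Si)(v,r))\,dv$ at fixed $r$, by rewriting $(A,B,Q)$ in terms of the pair of adjacent Farey fractions in $\cF_{\cQ}$, $\cQ=[1/\eps]$, surrounding $\a=v_2/v_1$; the resulting sum over coprime pairs $(q,\hat q)$ with $0<q,\hat q\le\cQ<q+\hat q$ is evaluated by the Boca--Zaharescu equidistribution lemma (Kloosterman-sum estimates), producing $\tfrac{6}{\pi^2}\iint_{0<x,y<1<x+y}\psi\,dx\,dy$ and hence the measure $m$. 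Since the preceding proposition guarantees that the log-average in $r$ converges a.e.\ in $v$ to a constant $\cL(F)$, integrating that statement in $v$ and comparing with the limit of the $v$-averages forces $\cL(F)=\int_{\bK}F\,dm$. The hard analytic input is thus number-theoretic (Farey fractions and Kloosterman sums), not ergodic.

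The concrete gap in your argument is the clause you yourself flag as ``the main obstacle'': the claim that, conditionally on $(A,B)$, the variable $Q$ is asymptotically uniform on $(0,1/(2-A-B))$. This is not a routine consequence of Birkhoff's theorem for the Gauss map. Your truncated representation makes $Q$ a function of $(t,x,T^{n-2}\a,\dots,T^{n-m}\a)$, i.e.\ of \emph{past} iterates relative to the strip index $n$, so the relevant equidistribution is for the backward coordinates of the natural extension, jointly with the forward coordinate $x=T^{n-1}\a$ and the fiber variable $t$ --- and the limit law of the alternating product $\sum_j(-1)^{n-j}\prod_k T^{k-1}\a\,T^k\a$ under that joint equidistribution is precisely what would need to be computed to produce the factor $\indc_{0<Q<1/(2-A-B)}\,dQ$. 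You assert this ``collapses to $dm$'' without performing the computation, so the theorem's formula for $dm_0$ is not actually derived. (Your $(A,B)$-marginal computation and the telescoping identity are correct and consistent with $dm_0$, and the mass normalization checks out, but the marginal does not determine the conditional law of $Q$.) A similar, milder issue affects the $\Si$-symmetry: $F_-$ changes value between the consecutive strips you pair, so the claimed cancellation is again an equidistribution statement, not an identity. To complete a proof along your lines you would have to establish the joint limit law rigorously; the paper's Farey/Kloosterman computation is the substitute for exactly that missing step.
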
 

\smallskip
Before giving an idea of the proof of the theorem above on the distribution of $3$-obstacle collision 
patterns, it is perhaps worthwhile explaining why the measure $m$ above is somehow natural in the 
present context.

To begin with, the constraints $0<A<1$ and $0<B<1-A$ have an obvious geometric meaning (see 
figure 18 on collision patterns.) More precisely, the widths of the three strips in the $3$-term partition 
of the $2$-torus minus the slit constructed in the penultimate section (as a consequence of the
Blank-Krikorian $3$-length theorem) add up to $1$. Since $A$ is the width of the strip consisting of 
the shortest orbits in the Blank-Krikorian theorem, and $B$ that of the strip consisting of the next to 
shortest orbits, one has
$$
0<A+B\le 1
$$
with equality only in the exceptional case where the orbits have at most $2$ different lengths, which 
occurs for a set of measure $0$ in $v$ or $r$. Therefore, one has
$$
0<B(v,r)<1-A(v,r)\,,\quad\hbox{ for a.e. }r\in(0,\tfrac12)\,.
$$

Likewise, the total area of the $2$-torus is the sum of the areas of the strips consisting of all orbits with 
the $3$ possible lengths:
$$
\ba
1=QA+Q'B+(Q+Q')(1-A-B)&=Q(1-B)+Q'(1-A)
\\
&\ge Q(2-A-B)
\ea
$$
as $Q'\ge Q$ (see again the figure above on collision patterns.)

Therefore, the volume element 
$$
\frac{dAdBdQ}{1-A}
$$
in the expression of $dm_0$ imples that the parameters $A$, $\frac{B}{1-A}$ --- or equivalently 
$B$ mod. $1-A$ --- and $Q$ are independent and uniformly distributed in the largest subdomain 
of $[0,1]^3$  that is compatible with the geometric constraints.

\smallskip
The first theorem is a consequence of the second: indeed, $P(s,h|h')dsdh$ is the image measure of 
$dm(A,B,Q,\Si)$ under the map
$$
\bK\ni(A,B,Q,\Si)\mapsto T_{(A,B,Q,\Si)}(h',v)\,.
$$
That $(1+s)P(s,h|h')$ is square integrable is proved by inspection --- by using the explicit formula for 
$P(s,h|h')$.

\smallskip
Therefore, it remains to prove the second theorem. 

We are first going to show that the family of averages over velocities satisfy
$$
\ba
\int_{|v|=1\atop 0<v_2<v_1}&F(A(v,r),B(v,r),Q(v,r),\Si(v,r))dv
\\
&\to\tfrac{\pi}8\int_{\bK}F(A,B,Q,\Si)dm(A,B,Q,\Si)
\ea
$$
as $r\to 0^+$ for each $F\in C_b(\bK)$.

On the other hand, because of the proposition in the previous section
$$
\frac1{\ln(1/\eta)}\int_\eta^{1/2}F(A(v,r),B(v,r),Q(v,r),\Si(v,r))\frac{dr}{r}\to\cL(F)
$$
for a.e. $v\in\mathbf{S}^1$ such that $0<v_2<v_1$ in the limit as $\eta\to 0^+$. 

Since we know that the limit $\cL(F)$ is independent of $v$, comparing 
the two convergence statements above shows that
$$
\cL(F)=\int_{\bK}F(A,B,Q,\Si)dm(A,B,Q,\Si)\,.
$$

Therefore, we are left with the task of computing
$$
\lim_{r\to 0^+}\int_{|v|=1\atop 0<v_2<v_1}F(A(v,r),B(v,r),Q(v,r),\Si(v,r))dv\,.
$$
The method for computing this type of expression is based on

a) Farey fractions (sometimes called ``slow continued fractions"), and

b) estimates for Kloosterman's sums, due to Boca-Zaharescu \cite{BocaZaha2007}.

To begin with, we need to recall a few basic facts about Farey fractions.

\smallskip
\noindent
\textbf{Farey fractions}

Put a filtration on the set of rationals in $[0,1]$ as follows
$$
\cF_\cQ=\{\tfrac{p}{q}\,|\,0\le p\le q\le \cQ\,,\,\,\hbox{g.c.d.}(p,q)=1\}
$$
indexed in increasing order:
$$
0=\frac01<\g_1<\ldots
	<\g_j=\frac{p_j}{q_j}<\ldots<\g_{\varphi(\cQ)}=\frac11=1
$$
where $\varphi$ denotes Euler's totient function:
$$
\phi(n)=n\prod_{p\hbox{ \tiny{prime} }\atop p|n}\left(1-\frac1p\right)\,.
$$

An important operation in the construction of Farey fractions is the
notion of ``mediant" of two fractions. Given two rationals
$$
\g=\frac{p}q\hbox{ and }\hat\g=\frac{\hat p}{\hat q}
$$ 
with 
$$
0\le p\le q\,,\quad 0\le\hat p\le\hat q\,,\hbox{ and }
	\hbox{ g.c.d.}(p,q)=\hbox{ g.c.d.}(\hat p,\hat q)=1\,,
$$
their mediant is defined as
$$
\hbox{mediant}
	=\g\oplus\hat\g:=\frac{p+\hat p}{q+\hat q}\in(\g,\hat\g)\,.
$$

\begin{figure}

\includegraphics{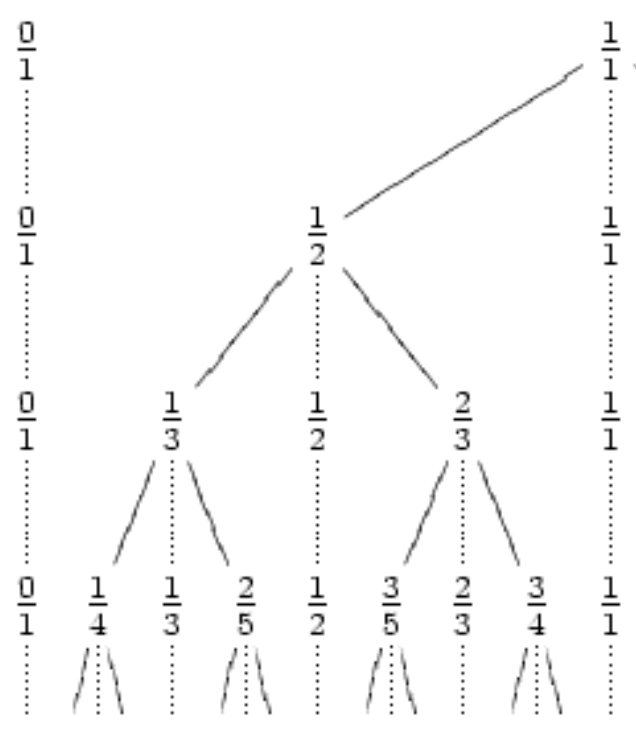}

\caption{The Stern-Brocot tree. Each fraction $\g$ on the $n$-th line is the mediant of the two fractions 
closest to $\g$ on the $n-1$-st line. The first line consists of $0$ and $1$ written as $0=\frac01$ and
$1=\frac11$. Each rational in $[0,1]$ is obtained in this way.}

\end{figure}

Hence, if $\g=\frac{p}q<\hat\g=\frac{\hat p}{\hat q}$ adjacent in $\cF_\cQ$, then
$$
\hat aq-a\hat q=1\hbox{ and }q+\hat q>\cQ\,.
$$
Conversely, $q,\hat q$ are denominators of adjacent fractions in $\cF_\cQ$ if and only if
$$
0\le q,\hat q\le\cQ\,,\quad q+\hat q>\cQ\,,\quad\hbox{g.c.d.}(q,q')=1\,.
$$

Given $\a\in(0,1)\setminus\bQ$ and $\cQ\ge 1$, there exists a unique pair of adjacent Farey fractions 
in $\cF_\cQ$, henceforth denoted $\g(\a,\cQ)$ and $\hat\g(\a,\cQ)$, such that
$$
\g(\a,\cQ)=\frac{p(\a,\cQ)}{q(\a,\cQ)}<\a<\hat\g(\a,\cQ)
	=\frac{\hat p(\a,\cQ)}{\hat q(\a,\cQ)}\,.
$$

At this point, we recall the relation between Farey and continued fractions.

Pick $0<\eps<1$; we recall that, for each $\a\in(0,1)\setminus\bQ$,
$$
N(\a,\eps)=\min\{n\in\bN\,|\,d_n(\a)\le\eps\}\,,\quad d_n(\a)=\hbox{dist}(q_n(\a)\a,\bZ)\,.
$$
Set $\cQ=[1/\eps]$, and let 
$$
\g(\a,\cQ)=\frac{p(\a,\cQ))}{q(\a,\cQ)}<\hat\g(\a,\cQ)=\frac{\hat p(\a,\cQ))}{\hat q(\a,\cQ)}
$$ 
with $g.c.d.(p(\a,\cQ)),q(\a,\cQ))=g.c.d.(\hat p(\a,\cQ)),\hat q(\a,\cQ))=1$ be the two adjacent Farey 
fractions in $\cF_\cQ$ surrounding $\a$. Then

\smallskip
\noindent
a) one of the two integers $q(\a,\cQ)$ and ${\hat q(\a,\cQ)}$ is the denominator of the $N(\a,\eps)$-th 
convergent in the continued fraction expansion of $\a$, i.e. $q_{N(\a,\eps)}(\a)$, and

\noindent
b) the other is of the form
$$
mq_{N(\a,\eps)}+q_{N(\a,\eps)-1}\,,\quad\hbox{ with }0\le m\le a_{N(\a,\eps)}(\a)\,,
$$
where we recall that 
$$
\a=[0;a_1,a_2,\ldots]=\frac1{a_0+\displaystyle\frac1{a_1+\ldots}}\,.
$$

Setting $\a=v_2/v_1$ and $\eps=2r/v_1$, we recall that, by definition
$$
Q(v,r)=\eps q_{N(\a,\eps)}(\a)\in\{\eps q(\a,\cQ),\eps\hat q(\a,\cQ)\}\hbox{ with }\cQ=[1/\eps]\,,
$$
and we further define
$$
D(v,r)=d_{N(\a,\eps)}/\eps=\hbox{dist}(\tfrac1\eps Q(v,r)\a,\bZ)/\eps\,,
$$
and
$$
\ba
\tilde Q(v,r)=\eps\hat q(\a,\cQ)\hbox{ if }q_{N(\a,\eps)}(\a)=q(\a,\cQ)\,,
\\
\tilde Q(v,r)=\eps q(\a,\cQ)\hbox{ if }q_{N(\a,\eps)}(\a)=\hat q(\a,\cQ)\,.
\ea
$$
Now, we recall that $A(v,r)=1-D(v,r)$; moreover, we see that
$$
\ba
B(v,r)&=1-\frac{d_{N(\a,\eps)-1}(\a)}{\eps}
	-\left[\frac{1-d_{N(\a,\eps)-1}(\a)/\eps}{D(v,r)}\right]D(v,r)
\\
\\
&=1-d_{N(\a,\eps)-1}(\a)/\eps\hbox{ mod. }D(v,r)
\\
\\
&=1-\hbox{dist}(\tfrac1\eps\tilde Q(v,r)\a,\bZ)/\eps\hbox{ mod. }D(v,r)\,.
\ea
$$ 

To summarize, we have
$$
F(A(v,r),B(v,r),Q(v,r))=G(Q(v,r),\tilde Q(v,r),D(v,r))
$$
and we are left with the task of computing
$$
\lim_{r\to 0^+}\int_{\bS_+^1}G(Q(v,r),\tilde Q(v,r),D(v,r))dv
$$
where $\bS^1_+$ is the first octant in the unit circle. The other octants in the unit circle give the 
same contribution by obvious symmetry arguments.

More specifically:

\begin{Lem} 
Let $\a\in(0,1)\setminus\bQ$, and let $\tfrac{p}{q}<\a<\tfrac{\hat p}{\hat q}$ be the two adjacent 
Farey fractions in $\cF_\cQ$ surrounding $\a$, with $\cQ=[1/\eps]$. Then

\noindent
a) if $\frac{p}{q}<\a\le\frac{\hat p-\eps}{\hat q}$, then
$$
Q(v,r)=\eps q\,,\quad\tilde Q(v,r)=\eps\hat q\,,\quad D(v,r)=\tfrac1\eps(\a q-p)\,;
$$
b) if $\frac{p+\eps}{q}<\a<\frac{\hat p}{\hat q}$, then
$$
Q(v,r)=\eps\hat q\,,\quad\tilde Q(v,r)=\eps q\,,
		\quad D(v,r)=\tfrac1\eps(\hat p-\a\hat q)\,;
$$
c) if $\frac{p+\eps}{q}<\a\le\frac{\hat p-\eps}{\hat q}$, then
$$
Q(v,r)=\eps(q\wedge\hat q)\,,\quad\tilde Q(v,r)=\eps(q\vee\hat q)\,,
	\quad D(v,r)=\hbox{dist}(\tfrac1\eps Q(v,r)\a,\bZ)\,.
$$
\end{Lem}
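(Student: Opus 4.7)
The plan is to leverage the two observations stated immediately before the lemma: one of the Farey denominators $q,\hat q$ coincides with $q_N:=q_{N(\a,\eps)}(\a)$, while the other has the form $mq_N+q_{N-1}$ for some integer $0\le m\le a_N(\a)$. In particular $q_N\in\{q,\hat q\}$, and the whole task reduces to identifying which one under the hypothesis of each case.

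I would introduce the shorthand $\delta:=\a q-p>0$ and $\hat\delta:=\hat p-\a\hat q>0$ (positive because $p/q<\a<\hat p/\hat q$). The Farey adjacency $\hat p q-p\hat q=1$ yields at once the fundamental identity
$$
q\,\hat\delta+\hat q\,\delta=1.
$$
Since $d_N=|q_N\a-p_N|$ equals $\delta$ when $q_N=q$ and $\hat\delta$ when $q_N=\hat q$, and since $d_N\le\eps$ by the very definition of $N$, the value of $q_N$ is dictated by the sizes of $\delta,\hat\delta$ relative to $\eps$.

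For case (a) the hypothesis reads $\hat\delta\ge\eps$. If one had $q_N=\hat q$, then $d_N=\hat\delta\ge\eps$, compatible with $d_N\le\eps$ only at the boundary $\a=(\hat p-\eps)/\hat q$, a null set which we safely discard (the claim is understood for irrational $\a$, the boundary being treated by continuity). Hence $q_N=q$, and the identities $Q(v,r)=\eps q$, $\tilde Q(v,r)=\eps\hat q$, and $D(v,r)=d_N/\eps=\delta/\eps=(\a q-p)/\eps$ follow directly from the definitions of $Q,\tilde Q,D$. Case (b) is obtained by the symmetric argument: $\a>(p+\eps)/q$ translates to $\delta>\eps$, excluding $q_N=q$, hence $q_N=\hat q$ with the matching formulas.

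The main point --- and the reason case (c) is stated separately --- is that its hypothesis is in fact \emph{vacuous}, and this is where the Farey identity does its work. Indeed, case (c) would require $\delta>\eps$ and $\hat\delta\ge\eps$ simultaneously. The classical Farey property gives $q+\hat q>\cQ=[1/\eps]$, hence $q+\hat q\ge\cQ+1>1/\eps$, i.e., $\eps(q+\hat q)>1$. Plugging into the identity $q\hat\delta+\hat q\delta=1$ produces the contradiction $1=q\hat\delta+\hat q\delta>\eps(q+\hat q)>1$. Therefore case (c) is never realized, and its conclusion holds vacuously.
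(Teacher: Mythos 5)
Your handling of cases (a) and (b) is fine: the identity $q(\hat p-\a\hat q)+\hat q(\a q-p)=\hat pq-p\hat q=1$, together with the fact recalled just before the lemma that one of $q,\hat q$ equals $q_{N(\a,\eps)}(\a)$, does identify $Q$, $\tilde Q$ and $D$ by elimination on those two ranges. The gap is in case (c). You are right that the set $\{\a\,:\,\frac{p+\eps}{q}<\a\le\frac{\hat p-\eps}{\hat q}\}$ is empty --- indeed $\frac{\hat p-\eps}{\hat q}-\frac{p+\eps}{q}=\frac{1-\eps(q+\hat q)}{q\hat q}<0$ because $q+\hat q>[1/\eps]$ forces $\eps(q+\hat q)>1$ --- but the conclusion to draw is that the two bounds in (c) have been transposed in the statement, not that the case is vacuous. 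The intended case (c) is the complementary \emph{middle} subinterval $\frac{\hat p-\eps}{\hat q}<\a<\frac{p+\eps}{q}$, i.e.\ the set where \emph{both} $\a q-p<\eps$ and $\hat p-\a\hat q<\eps$; it is nonempty for exactly the same reason that the literal set is empty. As you have written it, the union of your three cases does not cover $(\frac pq,\frac{\hat p}{\hat q})$: cases (a) and (b) miss precisely this middle interval, and the lemma is used immediately afterwards to decompose the integral over all of $\a\in(0,1)$ into the contributions of the three regimes (whence the ``three other similar terms''), so the omission is not cosmetic.

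Moreover, the middle case is the only one needing a real argument, and your elimination device gives nothing there because both candidate errors are already below $\eps$. What has to be shown is that $q_{N(\a,\eps)}(\a)$ is the \emph{smaller} of $q,\hat q$. This follows from the other fact recalled before the lemma: one of $q,\hat q$ is $q_N:=q_{N(\a,\eps)}(\a)$ and the other is of the form $mq_N+q_{N-1}$ with $0\le m\le a_{N(\a,\eps)}(\a)$. In the middle region both $\hbox{dist}(q\a,\bZ)<\eps$ and $\hbox{dist}(\hat q\a,\bZ)<\eps$, whereas $\hbox{dist}(q_{N-1}\a,\bZ)=d_{N-1}(\a)>\eps$ by minimality of $N$; hence $m=0$ is excluded, the second denominator is $mq_N+q_{N-1}\ge q_N$ with $m\ge1$, and therefore $q_N=q\wedge\hat q$, $\tilde Q=\eps(q\vee\hat q)$, and $D=d_N/\eps=\hbox{dist}(\tfrac1\eps Q\a,\bZ)/\eps$ as asserted. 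This identification is the actual content of part (c) and is absent from your proposal.
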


Therefore, assuming for simplicity that
$$
G(x,y,z)=g(x,y)H'(z)\hbox{ and }\eps=1/\cQ\,,
$$
one has
$$
\ba
\int_{\bS_+^1}G(Q(v,r),\hat Q(v,r),D(v,r))dv
\\
=
\sum_{0<q,\hat q\le\cQ<q+\hat q\atop g.c.d.(q,\hat q)=1}
\int_{p/q}^{(\hat p-\eps)/\hat q} g\left(\frac{q}{\cQ},\frac{\hat q}{\cQ}\right)
H'(\cQ(q\a-p))d\a
\\
+\hbox{ three other similar terms }
\\
=\sum_{0<q,\hat q\le\cQ<q+\hat q\atop g.c.d.(q,\hat q)=1}
g\left(\frac{q}{\cQ},\frac{\hat q}{\cQ}\right)
	\frac1{q\cQ}\left(H\left(\frac{1-q/\cQ}{\hat q/\cQ}\right)-H(0)\right)
\\
+\hbox{ three other similar terms.}
\ea
$$
Then, everything reduces to computing 
$$
\lim_{\cQ\to+\infty}\frac1{\cQ^2}\sum_{0<q,\hat q\le\cQ<q+\hat q\atop g.c.d.(q,\hat q)=1}
	\psi\left(\frac{q}{\cQ},\frac{\hat q}{\cQ}\right)\,.
$$
We conclude with the following

\begin{Lem}[Boca-Zaharescu \cite{BocaZaha2007}]
For $\psi\in C_c(\bR^2)$, one has
$$
\frac1{\cQ^2}\sum_{0<q,\hat q\le\cQ<q+\hat q\atop g.c.d.(q,\hat q)=1}
	\psi\left(\frac{q}{\cQ},\frac{\hat q}{\cQ}\right)\to
\tfrac{6}{\pi^2}\iint_{0<x,y<1<x+y}\psi(x,y)dxdy
$$
in the limit as $\cQ\to \infty$.
\end{Lem}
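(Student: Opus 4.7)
The plan is to remove the coprimality constraint by Möbius inversion and then recognize the resulting sum as a Riemann sum. Writing $\indc_{\gcd(q,\hat q)=1}=\sum_{d\mid\gcd(q,\hat q)}\mu(d)$, interchanging summations, and substituting $q=dq'$, $\hat q=d\hat q'$ transforms the sum into
$$
\frac1{\cQ^2}\sum_{\substack{0<q,\hat q\le\cQ<q+\hat q\\g.c.d.(q,\hat q)=1}}
\psi\!\left(\tfrac{q}{\cQ},\tfrac{\hat q}{\cQ}\right)
=
\sum_{d=1}^{\cQ}\frac{\mu(d)}{d^2}\,S_d(\cQ),
$$
where, with $\cQ_d:=\cQ/d$,
$$
S_d(\cQ):=\frac1{\cQ_d^2}\sum_{\substack{0<q',\hat q'\le\cQ_d\\q'+\hat q'>\cQ_d}}
\psi\!\left(\tfrac{q'}{\cQ_d},\tfrac{\hat q'}{\cQ_d}\right).
$$
The condition $q+\hat q>\cQ$ became $q'+\hat q'>\cQ_d$, and the argument of $\psi$ was rewritten using $dq'/\cQ=q'/\cQ_d$.

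Next, for each fixed $d$, the quantity $S_d(\cQ)$ is a standard Riemann sum for the continuous, compactly supported integrand $\psi$ over the triangle $\{0<x,y<1<x+y\}$, with mesh size $1/\cQ_d\to 0$ as $\cQ\to\infty$. Hence
$$
S_d(\cQ)\longrightarrow I:=\iint_{0<x,y<1<x+y}\psi(x,y)\,dx\,dy
\qquad\text{as }\cQ\to\infty.
$$
Combining this with the classical identity $\sum_{d\ge 1}\mu(d)/d^2=1/\zeta(2)=6/\pi^2$ would yield the result, provided we can pass the limit through the sum in $d$.

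The only genuine difficulty is this exchange of limit and summation, since for $d$ comparable to $\cQ$ the mesh $1/\cQ_d$ is not small and $S_d(\cQ)$ is a poor approximation of $I$. I would handle it by dominated convergence: the uniform bound $|S_d(\cQ)|\le\tfrac12\|\psi\|_{L^\infty}$ (from the area of the triangle and $\|\psi\|_\infty$) together with $\sum_{d\ge 1}1/d^2<\infty$ provides an integrable majorant for the counting measure on $\bN$. Equivalently, one splits $\sum_{d=1}^{\cQ}=\sum_{d=1}^D+\sum_{d=D+1}^{\cQ}$: the tail is bounded by $\|\psi\|_\infty\sum_{d>D}1/d^2=O(1/D)$ uniformly in $\cQ$, while for each fixed $d\le D$ one has $S_d(\cQ)\to I$; letting $\cQ\to\infty$ and then $D\to\infty$ gives $\sum_{d\ge 1}\mu(d)/d^2\cdot I=6I/\pi^2$, as required.
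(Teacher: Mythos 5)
Your proof is correct. The M\"obius-inversion step is legitimate (the double sum is finite for each $\cQ$, so the interchange is free), for each fixed $d$ the quantity $S_d(\cQ)$ is indeed a Riemann sum over a grid of mesh $1/\cQ_d$ for the continuous integrand $\psi$ on a Jordan-measurable triangle, and the tail is controlled by the trivial bound $|S_d(\cQ)|\le C\|\psi\|_{L^\infty}$ (your constant $\tfrac12$ is slightly optimistic when $\cQ_d$ is of order $1$, but any uniform constant feeds into $\sum_{d>D}d^{-2}=O(1/D)$, which is all you need). The splitting $\sum_{d\le D}+\sum_{d>D}$ with $\cQ\to\infty$ first and then $D\to\infty$, together with $\sum_{d\ge1}\mu(d)d^{-2}=6/\pi^2$, closes the argument.

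A remark on how this sits relative to the paper: the paper does not prove this lemma at all --- it is quoted from Boca--Zaharescu, and the surrounding text advertises their machinery of Farey fractions and Kloosterman-sum estimates. Your elementary argument shows that, for this particular statement, none of that is needed: because the test function depends only on the rescaled pair $(q/\cQ,\hat q/\cQ)$, the coprimality condition can be unfolded by M\"obius inversion and everything reduces to lattice-point counting in dilated triangles. The Kloosterman-sum input in Boca--Zaharescu is required for the finer equidistribution statements actually used in the computation of $P(s,h|h')$, where the summand also involves arithmetic data attached to $(q,\hat q)$ (such as the numerators $p,\hat p$ or modular inverses) rather than just the rescaled denominators. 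So your proof is a genuinely more elementary route to this lemma, at the price of not generalizing to those weighted versions.
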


This is precisely the path followed by F. Boca and A. Zaharescu to compute the limiting distribution
of free path lengths in \cite{BocaZaha2007} (see Theorem \ref{T-BocaZaha}); as explained above, 
their analysis can be greatly generalized in order to compute the transition probability that is the 
limit of the transfer map as the obstacle radius $r\to 0^+$.


\section{A kinetic theory in extended phase-space \\ for the Boltzmann-Grad limit \\
of the periodic Lorentz gas}


We are now ready to propose an equation for the Boltzmann-Grad limit of the periodic Lorentz gas
in space dimension 2. For each $r\in]0,\frac12[$, denote 
$$
\mathbf{B}_r:\,\Gamma^+_r\ni(x,v)\mapsto\mathbf{B}_r(x,v)
	=(x+\tau_r(x,v)v,\mathcal{R}[x+\tau_r(x,v)v]v)\in\Gamma^+_r
$$
the billiard map. For $(x_0,v_0)\in\Gamma^+_r$, set 
$$
(x_n,v_n)=\mathbf{B}^n_r(x_0,v_0)
$$
and define
$$
b^n_r(x,v)=(A,B,Q,\Si)(v_n,r)\,,\quad n\in\bN^*\,.
$$
Henceforth, for each $n\ge 1$, we denote 
$$
\cK_n:=\bR^2\times\bS^1\times\bR_+\times[-1,1]\times\bK^n\,.
$$

\smallskip
We make the following asymptotic independence hypothesis: there exists a probability measure 
$\Pi$ on $\bR_+\times[-1,1]$ such that, for each $n\ge 1$ and each $\Psi\in C(\cK_n)$ with compact 
support
$$
\ba
{}&\lim_{r\to 0^+}
\int_{Z_r\times\bS^1}\Psi(x,v,r\tau_r({\tfrac{x}r},v),
	h_r({\tfrac{x_1}r},v_1),b^1_r,\ldots,b^n_r)dxdv
\\
&=
\int_{Q_n}
\Psi(x,v,\tau,h,\b_1,\ldots,\b_n)dxdvd\Pi(\tau,h)dm(\b_1)\ldots dm(\b_n)\,,
\ea
\leqno(H)
$$
where 
$$
(x_0,v_0)=(x-\tau_r(x,-v)v,v)\,,\hbox{ and }h_r({\tfrac{x_1}r},v_1)=\sin(n_{x_1},v_1)
$$
and $m$ is the probability measure on $\bK$ obtained in Theorem \ref{T-Meas-m}.

If this holds, the iterates of the transfer map $T_r$ are described by the Markov chain with transition 
probability $P(s,h|h')$.  

This leads to a kinetic equation on an extended phase space for the Boltzmann-Grad limit of the 
periodic Lorentz gas in space dimension 2:
$$
\ba
{}&F(t,x,v,s,h)=
\\
&\qquad\hbox{density of particles with velocity $v$ and position $x$ at time $t$}
\\ 
&\qquad\hbox{that will hit an obstacle after time $s$, with impact parameter $h$.}
\ea
$$

\begin{Thm}[Caglioti-Golse \cite{CagliotiFG2008, CagliotiFG2009}]\label{T-LimitEq}
Assume (H), and let $f^{in}\ge 0$ belong to $C_c(\bR^2\times\bS^1)$. Then one has
$$
f_r\to\int_0^\infty\int_{-1}^1F(\cdot,\cdot,\cdot,s,h)dsdh
	\hbox{ in $L^\infty(\mathbf{R}_+\times\mathbf{R}^2\times\mathbf{S}^1)$ 
		weak-$*$}
$$
in the limit as $r\to 0^+$, where $F\equiv F(t,x,v,s,h)$ is the solution of
$$
\ba
(\partial_t+&v\cdot\nabla_x-\partial_s)F(t,x,v,s,h)
\\
&=\int_{-1}^1P(s,h|h')F(t,x,R[\pi-2\arcsin(h')]v,0,h')dh'\,,
\\
&\qquad F(0,x,v,s,h)=f^{in}(x,v)\int_s^\infty\int_{-1}^1P(\tau,h|h')dh'd\tau\,,
\ea
$$
with $(x,v,s,h)$ running through $\bR^2\times\bS^1\times\bR^*_+\times]-1,1[$. The notation $R[\th]$ 
designates the rotation of an angle $\th$.
\end{Thm}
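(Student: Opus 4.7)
The plan is to lift $f_r$ to an extended density $F_r(t,x,v,s,h)$ on $\bR^2\times\bS^1\times\bR_+\times[-1,1]$ recording the time $s$ to the next collision and the impact parameter $h$ at that collision, to derive a Gallavotti--Duhamel-type series for $F_r$ by counting the collisions on the backward trajectory, and finally to pass to the limit in this series using hypothesis $(H)$. Concretely, one defines $F_r$ in weak form by
\[
\iint \phi(s,h)\, F_r(t,x,v,s,h)\, ds\, dh := f_r(t,x,v)\,\phi\bigl(r\tau_r(x/r,v),\, h_r(x_1/r, v_1)\bigr),
\]
so that $f_r=\iint F_r\, ds\,dh$. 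A direct inspection shows that between collisions $F_r$ satisfies the transport equation $(\d_t + v\cdot\grad_x - \d_s)F_r = 0$, because advancing time by $\de$ translates the position by $\de v$ and shortens the time-to-next-collision by $\de$.

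Next, decompose the backward trajectory issued from $(t,x,v)$ according to the number of collisions it has undergone in $[0,t]$. This yields an expansion $F_r=\sum_{n\ge 0}F_r^{(n)}$, where $F_r^{(n)}$ is obtained by integrating $f^{in}$ (evaluated at the pre-image of $(x,v)$ under the $n$-step reversed billiard dynamics) against the joint law of $\bigl(r\tau_r(x/r,v),\, h_r(x_1/r,v_1),\, b_r^1, \ldots, b_r^n\bigr)$. By hypothesis $(H)$, this joint law converges weakly to $\Pi\otimes m^{\otimes n}$. Since $P(s,h|h')\,ds\,dh$ is precisely the push-forward of $dm$ under the explicit map $(A,B,Q,\Si)\mapsto \mathbf{T}_{A,B,Q,\Si}(h')$ from Theorems \ref{T-TransiProba} and \ref{T-Meas-m}, and since the second-component action of $\mathbf{T}_{A,B,Q,\Si}$ on $h'$ encodes the change of impact parameter at a collision, each $F_r^{(n)}$ converges to a limit $F^{(n)}$ in which the sequence of post-collision impact parameters is a Markov chain with transition kernel $P(\cdot,\cdot|h')$, coupled to free transport in $(x,v,s)$.

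A direct calculation shows that $F=\sum_{n\ge 0}F^{(n)}$ satisfies the transport-collision equation stated in the theorem: the free-streaming operator $\d_t+v\cdot\grad_x-\d_s$ is annihilated along characteristics of each $F^{(n)}$, while the source term
\[
\int_{-1}^1 P(s,h|h')\, F\bigl(t,x,\cR[\pi-2\arcsin(h')]v,\, 0,\, h'\bigr)\, dh'
\]
on the right-hand side generates $F^{(n+1)}$ from the boundary value $F^{(n)}|_{s=0}$ (the collision specularly reflects the pre-collision velocity $\cR[\pi-2\arcsin(h')]v$ to $v$, and the post-collision $(s,h)$ is drawn with density $P(s,h|h')$). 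The prescribed initial datum is recovered by identifying the $t=0$ marginal in $(s,h)$ with the renewal-theoretic residual-life density $\int_s^\infty\int_{-1}^1 P(\tau,h|h')\,dh'\,d\tau$, i.e.\ the limiting distribution of $(r\tau_r(x/r,v),h_r(x_1/r,v_1))$ furnished by $\Pi$ and compatible with the Markov structure. Taking $f=\iint F\,ds\,dh$ then yields the claimed weak-$*$ convergence of $f_r$.

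The main obstacle is the justification of the term-by-term passage to the limit in the infinite expansion. Hypothesis $(H)$ only provides convergence for each fixed $n$, so one needs a uniform-in-$r$ tail bound on the number of collisions in $[0,t]$. Such a bound follows from the integrability properties of $P$ established in Theorem \ref{T-TransiProba}, in particular the fact that $(1+s)P(s,h|h')\in L^2$, which controls the expected free-flight time between successive collisions and thereby the tail of the collision count. Combined with the boundedness and compact support of $f^{in}$, this allows one to truncate the series at some large $N$ uniformly in $r$ and invoke dominated convergence on the truncated sum; the remainder in $n>N$ is then made arbitrarily small uniformly in $r$.
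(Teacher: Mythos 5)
Your outline is essentially correct in spirit, but it follows a genuinely different route from the paper. You work on the prelimit side: you lift $f_r$ to an extended density $F_r$, expand it as a Gallavotti-type Duhamel series $\sum_n F_r^{(n)}$ indexed by the number of collisions on the backward trajectory, pass to the limit term by term using hypothesis $(H)$ together with the identification of $P(s,h|h')\,ds\,dh$ as the image of $dm$ under $(A,B,Q,\Si)\mapsto\mathbf{T}_{A,B,Q,\Si}(h')$, and then resum. The paper instead works on the limit side from the start: taking $(H)$ as saying that the collision parameters $b^n_r$ become i.i.d.\ with law $m$, it constructs the limiting Markov jump process $(X_t,V_t,S_t,H_t)$ explicitly, computes $g(t,x,v,s,h)=\bE[\chi(T_t(x,v,s,h))]$ by conditioning on whether the first collision occurs before time $t$ and then on $(s_1,h_1)$, obtains a renewal (backward Kolmogorov) equation for $g$, and identifies the stated Cauchy problem as its adjoint. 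The duality step is what cleanly produces both the boundary condition at $s=0$ (equivalently the source term $\de_{s=0}$) and the initial marginal in $(s,h)$; your series approach recovers the same structure by iterating the first-jump decomposition, which is a legitimate equivalent.

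The one point where your argument is weaker than you acknowledge is the justification of the term-by-term passage to the limit. The $L^2$ integrability of $(1+s)P(s,h|h')$ is a property of the \emph{limiting} kernel and controls the tail of the collision count for the \emph{limiting} Markov chain; it does not by itself yield a bound, uniform in $r$, on the $\mu_r$-measure of trajectories undergoing more than $N$ collisions in the prelimit billiard dynamics on a fixed macroscopic time interval. Such a uniform equicontinuity-type estimate is an additional input not contained in $(H)$ as stated (which is a fixed-$n$ convergence statement). The paper's sketch sidesteps this by defining the candidate limit directly as the law of the jump process and deriving its equation, deferring the full convergence proof; if you insist on the series route you must either strengthen $(H)$ to include a uniform tail bound or prove one separately, e.g.\ from a uniform lower bound on the probability that a scaled free path exceeds a fixed $\delta>0$.
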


\smallskip
Let us briefly sketch the computation leading to the kinetic equation above in the extended phase 
space $\cZ=\bR^2\times\bS^1\times\bR_+\times[-1,1]$.

In the limit as $r\to 0^+$, the sequence $(b^{n}_r(x,v))_{n\ge 1}$ converges to a sequence of i.i.d.
random variables with values in $\bK=[0,1]\times\{\pm 1\}$, according to assumption (H).

Then, for each $s_0>0$ and $h_0\in[-1,1]$, we construct a Markov chain $(s_n,h_n)_{n\ge 1}$ with 
values in $\bR_+\times[-1,1]$ in the following manner:
$$
(s_n,h_n)=\bT_{b_n}(h_{n-1})\,,\quad n\ge 1\,.
$$

Now we define the jump process $(X_t,V_t,S_t,H_t)$ starting from $(x,v,s,h)$ in the following manner. 

First pick a trajectory of the sequence $\bbb=(b_n)_{n\ge 1}$;  then, for each $s>0$ and each 
$h\in[-1,1]$, set
$$
(s_0,h_0)=(s,h)\,.
$$
Define then inductively $s_n$ and $h_n$ for $n\ge 1$ by the formula above, together with
$$
\si_n=s_0+\ldots+s_{n-1}\,,\quad n\ge 1\,,
$$
and
$$
v_n=R[2\arcsin(h_{n-1})-\pi]v_{n-1}\,,\quad n\ge 1\,.
$$

With the sequence $(v_n,s_n,h_n)_{n\ge 1}$ so defined, we next introduce the formulas for 
$(X_t,V_t,S_t,H_t)$:

\begin{itemize}

\item While $0\le t<\tau$, we set
$$
\ba
X_t(x,v,s,h)&=x+t\om\,,\qquad &&S_t(x,v,s,h)&&=s-t\,,
\\
V_t(x,v,s,h)&=v\,,\qquad &&H_t(x,v,s,h)&&=h\,.
\ea
$$

\item For $\si_n<t<\si_{n+1}$, we set
$$
\ba
X_t(x,v,s,h)&=x+(t-\si_n)v_n\,,
\\
V_t(x,v,s,h)&=v_n\,,
\\
T_t(x,v,s,h)&=\si_{n+1}-t\,,
\\
H_t(x,v,s,h)&=h_n\,.
\ea
$$
\end{itemize}

To summarize, the prescription above defines, for each $t\ge 0$, a map denoted $T_t$:
$$
\cZ\times\bK^{\bN^*}\ni(x,v,s,h,\bbb)\mapsto T_t(x,\om,\tau,h)=(X_t,V_t,S_t,H_t)\in Z
$$
that is piecewise continuous in $t\in\bR_+$.

Denote by $f^{in}\equiv f^{in}(x,v,s,h)$ the initial distribution function in the extended phase space 
$\cZ$, and by $\chi\equiv\chi(x,v,s,h)$ an observable --- without loss of generality, we assume that 
$\chi\in C^\infty_c(Z)$.

Define $f(t,\cdot,\cdot,\cdot,\cdot)$ by the formula
$$
\begin{aligned}
\int\!\!\!\iiint_{Z} 
	&\chi(x,v,s,h)f(t,dx,dv,ds,dh)
\\
&=\int\!\!\!\iiint_{Z}\bE[\chi(T_t(x,v,s,h)))]f^{in}(x,\om,\tau,h)dxdvdsdh\,,
\end{aligned}
$$
where $\bE$ designates the expectation on trajectories of the sequence of i.i.d. random variables 
$\bbb=(b_n)_{n\ge 1}$. 

In other words, $f(t,\cdot,\cdot,\cdot,\cdot)$ is the image under the map $T_t$ of the measure 
$\hbox{Prob}(d\bbb)f^{in}(x,\om,\tau,h)$, where 
$$
\hbox{Prob}(d\bbb)=\prod_{n\ge 1}dm(b_n)\,.
$$

Set $g(t,x,v,s,h)=\bE[\chi(T_t(x,v,s,h)))]$; one has
$$
g(t,x,v,s,h)=\bE[\indc_{t<s}\chi(T_t(x,v,s,h)))]+\bE[\indc_{s<t}\chi(T_t(x,v,s,h)))]\,.
$$
If $s>t$, there is no collision in the time interval $[0,t]$ for the trajectory considered, meaning that
$$
T_t(x,v,s,h)=(x+tv,v,s-t,h)\,.
$$
Hence
$$
\bE[\indc_{t<s}\chi(T_t(x,v,s,h)))]=\chi(x+tv,v,s-t,h)\indc_{t<s}\,.
$$

On the other hand
$$
\ba
\bE[\indc_{s<t}\chi(T_t(x,v,s,h))]
=
\bE[\indc_{s<t}\chi(T_{(t-s)-0}T_{s+0}(x,v,s,h))]
\\
=
\bE[\indc_{s<t}\chi(T_{(t-s)-0}(x+sv,\cR[\Dlt(h)]v,s_1,h_1))]
\ea
$$
with $(s_1,h_1)=\bT_{b_1}(h)$ and $\Dlt(h)=2\arcsin(h)-\pi$. 

Conditioning with respect to $(s_1,h_1)$ shows that
$$
\ba
\bE&[\indc_{s<t}\chi(T_t(x,v,s,h)))]
\\
&=
\bE[\indc_{s<t}
\bE[\chi(T_{(t-s)-0}(x+sv,\cR[\Dlt(h)]v,s_1,h_1))|s_1,h_1]]\,,
\ea
$$
and
$$
\ba
\bE[\chi(T_{(t-s)-0}(x+sv,\cR[\Dlt(h)]v,s_1,h_1))|s_1,h_1]
\\
=
g(t-s,x+sv,\cR[\Dlt(h)]v,s_1,h_1)\,.
\ea
$$
Then
$$
\ba
\bE[\indc_{s<t}
\bE[\chi(T_{(t-s)-0}(x+sv,\cR[\Dlt(h)]v,s_1,h_1))|s_1,h_1]]
\\
=\indc_{s<t}
	\int g(t-s,x+sv,\cR[\Dlt(h)]v,\bT_{b_1}(h))]dm(b_1)
\\
=\indc_{s<t}
	\int g(t-s,x+sv,\cR[\Dlt(h)]v,s_1,h_1)]P(s_1,h_1|h)ds_1dh_1\,.
\ea
$$
Finally
$$
\ba
g(t,x,v,s,h)&=\chi(x+tv,v,s-t,h)\indc_{t<s}
\\
&+
\indc_{s<t}
	\int g(t-s,x+sv,\cR[\Dlt(h)]v,s_1,h_1)]P(s_1,h_1|h)ds_1dh_1\,.
\ea
$$
This formula represents the solution of the problem
$$
\ba
(\d_t-v\cdot\grad_x+\d_s)g&=0\,,
	\quad t,s>0\,,\,\,x\in\bR^2\,,\,\,s\in\bS^1\,,\,\,|h|<1
\\
g(t,x,s,0,h)&=
	\iint_{\bR^*_+\times]-1,1[}P(s_1,h_1|h)g(t,x,v,s_1,h_1)ds_1dh_1\,,
\\
g\rstr_{t=0}&=\chi\,.
\ea
$$
The boundary condition for $s=0$ can be replaced with a source term that is proportional to the 
Dirac measure $\de_{s=0}$:
$$
\ba
(\d_t-v\cdot\grad_x+\d_s)g&=\de_{s=0}\iint_{\bR^*_+\times]-1,1[}
	P(s_1,h_1|h)g(t,x,v,s_1,h_1)ds_1dh_1\,,
\\
g\rstr_{t=0}&=\chi\,.
\ea
$$
One concludes by observing that this problem is precisely the adjoint of the Cauchy problem in the 
theorem.

\bigskip
Let us conclude this section with a few bibliographical remarks. 

Although the Boltzmann-Grad limit of the periodic Lorentz gas is a fairly natural problem, it remained
open for quite a long time after the pioneering work of G. Gallavotti on the case of a Poisson distribution
of obstacles \cite{Gallavotti1969, Gallavotti1972}. 

Perhaps the main conceptual difficulty was to realize that this limit must involve a phase-space other 
than the usual phase-space of kinetic theory, i.e. the set $\bR^2\times\bS^1$ of particle positions and 
velocities, and to find the appropriate extended phase-space where the Boltzmann-Grad limit of the 
periodic Lorentz gas can be described by an autonomous equation.  

Already Theorem 5.1 in \cite{CagliotiFG2003} suggested that, even in the simplest problem where 
the obstacles are absorbing --- i.e. holes where particles disappear forever, --- the limit of the particle 
number density in the Boltzmann-Grad scaling cannot described by an autonomous equation in the
usual phase space $\bR^2\times\bS^1$. 

The extended phase space $\bR^2\times\bS^1\times\bR_+\times[-1,1]$ and the structure of the 
limit equation were proposed for the first time by E. Caglioti and the author in 2006, and presented 
in several conferences --- see for instance \cite{Golse2007};  the first computation of the transition 
probability $P(s,h|h')$ (Theorem \ref{T-TransiProba}), together with the limit equation (Theorem 
\ref{T-LimitEq}) appeared in \cite{CagliotiFG2008} for the first time.  However, the theorem 
concerning the limit equation in \cite{CagliotiFG2008} remained incomplete, as it was based 
on the independence assumption (H). 

Shortly after that, J. Marklof and A. Str\"ombergsson proposed a complete derivation of the limit 
equation of Theorem \ref{T-LimitEq} in a recent preprint \cite{MarkloStrom2008}. Their analysis,
establish the validity of this equation in any space dimension, using in particular the existence
of a transition probability as in Theorem \ref{T-TransiProba} in any space dimension, a result
that they had proved in an earlier paper \cite{MarkloStrom2007}. The method of proof in this 
article \cite{MarkloStrom2007} avoided using continued or Farey fractions, and was based on
group actions on lattices in the Euclidian space, and on an important theorem by M. Ratner
implying some equidistribution results in homogeneous space. However, explicit computations 
(as in Theorem \ref{T-TransiProba} of the transition probability in space dimension higher than 
$2$ seem beyond reach at the time of this writing --- see however \cite{MarkloStrom3} for
computations of the 2-dimensional transition probability for more general interactions than
hard sphere collisions. 

Finally, the limit equation obtained in Theorem \ref{T-LimitEq} is interesting in itself; some
qualitative properties of this equation are discussed in \cite{CagliotiFG2009}.


\section*{Conclusion}


Classical kinetic theory (Boltzmann theory for elastic, hard sphere collisions) is based on two 
fundamental principles

a) deflections in velocity at each collision are mutually independent and identically distributed

b) time intervals between collisions are mutually independent, independent of velocities,  and 
exponentially distributed.

\smallskip
The Boltzmann-Grad limit of the periodic Lorentz gas provides an example of a non classical 
kinetic theory where 

a') velocity deflections at each collision jointly form a Markov chain;

b') the time intervals between collisions are not independent of the velocity deflections.

\smallskip
In both cases, collisions are purely local and instantaneous events: indeed the Boltzmann-Grad 
scaling is such that the particle radius is negligeable in the limit. The difference between these 
two cases is caused by the degree of correlation between obstacles, which is maximal in the 
second case since the obstacles are centered at the vertices of a lattice in the Euclidian space,
wheras obstacles are assumed to be independent in the first case. It could be interesting to
explore situations that are somehow intermediate between these two extreme cases --- for
instance, situations where long range correlations become negligeable.

\smallskip
Otherwise, there remain several outstanding open problems related to the periodic Lorentz
gas, such as

\smallskip
i) obtaining explicit expressions of the transition probability whose existence is proved by 
J. Marklof and A. St\"rombergsson in \cite{MarkloStrom2007}, in all space dimensions, or 

ii) treating the case where particles are accelerated by an external force --- for instance the 
case of a constant magnetic field, so that the kinetic energy of particles remains constant.




\begin{thebibliography}{99}

\bibitem{BlankKriko1993}
S. Blank, N. Krikorian,
Thom's problem on irrational flows.
Internat. J. of Math. \textbf{4} (1993), 721--726.

\bibitem{Bleher1992}
P. Bleher, 
Statistical properties of two-dimensional periodic Lorentz gas with infinite 
horizon. 
J. Statist. Phys. \textbf{66} (1992), 315--373.

\bibitem{BocaZaha2007}
F. Boca, A. Zaharescu,
The distribution of the free path lengths in the periodic two-dimensional
Lorentz gas in the small-scatterer limit.
Commun. Math. Phys. \textbf{269} (2007), 425--471.

\bibitem{BoldriBuniSinai1983}
C. Boldrighini, L.A. Bunimovich, Ya.G. Sinai,
On the Boltzmann equation for the Lorentz gas.
J. Statist. Phys. {\bf 32} (1983), 477--501.

\bibitem{Boltzmann1872}
L. Boltzmann, 
Weitere Studien \"uber das W\"armegleichgewicht unter Gasmolek\"ulen.
Wiener Berichte \textbf{66} (1872), 275--370.

\bibitem{BourgGolWenn1998}
J. Bourgain, F. Golse, B. Wennberg,
On the distribution of free path lengths for the periodic Lorentz gas.
Commun. Math. Phys. {\bf 190} (1998), 491-508.

\bibitem{BuniSinai1980}
L. Bunimovich, Ya.G. Sinai,
Statistical properties of Lorentz gas with periodic configuration
of scatterers.
Commun. Math. Phys. {\bf 78} (1980/81), 479--497.

\bibitem{BunSinaiChern1991}
L. Bunimovich, N. Chernov, Ya.G. Sinai,
Statistical properties of two-dimensional hyperbolic billiards.
Russian Math. Surveys {\bf 46} (1991), 47--106.

\bibitem{CagliotiFG2003}
E. Caglioti, F. Golse,
On the distribution of free path lengths for the periodic Lorentz gas III.
Commun. Math. Phys. {\bf 236} (2003), 199--221.

\bibitem{CagliotiFG2008}
E. Caglioti, F. Golse,
The Boltzmann-Grad limit of the periodic Lorentz gas in two space dimensions. 
C. R. Math. Acad. Sci. Paris \textbf{346} (2008), 477--482.

\bibitem{CagliotiFG2009}
E. Caglioti, F. Golse,
in preparation.

\bibitem{Cercignani1972}
C. Cercignani,
On the Boltzmann equation for rigid spheres. 
Transport Theory Statist. Phys. \textbf{2} (1972),  211--225. 

\bibitem{CerciIllnerPulvi1994}
C. Cercignani, R. Illner, M. Pulvirenti,
The mathematical theory of dilute gases. Applied Mathematical Sciences, 106. 
Springer-Verlag, New York, 1994.

\bibitem{Dahl}
P. Dahlqvist, 
The Lyapunov exponent in the Sinai billiard in the small scatterer limit.
Nonlinearity \textbf{10} (1997), 159--173.

\bibitem{DesvRicci2004}
L. Desvillettes, V. Ricci,
Nonmarkovianity of the Boltzmann-Grad limit of a system of random obstacles
in a given force field.
Bull. Sci. Math. \textbf{128} (2004), 39--46.

\bibitem{Drude1900}
P. Drude,
Zur Elektronentheorie der metalle.
Annalen der Physik \textbf{306} (3) (1900), 566--613.

\bibitem{Dumas2Golse1996}
H.S. Dumas, L. Dumas, F. Golse, 
Remarks on the notion of mean free path for a periodic array of spherical 
obstacles.
J. Statist. Phys. \textbf{87} (1997), 943--950.

\bibitem{Gallavotti1969}
G. Gallavotti,
Divergences and approach to equilibrium in the Lorentz and the wind--tree--models.
Phys. Rev. (2) {\bf 185} (1969), 308--322.

\bibitem{Gallavotti1972}
G. Gallavotti,
Rigorous theory of the Boltzmann equation in the Lorentz gas.
Nota interna no. 358, Istituto di Fisica, Univ. di Roma (1972).
Available as preprint mp-arc-93-304.

\bibitem{Gallavotti1999}
G. Gallavotti,
``Statistical mechanics: a short treatise", Springer, Berlin-Heidelberg (1999).

\bibitem{Golse2003}
F. Golse,
On the statistics of free-path lengths for the periodic Lorentz gas.
Proceedings of the XIVth International Congress on Mathematical Physics 
(Lisbon 2003), 439--446, World Scientific, Hackensack NJ, 2005.

\bibitem{Golse2006}
F. Golse,
The periodic Lorentz gas in the Boltzmann-Grad limit.
Proceedings of the International Congress of Mathematicians, Madrid 2006,
vol. 3, 183-201, European Math. Society, Z\"urich 2006.

\bibitem{Golse2007}
F. Golse,
The periodic Lorentz gas in the Boltzmann-Grad limit (joint work with J. Bourgain,
E. Caglioti and B. Wennberg).
Oberwolfach Report 54/2006, vol. 3 (2006), no. 4, 3214,
European Math. Soc., Z\"urich 2006

\bibitem{Golse2008}
F. Golse, 
On the periodic Lorentz gas in the Boltzmann-Grad scaling. 
Ann. Facult\'e des Sci. Toulouse \textbf{17} (2008), 735--749.

\bibitem{GolWenn2000}
F. Golse, B. Wennberg,
On the distribution of free path lengths for the periodic Lorentz gas II.
M2AN Mod\'el. Math. et Anal. Num\'er. {\bf 34} (2000), 1151--1163.

\bibitem{Grad1958}
H. Grad,
Principles of the kinetic theory of gases, in ``Handbuch der Physik", S. Fl\"ugge ed.
Band XII, 205--294, Springer-Verlag, Berlin 1958.

\bibitem{IllnerPulvirenti1986}
R. Illner, M. Pulvirenti,
Global validity of the Boltzmann equation for two- and three-dimensional rare gas in vacuum. 
Erratum and improved result: ``Global validity of the Boltzmann equation for a two-dimensional 
rare gas in vacuum'' [Commun. Math. Phys. \textbf{105} (1986), 189--203] and ``Global validity of 
the Boltzmann equation for a three-dimensional rare gas in vacuum'' 
[ibid. \textbf{113} (1987), 79--85] by M. Pulvirenti. 
Commun. Math. Phys. \textbf{121} (1989), 143--146. 
 
\bibitem{Khinchin1964}
A.Ya. Khinchin,
``Continued Fractions".
The University of Chicago Press, Chicago, Ill.-London, 1964.

\bibitem{Lanford1975}
O.E. Lanford III,
Time evolution of large classical systems.
In \emph{Dynamical Systems, Theory and Applications} (Rencontres, 
Battelle Res. Inst., Seattle, Wash., 1974). 
Lecture Notes in Phys., Vol. 38, Springer, Berlin, 1975, 1--111.

\bibitem{Lorentz1905}
H. Lorentz,
Le mouvement des \'electrons dans les m\'etaux.
Arch. N\'eerl. {\bf 10} (1905), 336--371.

\bibitem{MarkloStrom2007}
J. Marklof, A. Str\"ombergsson,
The distribution of free path lengths in the periodic Lorentz gas and 
related lattice point problems.
Preprint arXiv:0706.4395, to appear in Ann. Math..

\bibitem{MarkloStrom2008}
J. Marklof, A. Str\"ombergsson,
The Boltzmann-Grad limit of the periodic Lorentz gas.
Preprint arXiv:0801.0612.

\bibitem{MarkloStrom3}
J. Marklof, A. Str\"ombergsson,
Kinetic transport in the two-dimensional periodic Lorentz gas.
Nonlinearity \textbf{21} (2008), 1413--1422.

\bibitem{Maxwell1852}
J. Clerk Maxwell,
Illustration of the Dynamical Theory of Gases I \& II.
Philos. Magazine and J. of Science \textbf{19} (1860), 19--32 
\& \textbf{20} (1860) 21--37.

\bibitem{Maxwell1866}
J. Clerk Maxwell,
On the Dynamical Theory of Gases.
Philos. Transactions of the Royal Soc. London, \textbf{157} (1867), 49--88.

\bibitem{Montgomery1994}
H.L. Montgomery, 
``Ten Lectures on the Interface between Analytic Number Theory and 
Harmonic Analysis".
CBMS Regional Conference Series in Mathematics, 84.  
American Mathematical Society, Providence, RI, 1994.

\bibitem{WennRicci2004}
V. Ricci, B. Wennberg, 
On the derivation of a linear Boltzmann equation from a periodic lattice gas. 
Stochastic Process. Appl. \textbf{111} (2004), 281--315. 

\bibitem{Santalo1943}
L.A. Santal\'o,  
Sobre la distribuci\'on probable de corp\'usculos en un cuerpo, deducida de 
la distribuci\'on en sus secciones y problemas analogos. 
Revista Union Mat. Argentina \textbf{9} (1943), 145--164. 

\bibitem{Siegel1936}
C.L. Siegel, 
\"Uber Gitterpunkte in convexen K\"orpern und ein damit 
zusammenh\"angendes Extremalproblem.
Acta Math. \textbf{65} (1935), 307--323.

\bibitem{Sinai1994}
Ya.G. Sinai, 
``Topics in Ergodic Theory".
Princeton Mathematical Series, 44. Princeton University Press, Princeton, 
NJ, 1994.

\bibitem{Sos1958}
V.T. S\'os,
On the distribution mod. 1 of the sequence $n\a$.
Ann. Univ. Sci. Univ. Budapest. E\"otv\"os Math. \textbf{1} (1958), 127--134.

\bibitem{Suranyi1958}
J. Suranyi,
\"Uber die Anordnung der Vielfahren einer reelen Zahl mod. 1.
Ann. Univ. Sci. Univ. Budapest. E\"otv\"os Math. \textbf{1} (1958), 107--111.

\bibitem{Spohn1978}
H. Spohn,
The Lorentz process converges to a random flight process.
Commun. Math. Phys. {\bf 60} (1978), 277--290.

\bibitem{SzaszEncyclo}
D. Szasz,
``Hard Ball Systems and the Lorentz Gas". 
Encyclopaedia of mathematical sciences, 101. Springer, 2000.

\bibitem{UkaiPointGhid1979}
S. Ukai, N. Point, H. Ghidouche,
Sur la solution globale du probl\`eme mixte de l'\'equation de Boltzmann 
non lin\'eaire.
J. Math. Pures Appl. (9) {\bf 57} (1978), 203--229.

\end{thebibliography}
\end{document}